\newcommand \edit[1] {{{{#1}}}}
\newcommand{\LyX}{L\kern-.1667em\lower.25em\hbox{Y}\kern-.125emX\spacefactor1000}
\title{The Interplay of Competition and Cooperation Among Service Providers}
\author{
  \IEEEauthorblockN{
    Mohammad Hassan Lotfi\IEEEauthorrefmark{1}, 
    Xingran Chen\IEEEauthorrefmark{2}, 
   Saswati Sarkar \IEEEauthorrefmark{1}
  \IEEEauthorblockA{
    \IEEEauthorrefmark{1}Department of Electrical and System Engineering, University of Pennsylvania, Philadelphia, PA, 19104 \\
      Emails: lotfm@seas.upenn.edu \& swati@seas.upenn.edu}
    \IEEEauthorblockA{\IEEEauthorrefmark{2} Applied Mathematics and Computational Science, University of Pennsylvania, Philadelphia, PA, 19104 \\
Email: xingranc@sas.upenn.edu
     }
 }
 \thanks{Parts of this work was presented in CISS'17 \cite{lotfi2017economics}.}}
\begin{document}
\maketitle
\newtheorem{lemma}{Lemma}
\newtheorem{note}{Note}
\newtheorem{property}{Property}
\newtheorem{theorem}{Theorem}
\newtheorem{definition}{Definition}
\newtheorem{corollary}{Corollary}
\newtheorem{remark}{Remark}
\newtheorem{assumption}{Assumption}

\begin{abstract}
We consider the economics of the interaction between Mobile Virtual Network Operators (MVNOs) and Mobile Network Operators (MNOs). We investigate the incentives of an MNO for offering some of her resources to an MVNO instead of using the resources for her own End-Users (EUs). We consider a market with one MNO and one MVNO, and a continuum of undecided EUs. Two cases for EUs are considered:  (i) when EUs need to choose either the MNO or the MVNO, and (ii) when EUs have an outside option. In each of these cases, we consider a non-cooperative framework of sequential game and a cooperative framework of  bargaining game. We characterize the Subgame Perfect Nash Equilibria (SPNE) and Nash Bargaining Solution (NBS) of the sequential and bargaining games, respectively. We show that in the non-cooperative framework, SPNE assumes two forms: (i) the MNO invests minimally on her infrastructure and the MVNO leases all the newly invested resources and (ii) the MNO invests more on her infrastructure, the MVNO leases only part of these resources to the MVNO, and the MNO uses the rest herself to attract EUs. Thus, in  both, the MNO generates revenue indirectly through the MVNO. In addition to that, in (ii),  the MNO generates revenue directly from the EUs. We also prove that in the bargaining framework,  the MVNO either reserves all the resources or no resources from the MNO, and the MNO's investments are guided by whether  EUs have an outside option. If they don't,  then the MNO invests as little as possible on her infrastructure. If they do, then the MNO invests more.
\end{abstract}

\section{Introduction}

Traditionally, wireless services have been offered by Service Providers (SPs) that own the infrastructure they are operating on.  Nowadays SPs are divided into (i) Mobile Network Operators (MNOs) that own the infrastructure, and (ii) 
Mobile Virtual Network Operators (MVNOs)  that do not own the infrastructure they are operating on, and use the resources of one or more MNOs based on a business contract. MVNOs can distinguish their plans from MNOs by  bundling their service with other products,  offering different pricing plans  for End-Users (EUs), or building  a good reputation through a better customer service. In recent years, the number of MVNOs has been rapidly growing. According to \cite{GSMA}, between June 2010 and June 2015, the number of MVNOs increased by 70 percent worldwide,  reaching 1,017 as of June 2015.  Even some MNOs developed their own MVNOs. An example of which is Cricket wireless which is owned by At\&t and offers a prepaid wireless service to EUs. Another example of MVNOs is the Google's Project Fi in which the customer's service is handled using Wifi hotspots wherever/whenever they exist. If no Wifi is available, then the service is handled using the resources of Sprint, T-Mobile or U.S. Cellular networks, whichever has the better service at that particular location/time.

In this work, we consider the economics of the interaction between  MVNOs and MNOs.  We investigate the incentives of an MNO for offering some of her resources to an MVNO instead of using the resources for her own. Thus, we consider the interplay of competition and cooperation between an MNO and an MVNO. The goal is to investigate that under what parameters the competition between SPs overshadow their cooperation and vice versa.  We investigate this interplay between competition and cooperation through cooperative and non-cooperative game frameworks. Note that it is not apriori known  how much an MNO is willing to invest on the infrastructure and how much an MVNO is willing to lease. More importantly, it is not apriori clear that under what conditions the MNO prefers to generate her revenue through the MVNO by leasing the resources to her, and therefore letting the MVNO to attract EUs.

Many works have considered the economics of resource/spectrum sharing and the subsequent profit sharing between SPs. Examples are \cite{cano2016cooperative,berry2013nature,le2009pricing,singh2012cooperative,korcak2012collusion
,banerjee2009voluntary,lotfi2016uncertain,lotfi2017economics_net,yan2011sequential,
yan2013dynamic,tao2013spectrum}. In these works, authors model the environment using game theory and seek to provide intuitions for the pricing schemes, the split of EUs/benefits between SPs, and the investment level of different SPs. In this work, however, as mentioned before, we focus on the competition between the MNOs and MVNOs and its interplay with  cooperation among them.

We consider a market with one MNO (e.g. AT\&T) and  one MVNO (e.g. Google's Project Fi). The MNO decides on investing new resources, and the MVNO leases some of these  newly invested resources   in exchange of a fee.   Both MNO and MNVO  simultaneously decide on their pricing strategies for EUs, and EUs choose one of the MNO or MVNO to buy the wireless plan from. First, we assume the per resource fee that the MVNO pays to the MNO to be fixed. Then, we discuss about the implications of this fee and frameworks for characterizing it.

To model EUs, we consider the hotelling model, in which we consider a continuum of undecided EUs that decide which of the SPs they want to buy their wireless plan from. We assume that EUs have different preferences for each SP. These preferences can be because of different services that SPs offer. For example, the MVNO can bundle the wireless service with a free or cheap international call plan (through VoIP) to make her service more favorable for some EUs. Moreover, the preferences can be because of the reluctance of  an EU to buy her wireless plan from a particular SP (e.g. an  SP with an infamous customer service). We assume that the preferences also depend on the investment level of the SPs. In other words, the higher the investment level of an SP, the lower would be the reluctance of an EU for that SP. 

\edit{We consider two cases. In the first case, SPs compete to attract EUs modeled using the hotelling model, and the EUs have to choose one the SPs (i.e., there does not exist an outside option). In the second case, we consider the demand of SPs to be generated both through the hotelling model and also a demand function. The effects of the demand function are two-fold. First, the demand function models the attrition in the number of EUs of SPs if the investment or price of both SPs are not desirable for EUs. Thus, in effect, an EU may opt for neither SP if neither offers a price-quality combo that is to his satisfaction, which is equivalent to the existence  of an outside option for them. Second, the demand function models an exclusive additional customer base for each of the SPs to draw from depending on her investment and the price she offers. In each of the two cases, concerning an outside option, we consider both non-cooperative and cooperative frameworks of interaction. In the non-cooperative scenario, we consider a sequential game in which the SPs compete over the EUs and decide on the level of cooperation individually and sequentially. In the cooperative scenario, the SPs still compete over the EUs, but they jointly decide their investments and how to share the total proceeds  through a bargaining game which determines the shares in accordance with the investment levels and the EU subscription for each.
%We consider two cases. In the first case, SPs compete to attract EUs modeled using the hotelling model. These EUs have no outside option, i.e. have to choose one the SPs.  In the second case, we consider the demand of SPs to be generated both through the hotelling model and also a demand function. The effects of the demand function are two-fold. The demand function  (i) models attrition in the number of EUs of SPs if the investment or price of both SPs are not desirable for EUs, i.e.  the effect of an outside option for EUs of the hotelling model,  and  (ii) models an exclusive additional customer base for each of the SPs if the investment or price is desirable. In each of these cases, we consider a non-cooperative and a cooperative scenarios. In the non-cooperative scenario, we consider a sequential game in which SPs compete over EUs and decide on the level of cooperation individually and sequentially. In the cooperative scenario, SPs still compete over EUs, but they decide on the level of cooperation and the share of the profit each receives   through a bargaining game based on number of EUs and profit each can obtain. }
}

\edit{In this setting, different outcomes may occurs. 
%**two cases, more realistic scenario when considering why cooperate? incentive becomes clear when there exists outside option, even with no outside option that's the case-> In general, **
For example, if EUs in the hotelling model have  high preferences for the MVNO, e.g. because of a good customer service, then the MNO may prefer to lease some of her resources to the MVNO and receives her share of profit through the MVNO, instead of competing for EUs by lowering her price. On the other hand, if EUs have high preferences for the MNO or she can attract more exclusive EUs by using the resources herself, then no cooperation between the MNO and the MVNO occurs. }

We start with the first case in which EUs are modeled using the hotelling model. 
We formulate the game as a sequential game (Section~\ref{section:model}), and seek the Sub-game Perfect Nash Eequilibrium (SPNE) of the sequential
game using backward induction (Section~\ref{section:analysis}).  We prove that the SPNE outcome of the game exists, is unique, and is of the form of two possible outcomes which we characterize. In one of the outcomes, the MNO invests minimally on her infrastructure and  the MVNO leases all the newly invested resources from the MNO. Thus, in this case, the MNO prefers to generate revenue through the MVNO instead of directly generating revenue from EUs. On the other hand, in the second outcome, MNO invests more on her infrastructure, leases parts of these resources to the MVNO and uses the rest to attract EUs herself. Therefore, in this case, MNO generates parts of her revenue through the MVNO and the rest directly through the EUs. Numerical results in Section~\ref{section:numerical} reveal that the former outcome occurs when the per resource fee that the MVNO pays to the MNO is smaller than a threshold, and the latter outcome occurs when the per resource fee is higher than the threshold. Thus, the per resource fee is one of the most important  parameters that influences the interplay of competition and cooperation between MNOs and MVNOs. 

%We also consider a benchmark case in which the MNO does not invest in her infrastructure, and subsequently the MVNO cannot lease any resources from the MNO (Section~\ref{section:benchmark}). We characterize the unique SPNE of this case. We  compare the payoffs of our model to the benchmark case to assess the incentive of the MNO to invest in her infrastructure and to offer it to the MVNO. 
%Comparing the payoff of the SPs in the SPNE outcome  with those of the benchmark case reveals that in many scenarios one of the SPs receives a higher payoff than the benchmark case, while the other SP receives a lower payoff. Thus, in these scenarios, cooperation between SPs would not be plausible, without a mechanism for appropriately splitting the excess payoff created by this cooperation. This also highlights the importance of determining the per resource fee to control the amount of the profit shared between the SPs. 

In addition  to the aforementioned non-cooperative framework, we consider a bargaining  framework to determine the amount of cooperation between SPs and the split of profit between them (Section~\ref{section:bargaining}). We characterize the Nash Bargaining Solution (NBS) of the game, and prove that when EUs do not have an outside option, the bargaining framework yields a collusive outcome in which the MNO invests as little as possible (zero, without external or regulatory forces). In addition, the MVNO either reserves all the resources or no resources from the MNO. 

In Section \ref{section:generalization}, we focus on the second case in which the demand of SPs are generated both through the hotelling model and also a demand function. In the non-cooperative framework, we characterize the SPNE outcome of the game. Results reveal that the general behavior of the SPNE outcome would be similar to that of the previous model, i.e. EUs with no outside option. We also analyze the bargaining framework associated with this case. Results of the bargaining framework in this case are slightly different from the previous case. In this case, the results of the NBS yields  a milder version of the aforementioned collusive outcome  in which the MNO no longer opt for the lowest investment level. However,  the MVNO still either reserves all the resources or no resources from the MNO. We conclude the paper in Section~\ref{section:conclusion}.

\section{Model}\label{section:model}
We now present a system model that we use in  Section~\ref{section:analysis}. This model is also updated and used in Section~\ref{section:generalization}. First, we model the payoffs and strategies of SPs, then we model the EUs and their decision making process. Finally, at the end of the section, we formulate the interaction between different entities of the game using a non-cooperative sequential game framework. 
\subsection*{SPs:} 
We consider one Mobile Network Operator (MNO) and one Mobile Virtual Network Operator (MVNO) that compete in attracting a pool of undecided EUs. We denote the MNO by SP$_{L}$ ($L$ represents Leader, since this SP is the leader of the game), and denote the MVNO by SP$_{F}$ ($F$ represents Follower, since this SP is the follower of this leader/follower game). SP$_L$ owns the infrastructure, invests in her infrastructure to attract EUs, and can lease parts of the new resources to SP$_F$ in exchange of money. 

We denote the fee per resources that SP$_{F}$ pays to SP$_{L}$ by $s$, the fraction of EUs that SP$_{F}$ and SP$_L$  attract by $n_F$ and $n_L$, respectively, and the access fees that SP$_F$ and SP$_L$ charge the EUs by $p_F$ and $p_L$, respectively.  The utility of SPs is increasing with the revenue from EUs, which depends on the number of EUs and the access fee. The utility of SP$_F$ (respectively, SP$_L$)  is decreasing (respectively, increasing) with respect to the fee that SP$_F$ pays to SP$_L$ to reserve resources. 
In addition, the utility of SP$_L$ is decreasing with the cost of investing on the infrastructure. Naturally, we expect the cost of reserving resources and investing in the infrastructure to be strictly convex, i.e. the cost of investment per resources increases with the amount of resources. For simplicity in analysis, we consider these costs to be  quadratic\footnote{The overall intuitions of the model are expected to be the same in the case of considering any convex function of $I_F$.}. More specifically, we consider the fee that SP$_F$ pays to SP$_L$ for resources to be  $sI^2_F$, where $I_F$ is the number of resources that SP$_F$ reserves from SP$_L$. We also consider the investment cost that SP$_L$ incurs to be $\gamma I^2_L$, where $I_L$  is the number of resources that SP$_L$ adds to her infrastructure, and $\gamma$ is the marginal cost of investment. Thus, the payoffs of SPs are:
\begin{equation} \label{equ:payoffF}
\pi_F=n_F(p_F-c)-sI^2_F
\end{equation}
\begin{equation}\label{equ:payoffL}
\pi_L=n_L(p_L-c)+sI^2_F-\gamma I^2_L
\end{equation}
%, where $c$ is the marginal cost associated with each user. Note that 
%
%The utility of SP$_L$ is:
%
%where $n_L$, $p_L$, and $I_L$ are the number of EUs with SP$_L$, the access fee quoted by SP$_L$ for EUs, and   Thus, the utility of SP$_L$ is increasing with the revenue from EUs 
%($n_L(p_L-c)$) and the fee received from SP$_F$ ($sI^k_F$), and  is decreasing with  respect to the investment cost ($\gamma I^k_L$). 
%Note that we consider the cost of investment to be a quadratic function of $I_L$.\footnote{Intuitions of the model are expected to be the same when considering any convex function of $I_L$.}

% Henceforth, we consider $k=2$, i.e. the cost of leasing resources and investing in resources to be quadratic. 
Naturally, we assume that the fee per resources ($s$) that SP$_F$ pays to SP$_L$ is at least equal to the marginal cost of investment on the infrastructure ($\gamma$), i.e. $s\geq \gamma$. Also, note that $I_F\leq I_L$. To have a non-trivial problem, we also assume $I_L>0$ and $I_F\geq 0$. 

The strategies of SP$_L$ and SP$_F$ is to choose the access fee for EUs ($p_L$ and $P_F$, respectively) and the level of investment ($I_L$ and $I_F$, respectively). First, we assume that the per resource fee $s$ is pre-determined. In effect, we relax this assumption later when we discuss about the bargaining framework  between SP$_L$ and SP$_F$. 

\subsection*{EUs:}  \edit{The strategy of an EU is to choose one of the SPs to buy wireless plan from. Towards that end, we consider that SPs decide based on their utility which depends on the 
resources the SPs invest (which is an indicator of the quality they provide), the inherent preferences for SPs\footnote{These preferences do not depend on other parameters of the model and strategies. This is why we denote them as \emph{inherent} preferences. As discussed in the introduction, these preferences can be because of different services that SPs offer, e.g. bundling wireless service with a free or cheap international call plan. Moreover, the preferences can be because of the reluctance of  an EU to buy her wireless plan from a particular SP, e.g. an  SP with an infamous customer service.}, and the also the price they pay. In particular, the utility of an EU with preference $x_j$  for  SP$_{j}$, $j\in\{L,F\}$ is:
\begin{equation}\label{equ:EUutility}
u_{x_j}=v^*-t_jx_j-p_j
\end{equation}
where $t_j$ is depends on the relative investment level of the SPs (explained later), and $v^*$ is a common valuation that captures the value of buying a wireless plan for EUs regardless of the SP chosen, the investment level, or the price.}

To model the inherent preferences of EUs, we use a hotelling model. We assume that  SP$_L$ is located at 0, SP$_F$ is located at 1, and EUs are distributed uniformly along the unit interval $[0,1]$. The closer an EU to an SP, the more this EU prefers this SP to the other. Specifically, the EU located at $x\in[0,1]$ incurs  a cost of $t_L x$  (respectively, $t_F(1-x)$) when joining SP$_L$ (respectively, SP$_F$). Note that the notion of closeness and distance is used to model the preference of EUs, and may not be the same as  physical distance.  
%, where $t_L$ (respectively, $t_F$) is the marginal transport cost for SP$_L$ (respectively, SP$_F$). In sum, we  consider $t_L$ and $t_F$ as the reluctance of EUs for connecting to SP$_L$ and SP$_F$, respectively. 

%One interpretation of the transport costs is reluctancy of EUs to change their SP and initial set-up costs of a new service upon switching the ISP. In sum, we  consider $t_N$ and $t_{NoN}$ as the reluctancy of EUs for connecting to the neutral and non-neutral ISPs, respectively.

We now describe how $t_j$'s    depend on the relative investment level of  SPs. We consider that the higher the investment level of SP$_L$ (respectively, SP$_F$) in comparison to the other SP, the lower would be $t_L$ (respectively $t_F$). Thus:
\begin{equation}\label{equ:t_L}
t_L=\frac{I_F}{I_L}
\end{equation} 
\begin{equation}\label{equ:t_F}
t_F=1-t_L=\frac{I_L-I_F}{I_L}
\end{equation}
Thus, $t_L$ and $t_F$ capture the impact of quality of service in the decision of EUs.

\edit{Note that from \eqref{equ:EUutility}, even when $I_F=0$ (i.e. SP$_F$ makes zero investment) and $I_F=I_L$ (i.e. EUs of SP$_L$ do not get any benefit from the investment of SP$_L$), EUs may still join SP$_F$ and SP$_L$, respectively. This happens since both SP$_F$ and SP$_L$ usually have access to separate resources that they can use to serve the EUs who join them, above and beyond the strategic investments $I_F, I_L$ they make and acquire for their mutual transactions.  For example, SP$_L$ will have additional spectrum and infrastructure beyond what it offers to SP$_F$ (i.e, $I_L$), and SP$_F$ would also have spectrum and infrastructure it acquires from other MNOs (many MVNOs simultaneously transact with multiple MVNOs). The quality of service that can be obtained from these additional resources  is captured in $v^*$.
%Note that from \eqref{equ:EUutility}, even when $I_F=0$ (i.e. SP$_F$ makes zero investment) and $I_F=I_L$ (i.e. EUs of SP$_L$ do not get any benefit from the investment of SP$_L$), EUs may still join SP$_F$ and SP$_L$, respectively. This is due to $v^*$ that  captures the quality of the service that SPs provide without any  investment on their infrastructure. Thus, $v^*$ models the already existed infrastructure.
}

We also assume the full market coverage of EUs by SPs. This means that  each EU chooses exactly one SP to subscribe to.  This  assumption is common in hotelling models and is necessary to ensure competition between SPs. An equivalent assumption is to consider the common valuation $v^*$ to be sufficiently large so that the utility of EUs for buying a wireless plan  is positive regardless of the choice of SP. 
\edit{Later, in Section~\ref{section:generalization}, we consider the demand of SPs to be generated both through the hotelling model and  also a demand function. } This, in effect, models an outside option for EUs and relaxes the assumption of  full market coverage of EUs by SPs. 

\subsection*{Non-Cooperative Sequential Game Formulation:}
We model the interaction between SPs and EUs using a four-stage sequential game. Naturally, we assume that SP$_L$ makes the first move and is the leader of the game. The timing and the stages of the game are as follows:
\begin{enumerate}
\item SP$_{L}$ decides on the investment on the infrastructure ($I_L$).
\item SP$_{F}$ decides on the investment, i.e. number of resources to lease from SP$_L$ ($I_F$ such that $I_F\leq I_L$).
\item SP$_{L}$ and SP$_{F}$ determine the access fees for EUs (respectively, $p_L$ and $p_F$).
\item EUs decide to subscribe to one of the SPs.
\end{enumerate}

We assumed the selection of investments by SPs ($I_L$ and $I_F$)  happens before the selection of prices for EUs ($p_L$ and $p_F$) since investment decisions are long-term decision. These decisions are expected to be kept constant over longer time horizons in comparison to pricing decisions. 

In the sequential game framework,  we seek a \emph{Subgame Perfect Nash Equilibrium}  using \emph{backward induction}:

\begin{definition}\emph{Subgame Perfect Nash Equilibrium (SPNE):}
	A strategy is an SPNE if and only if it constitutes a Nash Equilibrium (NE) of every subgame of the game. 
\end{definition} 

%\begin{definition}\emph{Backward Induction:} Characterizing the equilibrium strategies starting from the last stage of the game and proceeding backward. 
%\end{definition}

% One might think that SP$_L$ can eliminate competition by not offering her resources to SP$_F$ (or simply not investing in  new infrastructure), and use the full market coverage to act as a monopoly. However, this is not the case in our model. Recall that we assumed the SP$_F$ to have resources independent of SP$_L$. Thus, regardless of the strategy of SP$_L$, SP$_F$ would be still active in the market, and the full market coverage would not lead to the monopoly of SP$_L$. 
 
%Note that the assumption of the MNVO having resources  independent of the considered MNO and the 

%We assume that $s$ is determined apriori. This means that the selection of this price is independent of the sequential framework, and is done possibly through bargaining.  We discuss about this fee later\footnote{One possible extension is to change the order, consider the investment to be determined independently and $s$ through the sequential framework.}.

\section{\edit{EUs with No Outside Option}}\label{section:analysis}
In this section, we consider that EUs are modeled using the hotelling model. First, we characterize the SPNE strategies  of the game in Section \ref{section:SPNEanalysis}. Then, we summarize and discuss about the outcome of the game in Section~\ref{section:outcome}. We provide numerical results in Section~\ref{section:numerical}. The bargaining framework for this case is presented in Section~\ref{section:bargaining}.

\subsection{SPNE Analysis}\label{section:SPNEanalysis}
We use the backward induction to characterize SPNE strategies. This means that we  characterize the equilibrium strategies starting from the last stage of the game and proceeding backward.  Thus, we start with Stage 4:

\subsubsection*{Stage 4} In this subsection, we characterize the division of EUs between SPs in the  equilibrium, i.e. $n_L$ and $n_F$, using the knowledge of the strategies chosen by the  SPs in Stages 1, 2, and 3. To do so, we characterize the location of the EU that is indifferent between joining either of the SPs, $x_n$. Thus, EUs located at $[0,x_n)$ join SP$_L$, and those located at $(x_n,1]$ joins the non-neutral SP$_F$ (using the full market coverage assumption). 

The EU located at $x_n\in[0,1]$ is indifferent between connecting to SP$_L$ ans SP$_F$ if:
\begin{equation}\label{equ:xn_t_parameter}
\small
	\begin{aligned}
	&v^*-t_{F}(1-x_n)-p_{F}=v^*-t_Lx_n-p_L\\
	&\qquad \qquad \qquad  \Rightarrow x_n=\frac{t_F+p_F-p_L}{t_L+t_F}
	\end{aligned}
\end{equation}

Note that $t_F+t_L=1$. Substituting the value of $t_F$ \eqref{equ:t_F} yields:
\begin{equation}\label{equ:xn}
\small
x_n=\frac{I_L-I_F}{I_L}+p_F-p_L
\end{equation}
Note that EUs in the interval $[0,x_n)$ joins SP$_L$ and  those in the interval $(x_n,1]$ joins SP$_F$. Thus, the fraction of EUs with each SP ($n_L$ and $n_F$) is:
\begin{equation} \label{equ:EUs_stage4}
\small
\begin{aligned} 
n_L &=
\left\{
\begin{array}{ll}
	0  & \mbox{if } x_n < 0 \\
	\frac{I_L-I_F}{I_L}+p_F-p_L & \mbox{if } 0\leq x_n \leq 1\\
	1 & \mbox{if } x_n>1
\end{array}
\right. \\
n_F&=1-n_L
\end{aligned}
\end{equation}

\subsubsection*{Stage 3}

In this stage, SP$_L$ and SP$_F$ determine their  prices for EUs, $p_L$ and $p_F$, respectively, to maximize their payoff. We seek  Nash equilibrium (NE) strategies. Note that in general there might exist several NE strategies: some of them corner equilibria (an extreme case in which one of the SPs receives zero EUs) and some interior equilibria (in which both SPs receive a positive mass of EUs). We first argue, that there is no corner equilibria. Then, we characterize the unique interior equilibrium.

\begin{theorem}\label{theorem:NoCorner}
There is no corner equilibrium, i.e. there is no equilibrium by which $n_L=0$ or $n_F=0$.
\end{theorem}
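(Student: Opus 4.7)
The plan is to argue by contradiction. Suppose some NE of Stage 3 features $n_L=0$ (the case $n_F=0$ is entirely analogous and I would sketch it at the end). From \eqref{equ:EUs_stage4}, $n_L=0$ forces $x_n\leq 0$, which by \eqref{equ:xn} rewrites as the constraint
\[
p_L\geq \frac{I_L-I_F}{I_L}+p_F.
\]
Substituting $n_L=0$ into \eqref{equ:payoffL}, SP$_L$'s equilibrium payoff collapses to $sI_F^2-\gamma I_L^2$, since the access-fee revenue vanishes with $n_L$.

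I would then split the argument into two cases. In the first case, $\frac{I_L-I_F}{I_L}+p_F>c$, so SP$_L$ can choose any $p_L'$ strictly between $c$ and $\frac{I_L-I_F}{I_L}+p_F$; such a deviation yields $x_n>0$, a strictly positive $n_L$, and a strictly positive per-EU margin $p_L'-c$, producing a payoff strictly greater than $sI_F^2-\gamma I_L^2$ and contradicting the NE property. In the second case, $\frac{I_L-I_F}{I_L}+p_F\leq c$; combined with $\frac{I_L-I_F}{I_L}\geq 0$ (from $I_F\leq I_L$) this forces $p_F\leq c$. I would then exhibit a profitable deviation for SP$_F$: if $p_F<c$ strictly, her current payoff $p_F-c-sI_F^2$ from \eqref{equ:payoffF} is less than $-sI_F^2$, whereas by raising her price to any $p_F'\geq p_L+\frac{I_F}{I_L}$ she concedes the entire market ($n_F=0$) and achieves payoff $-sI_F^2$, a strict improvement. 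The remaining boundary sub-case $p_F=c$ forces $\frac{I_L-I_F}{I_L}=0$, i.e.\ $I_F=I_L$, collapsing the hotelling term; here I would treat $p_L>c$ (SP$_F$ slightly undercuts SP$_L$, keeping $n_F=1$ and earning strictly positive margin) and $p_L=c$ (SP$_F$ raises her price to $c+\delta$, losing only an infinitesimal EU share but gaining first-order positive margin) separately.

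The main technical obstacle will be the degenerate sub-case $I_F=I_L$, where $t_L$ and the hotelling preference term vanish and Stage 3 reduces to a Bertrand-like pure price-competition regime; in that regime one must be delicate about tie-breaking at $x_n=0$ and verify that the proposed deviation for SP$_F$ is strictly, rather than weakly, profitable. Once that is handled, the symmetric argument for $n_F=0$ swaps the roles of the two SPs, replacing the threshold $\frac{I_L-I_F}{I_L}+p_F$ by $\frac{I_F}{I_L}+p_L$ and invoking exactly the parallel deviations, so no new ideas are required to dispose of the other corner.
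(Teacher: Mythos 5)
Your proposal is correct and follows essentially the same strategy as the paper's proof: a contradiction argument exhibiting a profitable unilateral price deviation, with SP$_L$ deviating to capture a positive market share at a positive margin when $\frac{I_L-I_F}{I_L}+p_F>c$, and SP$_F$ deviating upward in price when her margin is nonpositive. The paper organizes the same deviations slightly differently (first pinning $p_F^*$ to the boundary $p_L^*-\frac{I_L-I_F}{I_L}$ via monotonicity, then showing $p_L^*\leq c$ and hence $p_F^*\leq c$), but the substance—including your degenerate sub-case $I_F=I_L$—is covered by the same two deviations, so no new ideas separate the two arguments.
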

Proof is presented in Appendix~\ref{section:appendix:corner}.

Now, we look for all NE by which  $0< x_n<1$ ($x_n$ characterized in the previous stage of the game). In this case, using \eqref{equ:payoffF}, \eqref{equ:payoffL}, and \eqref{equ:EUs_stage4}, the payoffs of SPs would be:
\begin{equation}\label{equ:payoffF_2}
\small
\pi_F=(\frac{I_F}{I_L}+p_L-p_F)(p_F-c)-sI^k_F 
\end{equation}
\begin{equation}\label{equ:payoffL_2}
\small
\pi_L=(\frac{I_L-I_F}{I_L}+p_F-p_L)(p_L-c)+sI^k_F-\gamma I^k_L
\end{equation}
\normalsize
In the following theorem, we prove that the NE uniquely exists and  characterize it:
\begin{theorem}\label{theorem:stage 3}
The NE strategies of Stage 3 by which  $0<n_L<1$ and $0<n_F<1$ are unique, and are $p_F=c+\frac{I_L+I_F}{3I_L}$ and $p_L=c+\frac{2I_L-I_F}{3I_L}$. 
\end{theorem}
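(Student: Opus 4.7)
The plan is to derive the interior equilibrium directly from first-order conditions, since in the regime $0 < x_n < 1$ the payoffs in \eqref{equ:payoffF_2} and \eqref{equ:payoffL_2} are quadratic in the SPs' own price choices (the $sI_F^k$ and $\gamma I_L^k$ terms are independent of $p_L, p_F$ and drop out). I would compute $\partial \pi_F/\partial p_F$ and $\partial \pi_L/\partial p_L$, check that $\partial^2 \pi_F/\partial p_F^2 = -2$ and $\partial^2 \pi_L/\partial p_L^2 = -2$, so each payoff is strictly concave in the SP's own price. Hence the best response is characterized by the unique solution of the FOCs.

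Setting the FOCs to zero gives the two best-response equations
\begin{equation*}
2 p_F = c + \tfrac{I_F}{I_L} + p_L, \qquad 2 p_L = c + \tfrac{I_L-I_F}{I_L} + p_F.
\end{equation*}
Adding them, the $I_F$-terms combine to $\tfrac{I_F}{I_L}+\tfrac{I_L-I_F}{I_L}=1$, which yields the simple aggregate relation $p_L+p_F = 2c+1$. Substituting back into either FOC gives the linear system a unique solution, and a short calculation produces $p_L = c + \tfrac{2I_L - I_F}{3 I_L}$ and $p_F = c + \tfrac{I_L + I_F}{3 I_L}$, matching the statement.

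To close the argument I would check the self-consistency of the interior regime: plugging the candidate prices into \eqref{equ:xn} gives $x_n = \tfrac{I_L-I_F}{I_L} + p_F - p_L = \tfrac{I_L-I_F}{I_L} + \tfrac{I_F - (I_L - I_F) - I_L}{3 I_L}$. Simplifying, one obtains a value strictly in $(0,1)$ (given the standing assumption $0 \le I_F \le I_L$ and $I_L > 0$), so the demand formula used in \eqref{equ:payoffF_2}--\eqref{equ:payoffL_2} is indeed the right branch of the piecewise expression \eqref{equ:EUs_stage4}. Combined with Theorem~\ref{theorem:NoCorner}, which rules out corner equilibria, this yields existence and uniqueness of the NE in Stage~3.

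The only delicate point I anticipate is this last self-consistency verification: a priori, the unconstrained FOC maximizer could lie outside the interior regime, and in that case one would need to argue that the corresponding SP has a profitable deviation back into the interior (so the deviating equilibrium cannot persist). Because Theorem~\ref{theorem:NoCorner} already excludes $n_L \in \{0,1\}$ equilibria, it suffices to show $x_n \in (0,1)$ at the proposed profile, which reduces to verifying an elementary inequality in $I_F/I_L$; this is the routine but essential step that ties the FOC calculation to a genuine interior NE.
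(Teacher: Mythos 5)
Your derivation follows the same route as the paper: first-order conditions, strict concavity of each payoff in the SP's own price, solving the resulting $2\times 2$ linear system, and then verifying $x_n\in(0,1)$ at the candidate. Two points need attention. First, an arithmetic slip in the consistency check: $p_F^*-p_L^*=\frac{(I_L+I_F)-(2I_L-I_F)}{3I_L}=\frac{2I_F-I_L}{3I_L}$, not $\frac{2I_F-2I_L}{3I_L}$ as your numerator gives, so the correct value is $x_n=\frac{2I_L-I_F}{3I_L}\in[\tfrac{1}{3},\tfrac{2}{3}]$. With the expression you wrote one obtains $x_n=\frac{I_L-I_F}{3I_L}$, which equals $0$ when $I_F=I_L$ --- exactly the regime relevant to Outcome B --- so your claim that the value is ``strictly in $(0,1)$'' would fail there.

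Second, and more substantively: the closing inference ``because Theorem~\ref{theorem:NoCorner} already excludes corner equilibria, it suffices to show $x_n\in(0,1)$ at the proposed profile'' is not valid. Theorem~\ref{theorem:NoCorner} says no equilibrium is \emph{located} at a corner; it does not say that a player cannot profitably \emph{deviate} from the interior candidate to a price that pushes $x_n$ outside $(0,1)$. Concavity of the quadratic piece only rules out deviations that keep the demand on the interior branch of \eqref{equ:EUs_stage4}. The paper closes this with a continuity argument: each SP's payoff is continuous as $n_L\downarrow 0$ and $n_L\uparrow 1$, so the supremum of the payoff over each corner region is attained at the boundary with the interior region (on the region where $n_F=1$, $\pi_F=p_F-c-sI_F^2$ is increasing toward that boundary; on $n_F=0$ it is the constant $-sI_F^2$), and is therefore dominated by the interior maximum. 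This step is short, but it is the part of the existence argument your proposal leaves open.
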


\begin{remark}\label{remark:price}
Note that $p_L$ (respectively, $p_F$) is increasing with respect to $I_L$ (respectively, $I_F$). Also, $p_L$ (respectively, $p_F$) is decreasing with respect to $I_F$ (respectively, $I_L$). Also note that prices depend only on the cost and the relative investment level of the SPs, i.e. $\frac{I_F}{I_L}$.
\end{remark}

\begin{proof}
In this case,  for an interior NE, $p^*_F$ and $p^*_L$ should be determined to  satisfy the first order condition, i.e. $\frac{d \pi_F}{dp_F}=0$ and   $\frac{d \pi_L}{dp_L}=0$.  The first order conditions yield:
$$
\small
\begin{aligned} 
2p^*_F-p^*_L&=\frac{I_F}{I_L}+c \qquad \& \qquad 2p^*_L-p^*_F&=\frac{I_L-I_F}{I_L}+c 
\end{aligned}
$$
Thus,
\begin{eqnarray}\label{equ:optimump}
\small
\begin{aligned}
p^*_F=c+\frac{I_L+I_F}{3I_L} \qquad \& \qquad p^*_L=c+\frac{2I_L-I_F}{3I_L} 
\end{aligned}
\end{eqnarray}
\normalsize
Therefore, $p^*_F$ and $p^*_L$ are the unique NE strategies if they yield $0< x_n< 1$ and no unilateral deviation is profitable for SPs. In Part A, we check the former, and in Part B, we check the latter. 

\textbf{Part A:}
We first  check the condition that with $p^*_L$ and $p^*_F$, $0< x_n< 1$. Using \eqref{equ:xn} and \eqref{equ:optimump}:
$$
\small
x_n=\frac{I_L-I_F}{I_L}+p^*_F-p^*_L= \frac{2I_L-I_F}{3I_L}
$$
which is greater than zero since $I_L\geq I_F$ and $I_L>0$. $x_n$ is also clearly less than one (note that $I_F\geq 0$). Thus, the condition $0< x_n< 1$ holds.

\textbf{Part B:}
Note that $\frac{d^2 \pi_F}{dp^2_F}<0$ and   $\frac{d^2 \pi_L}{dp^2_L}<0$, at all values of $p_F$ and $p_L$. Thus, a local maxima would be a global maximum, and any solutions to the first order conditions would maximize the payoff of the SPs when $0<x_n<1$, and no unilateral deviation by which $0<x_n<1$ would be profitable for SPs. 

Now, we discuss that any deviation by SPs such that $n_L=0$ and $n_L=1$ (which subsequently yields $n_F=1$ and $n_F=0$, respectively) is not profitable. Note that the payoff of SPs, \eqref{equ:payoffF} and \eqref{equ:payoffL}, are continuous  as $n_L\downarrow 0$, and $n_L\uparrow 1$ (which subsequently yields  $n_F\uparrow 1$ and $n_F\downarrow 0$, respectively). Thus, the payoffs of both SPs when selecting $p^*_L$ and $p^*_F$ (solutions of first order conditions) are greater than or equal to the payoffs when $n_L=0$ and $n_L=1$. Thus, any deviation by SPs such that $n_L=0$ or $n_L=1$ is not profitable for SPs. 

%Now, we discuss about the possibility of a profitable deviation such that $x_n=0$ or $x_n=1$. In Cases B-L and B-F, we discuss about the possibility of such deviations by SP$_L$ and SP$_F$, respectively. 

This complete the proof that $p^*_F$ and $p^*_L$ are the unique NE strategies by which both SPs are active, i.e. $0<x_n<1$.  
\end{proof}
 
\subsubsection*{Stage 2}

In this stage of the game, SP$_F$ decides on the investment level, i.e. the number of resources to be leased from SP$_L$ ($I_F$), with the condition that $I_F\leq I_L$ to maximize $\pi_F$. The optimization that would be solved by SP$_F$ is:

\begin{equation} \label{equ:maxpiFIF}
\max_{I_F\leq I_L}\pi_F=\max_{I_F\leq I_L}\big{(}\frac{I_L+I_F}{3I_L}\big{)}^2-sI^2_F
\end{equation} 
Note that for the last equality, we used  \eqref{equ:payoffF_2} and Theorem~\ref{theorem:stage 3}. 

\begin{theorem}\label{theorem:stage2}
The optimum investment level of SP$_F$ is:
\begin{equation}\label{equ:optimumI_F}
I^*_F=\left\{
		\begin{array}{ll}
			\frac{I_L}{9sI^2_L-1} & \mbox{if } \quad I_L>\sqrt{\frac{2}{9s}}\\
			 I_L & \mbox{if } \quad  I_L\leq \sqrt{\frac{2}{9s}}
		\end{array}
	\right.
	\end{equation}
\end{theorem}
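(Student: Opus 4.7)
My plan is to treat \eqref{equ:maxpiFIF} as a one-dimensional optimization in $I_F$ over the compact interval $[0,I_L]$, with $I_L$ and $s$ treated as fixed parameters. First I would write the objective explicitly as
\[
\pi_F(I_F) \;=\; \Bigl(\tfrac{I_L+I_F}{3I_L}\Bigr)^{2} - s\,I_F^{2},
\]
and compute
\[
\frac{d\pi_F}{dI_F} \;=\; \frac{2(I_L+I_F)}{9I_L^{2}} \;-\; 2sI_F,
\qquad
\frac{d^{2}\pi_F}{dI_F^{2}} \;=\; \frac{2}{9I_L^{2}} - 2s.
\]
Setting the first derivative to zero gives the unique interior stationary point $\tilde I_F = I_L/(9sI_L^{2}-1)$ (provided $9sI_L^{2}\neq 1$).

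Next I would do a case split based on whether $\tilde I_F$ lies in the feasible interval $[0,I_L]$. The algebraic condition $\tilde I_F \le I_L$ rearranges to $9sI_L^{2}-1 \ge 1$, i.e.\ $I_L \ge \sqrt{2/(9s)}$; this is exactly the threshold appearing in the statement. For $I_L > \sqrt{2/(9s)}$, the condition $9sI_L^{2} > 2 > 1$ also forces $\tfrac{d^{2}\pi_F}{dI_F^{2}} < 0$, so $\pi_F$ is strictly concave in $I_F$, and the stationary point $\tilde I_F$ is the unique maximizer on $[0,I_L]$; this yields the first branch of \eqref{equ:optimumI_F}. For the complementary case $I_L \le \sqrt{2/(9s)}$, I would argue that the maximum must be the boundary point $I_F = I_L$: since $\tfrac{d\pi_F}{dI_F}\big|_{I_F=I_L} = \tfrac{4}{9I_L} - 2sI_L$, which is nonnegative precisely when $I_L \le \sqrt{2/(9s)}$, and since a short convexity-or-concavity inspection shows $\tfrac{d\pi_F}{dI_F}$ is monotone in $I_F$, the derivative is nonnegative throughout $[0,I_L]$, so $\pi_F$ is nondecreasing on the feasible interval and is maximized at $I_F = I_L$. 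This yields the second branch.

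The main obstacle is keeping the case analysis clean, because the sign of the second derivative depends on $I_L$ as well. Specifically, concavity requires $I_L > 1/(3\sqrt{s})$, which is a weaker threshold than $\sqrt{2/(9s)}$, so the subinterval $1/(3\sqrt{s}) < I_L < \sqrt{2/(9s)}$ has a concave objective whose unconstrained maximizer lies outside $[0,I_L]$, while the subinterval $I_L \le 1/(3\sqrt{s})$ has a convex objective. I would handle both subcases uniformly by arguing via the sign of the derivative at the endpoint $I_F = I_L$ rather than via concavity; this bypasses the need for separate convex/concave reasoning and directly pins down the boundary maximum. A final sanity check is that the formula in the first branch returns $I_F = I_L$ exactly at the threshold $I_L = \sqrt{2/(9s)}$, so the two branches agree at the boundary.
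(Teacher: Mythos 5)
Your proposal is correct and reaches the same answer, but it organizes the case analysis differently from the paper. The paper splits on the curvature of $\pi_F$ in $I_F$ — concave when $s>\tfrac{1}{9I_L^2}$, convex when $s<\tfrac{1}{9I_L^2}$, linear when $s=\tfrac{1}{9I_L^2}$ — and in the convex case resolves the boundary choice by explicitly comparing $\pi_F|_{I_F=0}=\tfrac{1}{9}$ against $\pi_F|_{I_F=I_L}=\tfrac{4}{9}-sI_L^2$. You instead split directly on the threshold $I_L\lessgtr\sqrt{2/(9s)}$ that appears in the statement, and for the branch $I_L\le\sqrt{2/(9s)}$ you observe that the affine derivative $\tfrac{d\pi_F}{dI_F}=\tfrac{2}{9I_L}+\bigl(\tfrac{2}{9I_L^2}-2s\bigr)I_F$ is positive at $I_F=0$ and nonnegative at $I_F=I_L$, hence nonnegative on all of $[0,I_L]$, so the payoff is nondecreasing and the right endpoint wins. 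This monotonicity argument absorbs the paper's convex and linear cases, as well as the concave subcase where the stationary point falls outside the feasible interval, into a single step, and it spares you the explicit payoff comparison at the two boundary points; the paper's curvature-based split is slightly more mechanical but makes the structure of $\pi_F$ (and why $I_F^*=0$ never occurs) more visible. One small point of care: monotonicity of the derivative plus nonnegativity at $I_F=I_L$ alone does not give nonnegativity on the whole interval when the derivative is increasing — you need the (trivially true) positivity at $I_F=0$ as well, so state both endpoint signs rather than leaning on monotonicity.
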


\begin{remark}\label{remark:continuousI_L}
Note that in \eqref{equ:optimumI_F}, $I^*_F$ is continuous at $I_L=\sqrt{\frac{2}{9s}}$. 
\end{remark}

\begin{remark}\label{remark:I_L_dependencies}
If the fee per resources ($s$) or the investment by SP$_L$ ($I_L$) is low, then SP$_F$ reserves all the available resources. If not, then SP$_F$ reserves a fraction of available resources ($I^*_F<I_L$). Note that in this case, $\frac{d I^*_F}{d I_L}<0$. Thus, the higher the number of available resources, the lower would be the number of resources reserved by SP$_F$. In addition, $I^*_F$ is decreasing with $s$. 
\end{remark}

\begin{proof}
Note that in \eqref{equ:maxpiFIF}, $\pi_F$ is concave over $I_F$ if $s>\frac{1}{9I^2_L}$, $\pi_F$ is convex if $s<\frac{1}{9I^2_L}$, and is linear if $s=\frac{1}{9I^2_L}$.  We characterize the optimum investment level, i.e. $I^*_F$ under each of these conditions:

\subsection*{$s>\frac{1}{9I^2_L}$:}  In this case, the payoff of SP$_F$ is concave. Thus, the first order condition yields the possible optimum investment level, $\hat{I}_F$:
\begin{equation} \label{equ:hatI_f}
\begin{aligned}
\small
\frac{d \pi_F}{d I_F}|_{\hat{I}_F}=0&\Rightarrow \frac{2(\hat{I}_F+I_L)}{9I^2_L}-2s\hat{I}_F=0\\
&\Rightarrow \hat{I}_F=\frac{I_L}{9sI^2_L-1}
\end{aligned}
\end{equation}
Thus, the optimum investment level, $I^*_F$ is:
\begin{equation}
I^*_F=\min\{\hat{I}_F,I_L\} 
\end{equation}
Note that from the assumption of this case $9sI^2_L>1$. Thus, $\hat{I}_F>0$ (using \eqref{equ:hatI_f}). In addition, from the expression of $\hat{I}_F$ \eqref{equ:hatI_f}:
$$
\hat{I}_F<I_L\iff 9sI^2_L>2
$$ 
Thus, the optimum investment level for SP$_F$ would be:
\begin{equation}
\small
I^*_F=\left\{
		\begin{array}{ll}
			\frac{I_L}{9sI^2_L-1} & \mbox{if } \quad I_L>\sqrt{\frac{2}{9s}}\\
			 I_L & \mbox{if } \quad  I_L\leq \sqrt{\frac{2}{9s}}
		\end{array}
	\right.
\end{equation}
\subsection*{$s<\frac{1}{9I^2_L}$:} In this case, $\pi_F$ would be convex. Thus, the optimum level of investment would be on the boundaries of the feasible set. Thus, $I^*_F$ would be either 0 or $I_L$, whichever yields a higher payoff. Note that:
$$
\begin{aligned}
\pi_F|_{I_F=0}&= \frac{1}{9}\qquad \& \qquad \pi_F|_{I_F=I_L}&=\frac{4}{9}-sI^2_L
\end{aligned}
$$
Thus, $I^*_F=I_L$ if and only if $\frac{4}{9}-sI^2_L\geq \frac{1}{9}$ (Note that we assumed that if $I_F=0$ and $I_F=I_L$ yield the same payoff, then SP$_F$ chooses $I_F=I_L$). Thus, $s\leq \frac{1}{3I^2_L}$ yields $I^*_F=I_L$. Note that from the assumption of the case,  $s<\frac{1}{9I^2_L}<\frac{1}{3I^2_L}$. Thus, $I^*_F=I_L$.
% \be
%I^*_F=\left\{
%		\begin{array}{ll}
%			I_L & \mbox{if } \quad s\leq \frac{1}{3I^2_L}\\
%			0 & \mbox{if } \quad  s> \frac{1}{3I^2_L}
%		\end{array}
%	\right.
%\ee

\subsection*{$s=\frac{1}{9I^2_L}$:} In this case, $\pi_F$ would be an increasing linear function of $I_F$. Thus, $I^*_F=I_L$. 

Putting all the cases together, the result of the theorem follows.
\end{proof}

\subsubsection*{Stage 1} In this stage, SP$_L$ decides on the level of investment $I_L$ to maximize her payoff, $\pi_L$, where $\pi_L$ can be characterized using   \eqref{equ:payoffL_2} and Theorem \ref{theorem:stage 3}:
\begin{equation}\label{equ:optimizationpil}
\max _{I_L}\pi_L=\max_{I_L} (\frac{2I_L-I^*_F}{3I_L})^2+sI^{*2}_F-\gamma I^2_L
\end{equation}
Note that $I^*_F$ is characterized in \eqref{equ:optimumI_F}. 

Now, we present a tie-breaking assumption by which SP$_L$ decides on the level of investment chosen when there exists multiple optimum solutions to \eqref{equ:optimizationpil}.\edit{ This assumption is motivated by the fact that SPs prefer to offer a better quality of service provided that their payoffs  do not suffer by doing so:}
\begin{assumption}\label{assumption:tie_IL}
If there exists multiple investment levels, i.e. $I_L$, by which \eqref{equ:optimizationpil} is maximized, then in the equilibrium, SP$_L$ chooses the largest investment level.
\end{assumption}

In the next theorem, we characterize the candidate optimum answers:

\begin{theorem}\label{theorem:stage 1}
The optimum solution to \eqref{equ:optimizationpil} chosen by SP$_L$ is $I^*_L=\max(\mathbf{\hat{I}_L})$, where $\mathbf{\hat{I}_L}$ is the set of all solutions to the following optimization problem:
\begin{equation}
\small
\max_{I_L\geq \sqrt{\frac{2}{9s}}}{\pi_{L,I^*_F}=\frac{1}{9}(2-\frac{1}{9sI^2_L-1})^2+s(\frac{I_L}{9sI^2_L-1})^2-\gamma I^2_L}
\end{equation}
\end{theorem}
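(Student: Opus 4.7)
The plan is to substitute SP$_F$'s optimal response from Theorem~\ref{theorem:stage2} into \eqref{equ:optimizationpil} and then optimize the resulting univariate function of $I_L$ over $(0,\infty)$, partitioning the domain at the threshold $\sqrt{2/(9s)}$ dictated by Theorem~\ref{theorem:stage2}.

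First I would handle the low-investment regime $I_L\in(0,\sqrt{2/(9s)}]$. Here Theorem~\ref{theorem:stage2} gives $I^*_F=I_L$, so substituting into \eqref{equ:optimizationpil} yields $\pi_L=\tfrac{1}{9}+(s-\gamma)I_L^2$. Since the model assumes $s\geq \gamma$, this is non-decreasing in $I_L$, so its supremum on the interval is attained at the right endpoint $I_L=\sqrt{2/(9s)}$ (and at every smaller point as well in the degenerate case $s=\gamma$).

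Next I would handle the high-investment regime $I_L\geq \sqrt{2/(9s)}$. Here Theorem~\ref{theorem:stage2} gives $I^*_F=I_L/(9sI_L^2-1)$, and direct substitution into \eqref{equ:optimizationpil} reproduces the expression $\pi_{L,I^*_F}$ stated in the theorem. This function is continuous on the closed half-line $[\sqrt{2/(9s)},\infty)$, and since the bracketed terms $(2-1/(9sI_L^2-1))^2$ and $I_L^2/(9sI_L^2-1)^2$ remain bounded while $-\gamma I_L^2\to -\infty$, the function tends to $-\infty$ as $I_L\to\infty$. Hence the set of maximizers $\mathbf{\hat{I}_L}$ is non-empty and is attained in the interior or at the boundary of this half-line.

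Finally I would tie the two regimes together using the continuity of $I^*_F$ at $I_L=\sqrt{2/(9s)}$ established in Remark~\ref{remark:continuousI_L}. Continuity of $\pi_L$ at the threshold implies that the maximum of $\pi_L$ on the high regime is at least as large as its maximum on the low regime. Consequently every global maximizer of $\pi_L$ on $(0,\infty)$ either lies in $\mathbf{\hat{I}_L}$, or else lies in the low regime while sharing the common payoff with $\sqrt{2/(9s)}$, which must itself belong to $\mathbf{\hat{I}_L}$ in that situation. In every case $\mathbf{\hat{I}_L}$ contains a global maximizer that is at least as large as any low-regime maximizer, and invoking Assumption~\ref{assumption:tie_IL} (which instructs SP$_L$ to pick the largest global maximizer) yields $I^*_L=\max(\mathbf{\hat{I}_L})$. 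The main obstacle I anticipate is the tie-breaking bookkeeping in the degenerate case $s=\gamma$, where $\pi_L$ is constant on the entire low regime; care is required to verify that even then the largest global maximizer lies in $\mathbf{\hat{I}_L}$, but otherwise the argument reduces to a clean case split together with a standard extreme-value existence argument and involves no delicate calculus.
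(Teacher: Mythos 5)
Your proposal is correct and follows essentially the same route as the paper: the same case split at $I_L=\sqrt{2/(9s)}$, the same use of $s\geq\gamma$ to push the low-regime optimum to the right endpoint, the same appeal to the continuity of $I^*_F$ at the threshold to fold the low regime into the high-regime optimization, and the same invocation of Assumption~\ref{assumption:tie_IL} to select $\max(\mathbf{\hat{I}_L})$. Your added remarks on the existence of a maximizer (via $-\gamma I_L^2\to-\infty$) and on the degenerate $s=\gamma$ tie case are slightly more careful than the paper's argument but do not change the substance.
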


\begin{remark}
Theorem yields that the minimum optimum level of investment by SP$_L$ is $\sqrt{\frac{2}{9s}}$. 
\end{remark}

\begin{proof}
To find the optimum level of investment, $I^*_L$, consider two scenarios:
\subsection*{$I_L\leq \sqrt{\frac{2}{9s}}$:} In this case, $I^*_F=I_L$ by \eqref{equ:optimumI_F}. Thus, the maximization \eqref{equ:optimizationpil} would become:
$$
\max_{I_L\leq \sqrt{\frac{2}{9s}}}(s-\gamma) I^2_L
$$
Thus:
\begin{equation}
I^*_L=\left\{
		\begin{array}{ll}
			\sqrt{\frac{2}{9s}} & \mbox{if } \quad s\geq \gamma\\
			 0 & \mbox{if } \quad  s< \gamma
		\end{array}
	\right.
\end{equation}
Note that Assumption \ref{assumption:tie_IL} is the reason that for $s=\gamma$, $I_L=\sqrt{\frac{2}{9s}}$ is chosen. Also, recall that to have a non-trivial problem, we assumed $s\geq \gamma$. Thus, if $I_L\leq \sqrt{\frac{2}{9s}}$, then $I^*_L=\sqrt{\frac{2}{9s}}$. Thus, this case can be considered as part of the next part.

\subsection*{$I_L\geq \sqrt{\frac{2}{9s}}$:} 
In this case, $I^*_F=\frac{I_L}{9sI^2_L-1}$. Note that $I^*_F$ \eqref{equ:optimumI_F} is continuous at $I_L=\sqrt{\frac{2}{9s}}$ (Remark~\ref{remark:continuousI_L}). Thus, as $I_L\rightarrow \sqrt{\frac{2}{9s}}$, $\pi_{L,I^*_F}\rightarrow \pi_L|_{I_L=\sqrt{\frac{2}{9s}}}$. Therefore, this case also includes the optimum answer of the previous case. Thus, the maximization \eqref{equ:optimizationpil}   would become:
\begin{equation}
\small
\max_{I_L\geq \sqrt{\frac{2}{9s}}}{\pi_{L,I^*_F}=\frac{1}{9}(2-\frac{1}{9sI^2_L-1})^2+s(\frac{I_L}{9sI^2_L-1})^2-\gamma I^2_L}
\end{equation}
Note that using Assumption~\ref{assumption:tie_IL}, if there exists multiple solutions to the maximization, the highest one would be chosen by SP$_L$. Results of the theorem follows.
\end{proof}

\subsection{The Outcome of the Game} \label{section:outcome}
In this section, we characterize the equilibrium outcome of the game using the results of the previous section and discuss about them. From Theorem~\ref{theorem:stage 1}, $I^*_L$ can be either $\sqrt{\frac{2}{9s}}$ or greater than that.  In Corollaries~\ref{corollary:outcome_IL>} and \ref{corollary:outcome_IL<}, we characterize
the equilibrium outcome when $I^*_L>\sqrt{\frac{2}{9s}}$  and $I^*_L\leq \sqrt{\frac{2}{9s}}$, respectively. Note that in each case, there exists a unique outcome and the two cases are mutually exclusive. Thus, the outcome of the game exists and is unique. 

\subsubsection{If $I^*_L>\sqrt{\frac{2}{9s}}$}
\begin{corollary} [Outcome A]\label{corollary:outcome_IL>}
If $I^*_L>\sqrt{\frac{2}{9s}}$, then:
\begin{itemize}
\item\textbf{Stage 1:}  The optimum level of investment by SP$_L$ is characterized in  Theorem~\ref{theorem:stage 1}.
\item \textbf{Stage 2:} The optimum level of investment by SP$_F$ is: $I^*_F=\frac{I^*_L}{9sI^{* 2}_L-1}$.
\item \textbf{Stage 3:} Prices for EUs are: $p^*_F=c+\frac{1+\frac{1}{9sI^2_L-1}}{3}$ and $p^*_L=c+\frac{2-\frac{1}{9sI^2_L-1}}{3}$. 
\item \textbf{Stage 4:} The fraction of EUs with each SP are: $n^*_F=\frac{1+\frac{1}{9sI^2_L-1}}{3}$ and $n^*_L=\frac{2-\frac{1}{9sI^2_L-1}}{3}$.
\end{itemize}
\end{corollary}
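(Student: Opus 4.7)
The plan is to obtain the corollary by assembling the three preceding theorems under the hypothesis $I^*_L>\sqrt{2/(9s)}$; no new optimization needs to be solved, only careful substitution. The game is analyzed by backward induction, so I would move forward in the order Stage 1 $\to$ Stage 2 $\to$ Stage 3 $\to$ Stage 4, re-using the characterizations already proved.

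First, for Stage 1, I would simply invoke Theorem~\ref{theorem:stage 1}: the claim that $I^*_L=\max(\mathbf{\hat{I}_L})$ is exactly the content of that theorem and requires no further work here. Next, for Stage 2, I would use the hypothesis of the corollary to select the correct branch of \eqref{equ:optimumI_F} in Theorem~\ref{theorem:stage2}: since $I^*_L>\sqrt{2/(9s)}$, the second branch is inactive and we get $I^*_F=\frac{I^*_L}{9sI^{*2}_L-1}$, as asserted.

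For Stages 3 and 4, the key observation is that the NE prices in Theorem~\ref{theorem:stage 3} and the indifferent-EU location in \eqref{equ:xn} depend on the investments only through the ratio $r:=I^*_F/I^*_L$, and under the hypothesis this ratio simplifies to $r=\tfrac{1}{9sI^{*2}_L-1}$. Substituting $r$ into $p^*_F=c+(1+r)/3$ and $p^*_L=c+(2-r)/3$ from Theorem~\ref{theorem:stage 3} immediately yields the two price expressions in the corollary. Then substituting the same $r$ and these prices into \eqref{equ:xn} gives
\[
x_n=(1-r)+p^*_F-p^*_L=(1-r)+\frac{2r-1}{3}=\frac{2-r}{3},
\]
so $n^*_L=x_n=\tfrac{1}{3}(2-r)$ and $n^*_F=1-n^*_L=\tfrac{1}{3}(1+r)$, which after reinserting $r=\tfrac{1}{9sI^{*2}_L-1}$ match the stated Stage~4 expressions.

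There is no substantive obstacle, since the corollary is a direct consequence of previously proved results; the only care needed is (i) verifying that the assumed regime $I^*_L>\sqrt{2/(9s)}$ is precisely the one that picks the upper branch of \eqref{equ:optimumI_F}, and (ii) that we remain in the interior case $0<x_n<1$ of Theorem~\ref{theorem:stage 3}, which follows from $r\in(0,1)$ and was in fact already checked in Part~A of that theorem's proof. Once these two consistency checks are recorded, the four bullets of the corollary follow by pure substitution.
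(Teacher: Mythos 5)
Your proof is correct and follows essentially the same route as the paper's: Stage 1 is Theorem~\ref{theorem:stage 1} verbatim, Stage 2 is the upper branch of \eqref{equ:optimumI_F} selected by the hypothesis $I^*_L>\sqrt{2/(9s)}$, and Stages 3--4 are direct substitutions of the ratio $I^*_F/I^*_L$ into \eqref{equ:optimump} and \eqref{equ:EUs_stage4}. Your version merely spells out the arithmetic ($x_n=(2-r)/3$) and the interior-equilibrium consistency check that the paper leaves implicit.
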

\begin{proof}
The outcome of Stage 1 directly follows from the results of Stage 1 (Theorem~\ref{theorem:stage 1}). The outcome of Stage 2 directly follows from the results of Stage 2 (Theorem~\ref{theorem:stage2}) and the assumption of corollary that $I^*_L>\sqrt{\frac{2}{9s}}$. The outcome of Stage 3 follows from \eqref{equ:optimump} and the outcome of Stage 2. The outcome of Stage 4 follows from  \eqref{equ:EUs_stage4} and the outcomes of Stage 2 and 3. 
\end{proof}

%In this section, we characterize the equilibrium outcome of the game using the results of the previous section and discuss about them. In Corollary~\ref{corollary:outcome}, we characterize the equilibrium outcome: 
%
%%, there exists a unique outcome and the two cases are mutually exclusive. Thus, the outcome of the game exists and is unique. 
%
%%\subsection{If $I^*_L>\sqrt{\frac{2}{9s}}$:}
%\begin{corollary} \label{corollary:outcome}
%The outcome of the game is unique in which:
%\begin{itemize}
%\item\textbf{Stage 1:}  The optimum level of investment of SP$_L$ is determined by Theorem~\ref{theorem:stage 1}.
%\item \textbf{Stage 2:} The optimum level of investment by SP$_F$ is determined by Theorem~\ref{theorem:stage2}.
%\item \textbf{Stage 3:} Prices for EUs are: $p_F=c+\frac{I_L+I_F}{3I_L}$ and $p_L=c+\frac{2I_L-I_F}{3I_L}$.
%\item \textbf{Stage 4:} The fractions of EUs with each SP are: $n^*_F=\frac{1+\frac{1}{9sI^2_L-1}}{3}$ and $n^*_L=\frac{2-\frac{1}{9sI^2_L-1}}{3}$.
%\end{itemize}
%\end{corollary}
%\begin{proof}
%The outcomes of Stages 1,2, and 3 directly follows from the results of Stages 1, 2, and 3, respectively. The outcome of Stage 4 follows from  \eqref{equ:EUs_stage4} and the outcomes of Stage 3. 
%\end{proof}

Note that in this case, as discussed, $I^*_F$ is decreasing with respect to $I^*_L$ and $s$ (Remark~\ref{remark:I_L_dependencies}). In addition,  when $s$ is fixed, $p^*_F$ and $p^*_L$ are decreasing and increasing with $I^*_L$, respectively.  Also, when $I_L$ is fixed, $p^*_F$ and $p^*_L$ are decreasing and increasing with respect to the per resource fee, $s$, respectively. \edit{Note that $s$ also affects the selection of $I_L$. Thus, the general relation between prices and $s$ (when $I_L$ is not fixed) is more complicated , and is investigated through numerical analysis in Section~\ref{section:numerical}.} The trends in $n^*_F$ and $n^*_L$ are similar to those of $p^*_F$ and $p^*_L$, respectively. 

% Regardless of $p^*_F$ (respectively, $p^*_L$) being decreasing (respectively, increasing), the number of EUs with SP$_F$ (respectively, SP$_L$), i.e. $n^*_F$ (respectively, $n^*_L$) is still decreasing (respectively, increasing) with respect to $I^*_L$ (if $s$ fixed). %This is because of the increase (respectively, decrease) in $t_F$ (respectively, $t_L$), i.e. the transport cost of SP$_F$ (respectively, SP$_L$). To understand these changes in the transport costs, recall that increasing $I^*_L$, decreases $I^*_F$ (Remark~\ref{remark:I_L_dependencies}), and subsequently increases $t_F$ \eqref{equ:t_F}. 

 In  Section~\ref{section:numerical}, we  observe that $I^*_L$ is dependent on $s$. Thus, the relationship between $s$ and the outcome is more complicated.

\subsubsection{If $I^*_L\leq \sqrt{\frac{2}{9s}}$}

\begin{corollary}[Outcome B]\label{corollary:outcome_IL<}
If $I^*_L\leq  \sqrt{\frac{2}{9s}}$, then:
\begin{itemize}
\item \textbf{Stage 1:} $I^*_L= \sqrt{\frac{2}{9s}}$.
\item \textbf{Stage 2:} $I^*_F=I^*_L$.
\item \textbf{Stage 3:} $p^*_F=c+\frac{2}{3}$ and $p^*_L=c+\frac{1}{3}$.
\item \textbf{Stage 4:}  $n^*_F=\frac{2}{3}$ and $n^*_L=\frac{1}{3}$.
\end{itemize}
\end{corollary}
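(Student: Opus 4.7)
The plan is to derive each stage's outcome by combining the corollary's hypothesis $I^*_L\leq\sqrt{2/(9s)}$ with the previously established stage-by-stage characterizations (Theorems~\ref{theorem:stage 3}, \ref{theorem:stage2}, \ref{theorem:stage 1}). Because the hypothesis pins down the regime of Stage~1 exactly, the remaining three stages will collapse to direct substitutions rather than require any fresh optimization.

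First I would nail down Stage~1. Theorem~\ref{theorem:stage 1} tells us that any optimum must live in the region $I_L\geq\sqrt{2/(9s)}$, so in particular $I^*_L\geq\sqrt{2/(9s)}$. Combined with the corollary's hypothesis $I^*_L\leq\sqrt{2/(9s)}$, this forces $I^*_L=\sqrt{2/(9s)}$. (One can cross-check with the first case, $I_L\leq\sqrt{2/(9s)}$, of the proof of Theorem~\ref{theorem:stage 1}: there the payoff is $(s-\gamma)I_L^2$, which under the standing assumption $s\geq\gamma$ is nondecreasing, and Assumption~\ref{assumption:tie_IL} resolves ties in favor of the largest $I_L$, namely $\sqrt{2/(9s)}$.)

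Next I would obtain Stage~2 by invoking Theorem~\ref{theorem:stage2}. Since $I^*_L=\sqrt{2/(9s)}$ falls in the second branch of \eqref{equ:optimumI_F} (the boundary case $I_L\leq\sqrt{2/(9s)}$), we immediately get $I^*_F=I^*_L$. Then Stage~3 follows by substituting $I_F=I_L$ into the NE prices of Theorem~\ref{theorem:stage 3}:
\begin{equation*}
p^*_F=c+\frac{I^*_L+I^*_L}{3I^*_L}=c+\tfrac{2}{3},\qquad p^*_L=c+\frac{2I^*_L-I^*_L}{3I^*_L}=c+\tfrac{1}{3}.
\end{equation*}
Finally, Stage~4 follows by substituting these prices and $I^*_F=I^*_L$ into \eqref{equ:EUs_stage4}:
\begin{equation*}
n^*_L=\frac{I^*_L-I^*_L}{I^*_L}+p^*_F-p^*_L=0+\tfrac{2}{3}-\tfrac{1}{3}=\tfrac{1}{3},\qquad n^*_F=1-n^*_L=\tfrac{2}{3},
\end{equation*}
which lies in $(0,1)$ and is therefore consistent with the interior-equilibrium regime used in Stage~3.

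There is essentially no analytic obstacle here; the work is bookkeeping. The only subtle point worth being careful about is the Stage~1 step: one must argue that the hypothesis $I^*_L\leq\sqrt{2/(9s)}$ together with Theorem~\ref{theorem:stage 1}'s restriction $I^*_L\geq\sqrt{2/(9s)}$ forces equality, and invoke Assumption~\ref{assumption:tie_IL} to rule out any smaller maximizer that would give the same payoff on the flat segment when $s=\gamma$. Everything else is a direct plug-in into formulas already proved.
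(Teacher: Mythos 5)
Your proof is correct and follows essentially the same route as the paper, which simply states that the argument mirrors the proof of Corollary~\ref{corollary:outcome_IL>}: Stage~1 from Theorem~\ref{theorem:stage 1} combined with the hypothesis, Stage~2 from Theorem~\ref{theorem:stage2}, and Stages~3 and~4 by substitution into \eqref{equ:optimump} and \eqref{equ:EUs_stage4}. Your write-up is in fact more explicit than the paper's (notably the observation that Theorem~\ref{theorem:stage 1} forces $I^*_L\geq\sqrt{2/(9s)}$, so the hypothesis pins down equality), and all the substitutions check out.
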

Proof is similar to the proof of the previous corollary. 

In this case,  SP$_F$ reserves all available resources, and  the investment level of SP$_L$ (which is equal to the number of resources reserved by SP$_F$) is a decreasing function of the fee per resources, i.e. $s$. SP$_F$ quotes a higher price for EUs in comparison to SP$_L$. In spite of the higher price, SP$_F$ would be able to attract more EUs given the better investment level in comparison to SP$_L$ which translates into a better preference for EUs. 

\edit{Note that as mentioned in Remark~\ref{remark:price}, the prices of (and subsequently the fraction of EUs with)  SPs depend on the relative investment level of SPs. Thus, since in outcome A, the relative investment is fixed, i.e. $\frac{I_F}{I_L}=1$, the prices (and subsequently the fraction of EUs) are constant with respect to the investment values and subsequently $s$. Moreover, note that although SP$_F$ leases all the new resources of SP$_L$, SP$_L$ is still able to attract $\frac{1}{3}$ of EUs. This is because EUs obtain benefits associated with SPs regardless of their strategies, as mentioned in Model (Section~\ref{section:model}) in discusson after \eqref{equ:t_F} and \eqref{equ:t_L}.}

We  calculate $\pi_L$ and $\pi_F$ using Corollary~\ref{corollary:outcome_IL<} and \eqref{equ:payoffL}:
\begin{equation}
\begin{aligned}
\pi^*_L&=\frac{1}{3}-\frac{2 \gamma}{9 s}\qquad \& \qquad \pi^*_F&=\frac{2}{9}
\end{aligned}
\end{equation}
Thus, the payoff of SP$_L$ would be higher than the payoff of SP$_F$, i.e. $\pi^*_L>\pi^*_F$, if and only if $s>2\gamma$. In this case, although in comparison to SP$_F$, SP$_L$ offers her service to EUs with lower price and  attracts a lower number of EUs, she can still obtain a higher payoff through the per resource fee that she collects from SP$_F$. Thus, SP$_L$ leases the resources to SP$_F$ and instead generates revenue through the fee she charges to SP$_F$.

\subsection{Numerical Results  and Discussions}  \label{section:numerical}

In this section, we use numerical simulations (i) to determine whether and under what conditions the outcomes in Corollaries~\ref{corollary:outcome_IL>} (Outcome A) and \ref{corollary:outcome_IL<} (Outcome B) would emerge, and (ii)  to provide insights for results under different parameters of the model. For all results, we consider $c=1$.\footnote{Note that the choice of $c$ barely affects the results. It may only shift some of the results (e.g. the price charged to EUs) by only a constant.}

\begin{figure}[t]
	\begin{subfigure}{.25\textwidth}
		\centering
		\includegraphics[width=\linewidth]{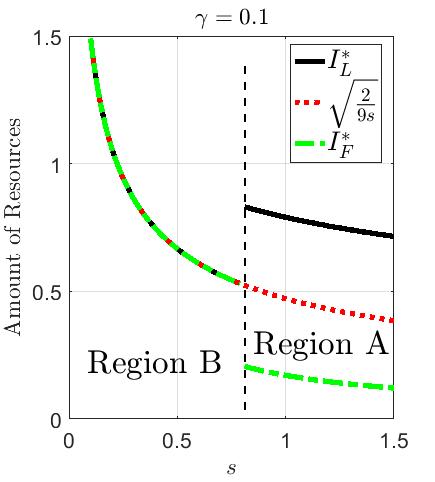}
		%	\caption{}
		\label{figure:resources_gamma0.1}
	\end{subfigure}%
	\begin{subfigure}{.25\textwidth}
		\centering
		\includegraphics[width=\linewidth]{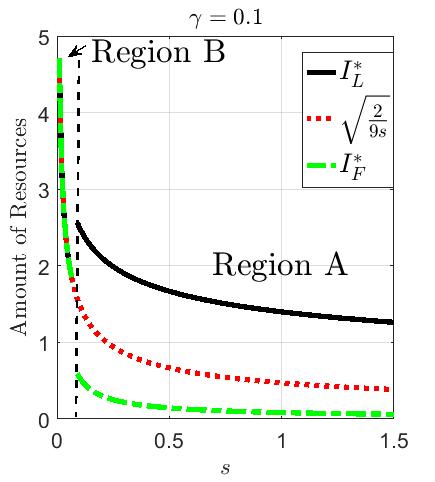}
		%	\caption{}
		\label{fig:resources_gamma0}
	\end{subfigure}
	\caption{Investment decisions of SPs vs. per resource fee, $s$. Note that in Region B, $I^*_F=I^*_L=\sqrt{\frac{2}{9s}}$.}\label{figure:resources}
\end{figure}

In Figure~\ref{figure:resources}, we plot the optimum level of investment of SP$_L$ ($I^*_L$), the number of resources that SP$_F$ reserves ($I^*_F$), and the minimum optimum level of investment by SP$_L$ ($\sqrt{\frac{2}{9s}}$), when $\gamma=0.1$ (left) and $\gamma=0.01$ (right) (recall that,  in \eqref{equ:payoffL}, $\gamma$ is the marginal cost of investment.). The discontinuities  are because of the transition of the outcome of the game from the Outcome B to Outcome A. Thus, when $s$ is smaller than a threshold, SP$_L$ invests minimally ($I^*_L=\sqrt{\frac{2}{9s}}$), and SP$_F$ leases all the new resources of SP$_L$. On the other hand, after that threshold, SP$_L$ invests more than the minimum, and SP$_F$ leases only a portion of the new resources invested by SP$_L$. The amount that SP$_F$ leases in this region is smaller than the amount she leases in the region of Outcome B.  Results also reveal that   $I^*_L$ is strictly decreasing  in  each outcome with $s$. Thus, the higher the fee per resources that SP$_F$ pays to SP$_L$, the lower would be the number of resources that SP$_F$ reserves (as revealed by results in Theorem~\ref{theorem:stage2}), and  the lower  would be (in each region) the investment of SP$_L$.  Also, when the marginal cost of investment is small ($\gamma=0.01$),  SP$_L$ investment is  more than the case that $\gamma=0.1$. Thus, as expected, the higher the investment cost, the lower would be the investment.

%\begin{figure}[t]
%	\centering
%	\includegraphics[width=0.4\textwidth]{resources_gamma01}
%	\caption{Investment decisions of SPs vs. per resource fee, $s$}
%	\label{figure:resources_gamma0.1}
%\end{figure}

%Results also confirm the intuitions presented in Remark~\ref{remark:optimumIL} (Change!**) that for sufficiently small $s$ and $\gamma$, the optimum level of investment by SP$_L$ would be equal to $\hat{I}_L$ (introduced in Theorem~\ref{theorem:stage 1}) which is  higher than $\sqrt{\frac{2}{9s}}$ (Outcome A). Numerical results also reveal that for $\gamma>0.12$, Outcome A would not occur even for small $s$. Thus, the outcome in which $\hat{I}_L$ is the optimum level of investment only occurs for a small set of parameters. Therefore, for a wide range of parameters, we expect Outcome B to be the outcome of the game. 

In Figure~\ref{figure:prices}, we plot the pricing decisions of SPs for EUs, i.e. $p^*_L$ and $p^*_F$, for $\gamma=0.1$ (left) and $\gamma=0.01$ (right). Similar to Figure~\ref{figure:resources},  the discontinuities in the figures are because of the transition of the outcome of the game from Outcome B to A.  Note that results in Corollary~\ref{corollary:outcome_IL<} revealed that in Outcome B (when $s$ is sufficiently small), $p^*_F$ and $p^*_L$ are constant (given that c is a constant) independent of $\gamma$ and $s$ and $p^*_F>p^*_L$. On the other hand, in Region A, SP$_L$ would be able to charge a higher price than SP$_F$ (Figure~\ref{figure:prices}-right). This is due to the fact that in Region A, SP$_L$  leases only a portion of the new resources to SP$_F$ (Figure~\ref{figure:resources}) and uses the rest for her own EUs. Furthermore, in this region, $p^*_F$ is decreasing and $p^*_L$ is increasing with respect to $s$. Note that the number of resources for both SP$_L$ and SP$_F$ is decreasing with $s$. However, this decreasing behavior affect the price that SPs  charge to EUs differently: makes SP$_F$ to decrease her price and enables SP$_L$ to increase her price.

%The reason is that for $\gamma$ small, SP$_L$ will invest more ($I^*_L$ small). We also stated in Section~\ref{section:outcome} after Corollary~\ref{corollary:outcome_IL>} that  $p^*_F$  and $p^*_L$ are decreasing and increasing with $I^*_L$, respectively. Thus, small investment cost yields a higher investment by SP$_L$ and as a result a higher price for EUs of this SP. 

\begin{figure}[t]
	\begin{subfigure}{.25\textwidth}
		\centering
		\includegraphics[width=\linewidth]{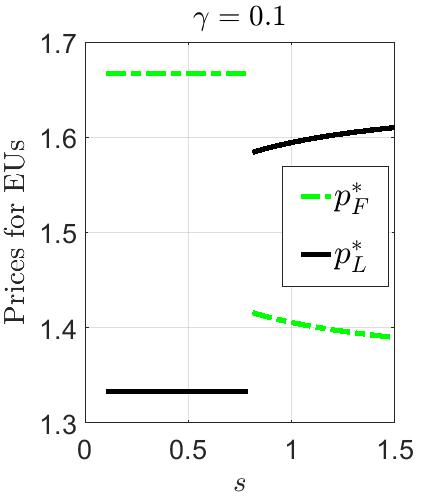}
		%	\caption{}
		\label{fig:prices_gamma01_2}
	\end{subfigure}%
	\begin{subfigure}{.25\textwidth}
		\centering
		\includegraphics[width=\linewidth]{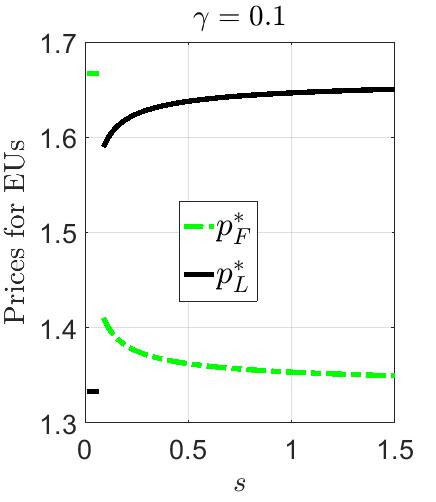}
		%	\caption{}
		\label{fig:prices_gamma0}
	\end{subfigure}
	\caption{Pricing decisions of SPs for EUs versus $s$}\label{figure:prices}
\end{figure}

%Also note that  in Outcome A,  depending on $\gamma$, prices can be increasing or decreasing with $s$: If the marginal cost of investment is extremely small, then $p^*_F$ is increasing with $s$ and $p^*_L$ is decreasing with $s$ (Figure~\ref{figure:prices}-right). The opposite is true when $\gamma$ is not small  (Figure~\ref{figure:prices}-left). To describe the reason behind this result, note that from Corollary~\ref{corollary:outcome_IL>}, $p^*_L$ and $p^*_F$ are increasing and decreasing, respectively, with respect to both $s$ and $I^*_L$. On the other hand, $I^*_L$ is itself decreasing with $s$ (Figure~\ref{figure:resources}). Thus, as $s$ increases two factors one increasing (s) and one decreasing ($I^*_L$) affect the prices. Numerical results yield that when $\gamma$ is extremely small (thus, $I^*_L$ is large), the rate of change in $I^*_L$ with $s$ dominates the rate of change in $s$. Thus, as $s$ increases, $p^*_F$ increases and $p^*_L$ decreases. 

Note that by Corollaries \ref{corollary:outcome_IL>} and \ref{corollary:outcome_IL<}, the dependency of $n^*_L$ and $n^*_F$ to parameters of the model is  similar to the dependency of $p^*_L$ and $p^*_F$. The only difference is the exclusion of $c$ from the expressions.

\subsection{The Bargaining Framework}\label{section:bargaining}
\edit{In this section, we complement the previous non-cooperative framework by considering a hybrid of cooperative  and non-cooperative scenarios. In this case, the MVNO and the MNO jointly decide on the investments ($I_L$, $I_F$), but individually decide on the prices for EUs. They split the profit (by selecting the per resource fee $s$) based on the number of EUs  and profit  each acquire.}

In particular, for the cooperative part, we formulate a bargaining framework, and use  the \emph{Nash Bargaining Solution} (NBS) to characterize $I_F, I_L$, and $s$:

 \begin{definition}\emph{Nash Bargaining Solution (NBS):} is the unique solution (in our case the tuple of the payoffs of SP$_L$ and SP$_F$) that satisfies the four ``reasonable" axioms (Invariant to affine transformations, Pareto optimality, Independence of irrelevant alternatives, and Symmetry) characterized in \cite{osbornebargaining}.
\end{definition}

Note that after characterizing the NBS, the access fee for EUs ($p_L$ and $p_F$) and the split of EUs between SPs can be determined from Theorem~\ref{theorem:stage 3} and equation \eqref{equ:EUs_stage4}.

%In this section, we will investigate this bargaining game. Thus, in contrast to the previous section, the decision makers cooperatively select an equilibrium outcome.

In this cooperative framework, a collusive outcome may occur in which both SPs jointly decrease the  amount of investment while maintaining a specific relative difference, i.e. maintaining the differentiation between them that yields the best outcome while minimizing the investment. The reason is that  EUs decide based on the ratio of the investment by SPs and not the absolute values, i.e. EUs decide relatively. Thus, a regulatory intervention may be desirable. Therefore, for the rest of this section, we consider a regulator that enforces a minimum investment level of $I_{L,min}$ on SP$_L$, i.e. $I_L\geq I_{min,L}$.

Let $0\leq w\leq 1$ be the relative bargaining power of the SP$_F$ over SP$_L$: the higher $w$, more powerful is the bargaining power of SP$_F$. In addition, $\pi_{L}$ and $\pi_{F}$ denote the payoff of the CP and SP respectively, and $d_{L}$ and $d_{F}$ denote the payoff each decision maker receives in case of disagreement, i.e. \emph{disagreement payoff}. In order to characterize the disagreement payoffs, we assume that in case of disagreement, the SPs will play the sequential game whose outcome characterized previously in Section~\ref{section:outcome}. 
 % In the case of disagreement, SPs receive the payoff that they can fetch in the non-cooperative framework characterized in Section~\ref{section:analysis}. We formulate  the interaction between SP$_L$ and  SP$_F$ as a bargaining game, and 
 Note that the value of the disagreement payoff for the SPs can have an effect similar to the bargaining power ($w$ for SP$_F$ and $1-w$ for the SP$_L$). 

Using standard game theoretic results in \cite{osbornebargaining}, the pair of $\pi^*_{L}$ and $\pi^*_{F}$ can be identified as the Nash bargaining solution of the problem if and only if it solves the following optimization problem:
\begin{equation}\label{equ:nash_solution_1}
\begin{aligned}
&\max_{\pi_L, \pi_F}{(\pi_F-d_F)^{w}(\pi_L-d_L)^{1-w}}\\
& \text{s.t.}\\
&\qquad (\pi_L,\pi_F)\in U\\
& \qquad (\pi_L,\pi_F)\geq (d_L,d_F)
\end{aligned}
\end{equation}
where $U$ is the set of feasible payoff pairs, and $\pi_F$ and $\pi_L$ are of the form of \eqref{equ:payoffF} and \eqref{equ:payoffL}. 

Since SPs bargain over $s$, $I_L$, and $I_F$, the maximization~\eqref{equ:nash_solution_1}, should be over $s$, $I_L>I_{min,L}$, and $I_F\leq I_L$.  Thus the maximization is,
\begin{equation}\label{equ:nash_solution_2}
\begin{aligned}
&\max_{s, I_L, I_F}{(\pi_{F}-d_{F})^{w}(\pi_{L}-d_{L})^{1-w}}\\
& \text{s.t.}\\
&\qquad 0\leq I_F\leq I_L \\
&\qquad I_L\geq I_{min,L}\\
&\qquad \pi_{L}\geq d_{L}\\
&\qquad \pi_{F}\geq d_{F}
\end{aligned}
\end{equation}

We define $s^*$, $I^*_L$, and $I^*_F$ to be the optimum solution of \eqref{equ:nash_solution_2}. Note that these parameters combined with results in Theorem~\ref{theorem:stage 3} and Equation~\eqref{equ:EUs_stage4} characterize the optimum  division of profit ($\pi^*_L$ and $\pi^*_F$) and thus the NBS.

 In addition, we define the \emph{aggregate excess profit} to be the additional profit yielded from the cooperation in the bargaining framework:
\begin{definition}\emph{Aggregate Excess Profit ($u_{excess}$):} The aggregate excess profit is defined as follows:

	\begin{equation}\label{equ:uexcess}
	\begin{aligned}
	u_{excess}&=\pi_{L}-d_{L}+\pi_{F}-d_{F}\\
	&=n_L(p_L-c)-\gamma I^2_L-d_{L}+n_F(p_F-c)-d_{F}
\end{aligned}	
	\end{equation}
	
\end{definition}
Note that $u_{excess}$ is independent of $s$ and is only a function of $I_L$ and $I_F$ (since $n_L$, $n_F$, $p_L$, and $p_F$ are functions of $I_L$ and $I_F$). We define $u^*_{excess}= u_{\text{excess}}|_{I_L=I_L^*\ \&\ I_F=I_F^*}$. Note that the bargaining would only occur if $u^*_{excess}>0$, i.e.  the framework creates additional joint profit that can be divided between the SPs.  Thus, henceforth, we characterize the NBS for the case that $u^*_{excess}>0$. We use $u_{excess}$ in the following theorem in which we characterize the NBS: 

\begin{theorem}\label{theorem:bargaining}\label{theorem:bargain_regulator}
\edit{	If  $u^*_{\text{excess}}> 0$, the optimum solution of the optimization \eqref{equ:nash_solution_2} is $(I_F^*,I_L^*,s^*)$ in which $I^*_L=I_{min,L}>0$, $I^*_F=I_{min,L}$ or $I^*_F=0$, and  $s^*$ is:
	\begin{equation}\label{equ:optimum_p}
	%\label{equ:p_optimum_reform}
	\begin{aligned}
	s^*&=\frac{1}{{I^*_F}^2}  \Big{(}n^*_F(p^*_F-c)-d_{F}-wu^*_{excess}\Big{)}
	\end{aligned}
	\end{equation}
	where $n^*_{F}=n_{F}|_{I_L=I^*_L,I_F=I^*_F}$, $p^*_F=p_F|_{I_L=I^*_L,I_F=I^*_F}$.
	}
\end{theorem}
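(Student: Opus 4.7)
The key observation is that the parameter $s$ is a pure transfer: from \eqref{equ:payoffF} and \eqref{equ:payoffL}, $\pi_L + \pi_F = n_L(p_L-c) + n_F(p_F-c) - \gamma I_L^2$, which depends only on $(I_L,I_F)$. Thus for any fixed $(I_L,I_F)$ the Nash product optimization over $s$ reduces to the classical splitting problem
\begin{equation*}
\max_{x_L,x_F}\; x_F^{\,w}\,x_L^{\,1-w}\quad\text{s.t.}\quad x_L+x_F=u_{excess}(I_L,I_F),\ \ x_L,x_F\ge 0,
\end{equation*}
where $x_j=\pi_j-d_j$. A standard computation (set the derivative of $w\log x_F+(1-w)\log(u_{excess}-x_F)$ to zero) yields $x_F=w\,u_{excess}$, $x_L=(1-w)u_{excess}$, with optimal value $w^w(1-w)^{1-w}u_{excess}$, which is increasing in $u_{excess}$. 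Hence the outer problem reduces to maximizing $u_{excess}(I_L,I_F)$ over $\{0\le I_F\le I_L,\ I_L\ge I_{min,L}\}$.

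Next I substitute the Stage 3 prices and Stage 4 shares from Theorem~\ref{theorem:stage 3} and \eqref{equ:EUs_stage4} into \eqref{equ:uexcess}. Since $n_L^*(p_L^*-c)=\bigl((2I_L-I_F)/(3I_L)\bigr)^2$ and $n_F^*(p_F^*-c)=\bigl((I_L+I_F)/(3I_L)\bigr)^2$, setting $r=I_F/I_L\in[0,1]$ gives
\begin{equation*}
u_{excess}+d_L+d_F \;=\; \frac{(2-r)^2+(1+r)^2}{9}\;-\;\gamma I_L^2 \;=\; \frac{5-2r+2r^2}{9}-\gamma I_L^2.
\end{equation*}
The function $q(r)=5-2r+2r^2$ is a convex parabola on $[0,1]$ with $q(0)=q(1)=5$, so its maximum on $[0,1]$ is attained exactly at the two endpoints $r=0$ and $r=1$, i.e.\ at $I_F^*=0$ or $I_F^*=I_L$. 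Given this, $u_{excess}$ is strictly decreasing in $I_L$ (through the $-\gamma I_L^2$ term while the ratio is fixed at an optimal endpoint), so the $I_L$-maximizer on $\{I_L\ge I_{min,L}\}$ is $I_L^*=I_{min,L}$, which also forces $I_F^*\in\{0,I_{min,L}\}$. This yields the investment part of the claim.

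Finally, to extract $s^*$, I use that the NBS requires $\pi_F^*-d_F=w\,u^*_{excess}$. Solving \eqref{equ:payoffF} at the optimum,
\begin{equation*}
n_F^*(p_F^*-c)-s^*(I_F^*)^2-d_F \;=\; w\,u^*_{excess}
\end{equation*}
gives $s^*=\bigl(n_F^*(p_F^*-c)-d_F-w\,u^*_{excess}\bigr)/(I_F^*)^2$, matching \eqref{equ:optimum_p}. I would also verify that all imposed constraints (individual rationality $\pi_L\ge d_L$, $\pi_F\ge d_F$, and $0\le I_F^*\le I_L^*$) are satisfied: the rationality constraints hold by the Nash-product splitting formula whenever $u^*_{excess}\ge 0$, which is exactly the hypothesis.

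\textbf{Main obstacle.} The only delicate step is the separation of $s$ from $(I_L,I_F)$: one must argue that restricting to $s$ yielding interior prices (so that the Stage 3 formulas from Theorem~\ref{theorem:stage 3} apply) is without loss of optimality, and that the feasibility/rationality constraints in \eqref{equ:nash_solution_2} do not bind before the Nash split. A secondary subtlety is the boundary case $I_F^*=0$, where the formula for $s^*$ becomes vacuous (any $s$ is optimal since $sI_F^2=0$); this should be flagged as a degenerate alternative NBS rather than a genuine obstruction.
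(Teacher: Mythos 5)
Your proposal is correct and follows essentially the same route as the paper: decouple the transfer $s$ via the standard Nash-product split (giving $\pi_F-d_F=w\,u_{excess}$ and reducing the problem to maximizing $u_{excess}$), observe that $u_{excess}$ depends on $I_F$ only through the ratio $t=I_F/I_L$ and is convex in that ratio (so the optimum is at $t=0$ or $t=1$, both giving equal value), and that the residual $-\gamma I_L^2$ dependence forces $I_L^*=I_{min,L}$. Your explicit flagging of the degenerate $I_F^*=0$ case, where $s^*$ is undetermined, is a point the paper's proof does not address, but otherwise the two arguments coincide.
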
	
	\begin{remark}\label{remark:bargaining}
	The theorem characterizes $s^*$,$I^*_L$, and $I^*_F$ which directly lead to the NBS (using results of Theorem~\ref{theorem:stage 3}, and Equations \ref{equ:EUs_stage4}, \eqref{equ:payoffF} and \eqref{equ:payoffL}), i.e. $(\pi^*_{L},\pi^*_{F})$. To find the NBS before splitting the profit, SPs  cooperatively maximize the aggregate excess profit, $u_{excess}$ by solving the maximization problem ($\max_{I_F,I_L} u_{excess}$). Note that solution of this optimization is a collusive outcome in which both SPs decreases the amount of investment while maintaining a specific relative difference. Subsequently, they decide the split of the additional profit, i.e. the  payment paid to SP$_L$ by SP$_F$ ($s^* {I^*_F}^2$), based on \eqref{equ:optimum_p} which depends on the bargaining power each has ($w$ and $1-w$). 
\end{remark}
	\begin{proof}
	Note that	$d_{CP}$ and $d_{SP}$ are independent of $I_L$, $I_F$, and $s$. In addition, $u_{excess}$ is independent of $s$, and is only a function of $I_L$ and $I_F$. Thus, for a given $s$, it can be shown that (please refer to  equation (2) of \cite{bargainingsplit} for the proof), the optimum value of $s$ is such that:
	\begin{equation}\label{equ:optimum}
	\frac{\pi_{F}-d_{F}}{w}=\frac{\pi_L-d_{L}}{1-w}
	\nonumber
	\end{equation}
	if the solution for $s$ satisfies other constraints. Thus, by plugging the expressions of $\pi_F$ and $\pi_L$, \eqref{equ:payoffF} and \eqref{equ:payoffL}, in \eqref{equ:optimum}, and solving for $s$ yields the candidate optimum $s$:
	\begin{equation}\label{equ:optimum_p_2}
		\footnotesize
	\begin{aligned}
	s^*&=\frac{1}{{I_F}^2} \bigg{(}(1-w)\Big{(}n_F(p_F-c)-d_F\Big{)}-w\Big{(}n_L(p_L-c)-\gamma I_L^2-d_L\Big{)}\bigg{)}\\
	&=\frac{1}{{I_F}^2}  \Big{(}n_F(p_F-c)-d_{F}-wu_{excess}\Big{)}
	\end{aligned}
	\end{equation}
	\normalsize
	Substituting \eqref{equ:optimum_p_2} in the objective function of \eqref{equ:nash_solution_2} and using \eqref{equ:payoffF} and \eqref{equ:payoffL}  yield the new objective function:
	\begin{equation}
	w^w (1-w)^{1-w} u_{excess}
	\nonumber
	\end{equation}
	\normalsize
	Substituting \eqref{equ:optimum_p_2}, \eqref{equ:payoffF}, and \eqref{equ:payoffL}  in the constraint $\pi_{L}\geq d_{L}$ (similar to the previous step), yields the new constraint $(1-w)u_{excess}\geq 0$. Similar substitutions for $\pi_{F}\geq d_{F}$ yields $wu_{excess}\geq 0$. Thus, the optimization can be written as,
	\begin{equation}
	\small 
	\begin{aligned}
	&\max_{I_F,I_L} u_{excess}\\
	& \text{s.t.}\\
	&\qquad I_L>I_{min,L}\\
	&\qquad 0\leq I_F\leq I_L\\
	&\qquad u_{excess}\geq 0
	\end{aligned}
	\end{equation}
	\normalsize

Now, we rewrite the expression of $u_{\text{excess}}$ by using the definition of $u_{\text{excess}}$ \eqref{equ:uexcess}, the expressions for $p_L$ and $p_F$	in Theorem~\ref{theorem:stage 3}, and the expressions for $n_L$ and $n_F$ \eqref{equ:EUs_stage4}. First, by plugging the expressions of   $p_L$ and $p_F$	from Theorem~\ref{theorem:stage 3} in \eqref{equ:EUs_stage4}, we can find $n_L$ and $n_F$:
$$
\begin{aligned}
n_L=\frac{2I_L-I_F}{3I_L}\qquad \& \qquad n_F=\frac{I_L+I_F}{3I_L}
\end{aligned}
$$

Note that since $I_L\geq I_F$, $n_L>0$. In addition, $I_F>0$ yields $n_L<1$. Now, plugging these expressions and the expressions of $p_L$ and $p_F$ from Theorem~\ref{theorem:stage 3} in \eqref{equ:uexcess}, we have:
\begin{equation}
u_{\text{excess}}=\bigg{(}\frac{2I_L-I_F}{3I_L}\bigg{)}^2-\gamma I^2_L-d_{L}+\bigg{(}\frac{I_L+I_F}{3I_L}\bigg{)}^2-d_{F}
\end{equation} 
Consider $t=\frac{I_F}{I_L}$, then:
\begin{equation}\label{equ:uexcess_ilif}
u_{\text{excess}}=\bigg{(}\frac{2-t}{3}\bigg{)}^2-\gamma I^2_L-d_{L}+\bigg{(}\frac{1+t}{3}\bigg{)}^2-d_{F}
\end{equation} 

First, we argue that $I^*_L=I_{min,L}$. Suppose not. Then $I^*_L>I_{min,L}$. Consider $I^*_F$ to be the optimum solution of $I_F$ in this case. Take, $\hat{I}_L=I^*_{min,L}$ and $\hat{I}_F=I^*_F\frac{I_{min,L}}{I^*_L}$. Note that $t=\frac{I^*_F}{I^*_L}=\frac{\hat{I}_F}{I_{min,L}}$.  Thus, using \eqref{equ:uexcess_ilif}, since $t$  is constant and $I_L$ is lower than before, $u_{\text{excess}}$ would be higher with $\hat{I}_F$ and $I_{min,L}$ than with $I^*_F$ and $I^*_L$. This contradicts with $I^*_L$ and $I^*_F$ being optimum solutions. Thus, $I^*_L=I_{min,L}$. 

Now that we determined $I^*_L$, it is left to determine $t$. Note that the feasible set of $t$ is $0\leq t\leq 1$. From \eqref{equ:uexcess_ilif}, note that $\frac{d^2 u_{\text{excess}}}{d t^2}>0$.  Thus, $u_{excess}$  is convex, and the solution of the maximization of $\max_{0\leq t\leq 1} u_{excess}$  lie in the boundaries, i.e. $t^*=0$ or $t^*=1$. Plugging these values of $t$ in \eqref{equ:uexcess_ilif} yields the same payoff. Thus, $I^*_F=0$ or $I^*_F=I_{min,L}$.

These solution are the NBS only if they yield $u^*_{\text{excess}}>0$. If not, bargaining would not occur. This completes the proof. 

\end{proof}

\begin{remark} The price per sponsored resources \eqref{equ:optimum_p} is a decreasing function of $w$ , i.e. the bargaining power of SP$_F$: the higher the bargaining power of SP$_F$ the lower the  payment paid to SP$_L$. \footnote{Note that the framework allows for positive or negative payments. However, results reveal that for all the numerically tested parameters, this payment is positive.}
%It follows from \eqref{equ:optimum_p} that there exists a threshold on $w$, $w_t=\frac{n^*_F(p^*_F-c)-d_{F}}{u^*_{excess}}$, such that  when $w>w_t$, $s^*<0$, when $w<w_t$, $s^*>0$, and when $w=w_t$, $s^*=0$. In other words, for SP$_F$ with a bargaining power higher than the threshold $w_t$, the flow of money is reversed, and the SP$_L$ pays SP$_F$. This counter-intuitive case occurs  either due to a high bargaining power of SP$_F$ (high $w$), or in the scenario that SP$_L$ gains significantly more than SP$_F$ from the cooperative scenario ($u^*_{excess}>>n^*_F(p^*_F-c)-d_{F}$, i.e. low $w_t$). For example, an SP$_F$ for which EUs have strong preference, and she can attract EUs even without much investment might be reluctant to cooperate with SP$_L$ unless  SP$_L$ pays some of the additional profit to it.   
\end{remark}

\section{EUs with Outside Option}\label{section:generalization}
\edit{In the previous section, we considered a hotelling model with full market  coverage, i.e. when EUs should choose one of the two SPs. Thus, there is no outside option in the hotelling model. In this section, we consider the demand of SPs to be generated both through the hotelling model and also a demand function. The demand function captures two scenarios. It models (i) attrition in the number of EUs when the investment or price of  both SPs are not desirable enough, i.e. the effect of an outside option for EUs of hotelling model, and (ii) an exclusive additional customer base for each of the SPs to draw from depending on the investments and prices they offer.}

In Section~\ref{section:generalization_model}, we present the new model. In Section~\ref{section:generalziation_SPNE}, we characterize the SPNE strategies. We summarize the SPNE outcome in section~\ref{section:generalization_outcome}. In Section~\ref{section:generalization_bargaining}, we investigate the bargaining framework for this case. Finally, we provide numerical results in Section~\ref{section:generalization_simulation}.

\subsection{Model} \label{section:generalization_model}To model the outside option, we consider a demand function for each SP to be added to the demand from the hotelling model. Thus, 
\begin{definition}\label{definition:new_demand}
The fraction of EUs with each SP is
\begin{equation}\label{equ: demand}
\begin{aligned}
&\tilde{n}_{L}=n_{L}+\varphi_{L}(p_{L}, I_{L}),\\
&\tilde{n}_{F}=n_{F}+\varphi_{F}(p_{F}, I_{F}),
\end{aligned}
\end{equation}
where
\begin{equation}
\varphi_{L}(p_{L}, I_{L})=k-p_{L}+b(I_{L}-I_{F}),\footnote{Note that the cofficient of $p_L$ is normalizd to one.}\quad \varphi_{F}(p_{F}, I_{F})=k-p_{F}+bI_{F}
\end{equation}
and $k$ and $b$ are constants.
\end{definition}
Note that $\varphi_L(.,.)$ and $\varphi_F(.,.)$ can be positive or negative. A positive value denotes attracting EUs other than those considered in the hotelling model, and presumably from an exclusive additional customer base, and a negative value denotes losing some of the EUs considered in the hotelling model to an outside option, i.e. attrition in the number of EUs. Now, we proceed to characterize the SPNE using backward induction.
\subsection{SPNE Analysis}
\label{section:generalziation_SPNE}
We characterize the SPNE using backward induction starting from Stage 3:
\subsubsection*{Stage 3}
In this stage, $SP_{L}$ and $SP_{F}$ maximize their utilities by determining their access prices for EUs, $p_{L}$ and $p_{F}$, respectively. We mainly focus on interior Nash equilibria, i.e. $0<n_F<1$ and $0<n_L<1$. Using Definition~\ref{definition:new_demand}, \eqref{equ:t_L}, \eqref{equ:t_F}, and \eqref{equ:EUs_stage4} the payoffs of SPs are:
\begin{equation}\label{equ:stage 3}
\small
\begin{aligned}
\pi_{F}=&(t_{L}+k+p_{L}-2p_{F}+bI_{F})(p_{F}-c)-sI_{F}^{2}\\
\pi_{L}=&(t_{F}+k+p_{F}-2p_{L}+bI_{L}-bI_{F})(p_{L}-c)+sI_{F}^{2}-\gamma I_{L}^{2}
\end{aligned}
\end{equation}
In the following theorem, we characterize the unique interior NE:
\begin{theorem}\label{theory pl pf}
For given $I_{F}$ and $I_{L}$, the interior Nash strategies of stage 3 are unique, and are:
\begin{equation}\label{equ: price}
\begin{aligned}
p_{L}^{*}=&\frac{1}{15}+\frac{2c}{3}+\frac{k}{3}+\frac{t_{F}}{5}-\frac{b}{5}I_{F}+\frac{4b}{15}I_{L},\\
p_{F}^{*}=&\frac{1}{15}+\frac{2c}{3}+\frac{k}{3}+\frac{t_{L}}{5}+\frac{b}{15}I_{L}+\frac{b}{5}I_{F}.
\end{aligned}
\end{equation}
if and only if  $I_{L}$ satisfies:
\begin{equation}\label{equ: condition 1}
I_{L}<\frac{4}{b},
\end{equation}
\edit{If $I_L=\frac{4}{b}$, then $p^*_L$ and $p^*_F$ may constitute  corner or  interior equilibrium strategies.}
\end{theorem}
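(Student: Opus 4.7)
The plan is to exploit strict concavity of each SP's payoff in its own price on the interior regime $0 < n_L, n_F < 1$, so that an interior NE is characterized by simultaneous first-order conditions (FOCs). From \eqref{equ:stage 3}, both $\pi_F$ and $\pi_L$ have own-price second derivative $-4$, which gives uniqueness of each interior best response and reduces the problem to solving a $2\times 2$ linear system.

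First I would differentiate \eqref{equ:stage 3} and set the derivatives to zero, obtaining
\begin{equation*}
4p_F - p_L = t_L + k + bI_F + 2c, \qquad 4p_L - p_F = t_F + k + b(I_L - I_F) + 2c.
\end{equation*}
The coefficient matrix has determinant $15 \neq 0$, so the solution is unique; standard elimination, together with $t_L + t_F = 1$ to fold constants, reproduces the formulas \eqref{equ: price} exactly.

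Next I would verify interiority by substituting $p_L^*, p_F^*$ into $n_L = t_F + p_F - p_L$ (and $n_F = 1 - n_L$), using $t_L = I_F/I_L$ and $t_F = (I_L - I_F)/I_L$. After cancellation this yields
\begin{equation*}
n_L = \frac{I_L(4 - bI_L) + I_F(2bI_L - 3)}{5I_L}, \qquad n_F = \frac{I_L(1 + bI_L) + I_F(3 - 2bI_L)}{5I_L}.
\end{equation*}
Each is affine in $I_F$ on $[0, I_L]$, so its extrema lie at $I_F \in \{0, I_L\}$; evaluation at these endpoints gives $\{n_L, n_F\} = \{(4-bI_L)/5,\ (1+bI_L)/5\}$, and both values sit in $(0,1)$ exactly when $I_L < 4/b$. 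This yields the ``if'' direction by affine interpolation over $I_F \in [0, I_L]$; for the ``only if'' direction, taking $I_F = 0$ with $I_L \geq 4/b$ produces $n_L = (4-bI_L)/5 \leq 0$, so the FOC strategies cease to be interior.

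Finally I would address global optimality and the knife-edge $I_L = 4/b$. Because $\pi_L$ (resp.\ $\pi_F$) is a continuous concave piecewise-quadratic in $p_L$ (resp.\ $p_F$) -- the pieces gluing at $n_L \in \{0,1\}$ through the demand function $\varphi$ -- the unique interior stationary point is the global best response whenever it is interior, which rules out profitable corner deviations and promotes the FOC solution to a genuine NE. At $I_L = 4/b$ the formula collapses to $n_L = I_F/I_L$, which is interior for $0 < I_F < I_L$ and corner ($n_L \in \{0,1\}$) for $I_F \in \{0, I_L\}$, matching the final clause of the theorem. The hard part will be the algebraic reorganization in the third paragraph: extracting the linear-in-$I_F$ forms of $n_L$ and $n_F$ and noticing that both interiority inequalities bind simultaneously at $bI_L = 4$; I will also need to keep the distinction clear between the hotelling shares $n_L, n_F$ (which must lie in $(0,1)$) and the augmented shares $\tilde n_L, \tilde n_F$ that actually enter the payoff.
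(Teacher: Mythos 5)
Your proposal is correct and follows essentially the same route as the paper: first-order conditions giving a nonsingular $2\times 2$ linear system (determinant $15$), the own-price second derivative $-4$ for uniqueness, and substitution into the hotelling share (your expression for $n_L$ is exactly the paper's $\Psi(I_F)$), which is affine in $I_F$ so that interiority over $I_F\in[0,I_L]$ reduces to the endpoint values $(4-bI_L)/5$ and $(1+bI_L)/5$ lying in $(0,1)$, i.e.\ $I_L<4/b$, with the knife-edge $n_L=I_F/I_L$ at $bI_L=4$ matching the paper's final clause. The one departure is your closing claim that $\pi_L$ is a \emph{concave} piecewise-quadratic in $p_L$ across the saturation points of $n_L$ --- this is not actually guaranteed, since the slope of $\tilde n_L$ in $p_L$ jumps from $-2$ to $-1$ where the hotelling share saturates, so the derivative of $\pi_L$ can jump upward there --- but the paper makes no attempt to rule out corner deviations in this theorem either, and since the statement only concerns interior Nash strategies this does not affect the result.
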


\begin{proof}
In this case, every  NE by which $0\leq x_n\leq 1$, should satisfy the first order condition. Thus $p_{L}^{*}$ and $p_{F}^{*}$ should be such that:
\begin{align*}
\frac{d\pi_{L}}{dp_{L}}|_{p_{L}^{*}}=0,\quad \frac{d\pi_{F}}{dp_{F}}|_{p_{F}^{*}}=0,
\end{align*}
then
\begin{equation}\label{equ: first order condition of prices}
\begin{aligned}
4p_{L}^{*}-p_{F}^{*}&=t_{F}+k+bI_{L}-bI_{F}+2c\\
 4p_{F}^{*}-p_{L}^{*}&=t_{L}+k+bI_{F}+2c.
\end{aligned}
\end{equation}
Add the two equations on both sides, note $t_{L}+t_{F}=1$, we have:
\begin{align}\label{equ: p_L+p_F}
p_{L}^{*}+p_{F}^{*}=\frac{1}{3}+\frac{2k}{3}+\frac{4c}{3}+\frac{b I_{L}}{3}
\end{align}
Substitute (\ref{equ: p_L+p_F}) into (\ref{equ: first order condition of prices}), then:
\begin{align*}
p_{L}^{*}=&\frac{1}{15}+\frac{2c}{3}+\frac{k}{3}+\frac{t_{F}}{5}-\frac{b}{5}I_{F}+\frac{4b}{15}I_{L},\\
p_{F}^{*}=&\frac{1}{15}+\frac{2c}{3}+\frac{k}{3}+\frac{t_{L}}{5}+\frac{b}{15}I_{L}+\frac{b}{5}I_{F}.
\end{align*}

Take the second derivative of $\pi_{L}$ with respect to $p_{L}$,
\begin{align*}
\frac{d^{2}\pi_{L}}{d(p_{L}^{*})^{2}}=\frac{d^{2}\pi_{F}}{d(p_{F}^{*})^{2}}=-4<0
\end{align*}
then $p_{L}^{*}$ and $p_{F}^{*}$ are the unique maximal solutions of $\pi_{L}$ and $\pi_{F}$, respectively.

Thus, $p^*_F$ and $p^*_L$ are the unique interior NE strategies if and only if $0<x_0<1$. Substitute (\ref{equ: price}), \eqref{equ:t_L}, and \eqref{equ:t_F} into (\ref{equ:xn}) yields:
\begin{align}\label{equ: x_0}
x_{0}=\frac{4}{5}-\frac{b}{5}I_{L}+(\frac{2b}{5}-\frac{3}{5I_{L}})I_{F}\triangleq \Psi(I_{F})
\end{align}
Once $I_L$ is fixed, $\Psi(I_{F})$ would be  a linear function of $I_{F}$. Thus, $0<\Psi(I_F)<1$  for any values of $I_F$ such that $0\leq I_F\leq I_L$, if and only if $0<\Psi(0)<1$ and $0<\Psi(I_L)<1$. Note that  $\Psi(I_{L})=\frac{1}{5}+\frac{b}{5}I_{L}\in(0, 1)$ when $0<I_{L}<\frac{4}{b}$, and $\Psi(0)=\frac{4}{5}-\frac{b}{5}I_{L}\in(0, 1)$ when $0<I_{L}<\frac{4}{b}$. Note that if $I_L=\frac{4}{b}$, then $x_0$ can close any value in $[0,1]$. Thus, $p^*_L$ and $p^*_F$ may constitute  corner or interior equilibria. This completes the proof of the theorem.
\end{proof}

To simplify the subsequent expressions, we define $f(I_L)$ and $g(I_L)$ as follows:
\begin{definition}
$g(I_{L})=\frac{b}{15}I_{L}+\frac{1}{15}-\frac{c}{3}+\frac{k}{3}$, $f(I_{L})=\frac{1}{5I_{L}}+\frac{b}{5}>0$
\end{definition}

\subsubsection*{Stage 2} 
In this stage, $\text{SP}_{F}$ decides on the investment level, with the condition that $0\leq I_{F}\leq I_{L}$, to maximize $\pi_{F}$. Substituting (\ref{equ: price}) into (\ref{equ:stage 3}), we have the following optimization problem:

\begin{lemma}
\begin{equation}\label{equ: stage 2}
\small
\begin{aligned}
\max_{I_{F}}\quad&\pi_{F}=(2f^{2}(I_{L})-s)I_{F}^{2}+4f(I_{L})g(I_{L})I_{F}+2g^{2}(I_{L})\\
s.t\quad &0\leq I_{F}\leq I_{L}\\
&\pi_{F}\geq0
\end{aligned}
\end{equation}
\end{lemma}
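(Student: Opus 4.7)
The plan is to substitute the Stage 3 equilibrium prices from Theorem~\ref{theory pl pf} into the payoff expression in \eqref{equ:stage 3} and algebraically reduce it to the stated quadratic in $I_F$, using the shorthand functions $f(I_L)$ and $g(I_L)$ to absorb the bookkeeping.

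The key observation that keeps the calculation clean is that SP$_F$'s Stage 3 first-order condition, $\frac{d\pi_F}{dp_F}\big|_{p_F^*}=0$, reads exactly
\[
-2(p_F^*-c)+\bigl(t_L+k+p_L^*-2p_F^*+bI_F\bigr)=0,
\]
so at the Stage 3 equilibrium the effective demand $t_L+k+p_L^*-2p_F^*+bI_F$ equals $2(p_F^*-c)$. Consequently the product $(t_L+k+p_L^*-2p_F^*+bI_F)(p_F^*-c)$ collapses to $2(p_F^*-c)^2$, and the task reduces to computing $p_F^*-c$ in closed form.

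I would then read off $p_F^*-c$ directly from \eqref{equ: price}, substituting $t_L=I_F/I_L$. The constants $\tfrac{1}{15}-\tfrac{c}{3}+\tfrac{k}{3}$ together with the $\tfrac{b}{15}I_L$ term are precisely $g(I_L)$, while the remaining terms satisfy $\tfrac{t_L}{5}+\tfrac{b}{5}I_F=\tfrac{I_F}{5}\bigl(\tfrac{1}{I_L}+b\bigr)=f(I_L)\,I_F$. Hence $p_F^*-c=g(I_L)+f(I_L)\,I_F$, and plugging into $\pi_F=2(p_F^*-c)^2-sI_F^2$ expands to $(2f^2(I_L)-s)I_F^2+4f(I_L)g(I_L)I_F+2g^2(I_L)$, matching the claimed objective.

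The feasibility constraints $0\leq I_F\leq I_L$ are inherited verbatim from Section~\ref{section:model}, while $\pi_F\geq0$ is a participation condition (SP$_F$ would otherwise refuse to lease and receive zero). The main obstacle is simply keeping track of coefficients while verifying the FOC identity; without that shortcut the term-by-term coefficient matching for $t_L+k+p_L^*-2p_F^*+bI_F$ is more tedious, so I would present that identity first and fall back to direct coefficient comparison only as a sanity check.
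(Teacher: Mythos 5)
Your proposal is correct and follows essentially the same route as the paper: substitute the Stage 3 equilibrium prices into $\pi_F$, reduce the payoff to $2(p_F^*-c)^2-sI_F^2$ with $p_F^*-c=g(I_L)+f(I_L)I_F$, and expand. The only difference is that the paper verifies $t_L+k+p_L^*-2p_F^*+bI_F=2g(I_L)+2f(I_L)I_F$ by direct term-by-term substitution, whereas you obtain the same identity immediately from SP$_F$'s first-order condition — a cleaner shortcut, but the same argument.
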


\begin{proof}
From (\ref{equ:stage 3}), we have $\pi_{F}=(t_{L}+k+p_{L}-2p_{F}+bI_{F})(p_{F}-c)-sI_{F}^{2}$. Note $t_{L}=I_{F}/I_{L}$ and $t_{F}=1-t_{L}$

\noindent{\textbf{(\romannumeral1)}} First, we substitute the values in $(t_{L}+k+p_{L}-2p_{F}+bI_{F})$:
\begin{equation}
\small
\begin{aligned}
&t_{L}+k+p_{L}-2p_{F}+bI_{F}=\\
&t_{L}+k-\frac{1}{15}-\frac{2c}{3}-\frac{k}{3}+\frac{t_{F}}{5}-\frac{2t_{L}}{5}+\frac{2b}{15}I_{L}-\frac{3b}{5}I_{F}+bI_{F}=\\
&\frac{3t_{L}}{5}+\frac{t_{F}}{5}+\frac{2k}{3}-\frac{1}{15}-\frac{2c}{3}+\frac{2b}{15}I_{L}+\frac{2b}{5}I_{F}=\\&\frac{2I_{F}}{5I_{L}}+\frac{2}{15}+\frac{2k}{3}-\frac{2c}{3}+\frac{2b}{15}I_{L}+\frac{2b}{5}I_{F}
=2g(I_{L})+2f(I_{L})I_{F}.
\end{aligned}
\end{equation}
\noindent{\textbf{(\romannumeral2)}} Then, we compute $p_{F}-c$,
\begin{equation}
\small
p_{F}-c=\frac{1}{15}-\frac{c}{3}+\frac{k}{3}+\frac{I_{F}}{5I_{L}}+\frac{b}{15}I_{L}+\frac{b}{5}I_{F}=g(I_{L})+f(I_{L})I_{F}
\end{equation}

\noindent{\textbf{(\romannumeral3)}} From (\romannumeral1) and (\romannumeral2), we have $\pi_{F}(p_{F}; p_{L}, s, I_{L}, I_{F})=2(g(I_{L})+f(I_{L})I_{F})^{2}-sI_{F}^{2}=(2f^{2}(I_{L})-s)I_{F}^{2}+4f(I_{L})g(I_{L})I_{F}+2g^{2}(I_{L})$.

This completes the proof.
\end{proof}

\begin{theorem}\label{theory I_F}
For any given $s$, $I_{L}$, let $I_{F}^{0}$ to be the solution to the first order condition on $\pi_F$:
\begin{equation}
I_{F}^{0}=\frac{-2f(I_{L})g(I_{L})}{2f^{2}(I_{L})-s}
\end{equation}
Then, the unique optimal investment level of $\text{SP}_{F}$, $I_{F}^{*}$, is:
{\begin{equation}\label{equ: I_F^*}
\footnotesize
\begin{aligned}
&I_{F}^{*}=I_{F}^{0} \,\,\text{if}\quad (s, I_{L})\in\left\{s>2f^{2}(I_{L})+2f(I_{L})g(I_{L})/I_{L},\, g(I_{L})\geq0\right\}\\
&I_{F}^{*}=I_{L} \,\,\text{if}\quad
(s, I_{L})\in\{2f^{2}(I_{L})\leq s\leq2f^{2}(I_{L})+2f(I_{L})g(I_{L})/I_{L},\,\\
& g(I_{L})\geq0\}\cup\{2f^{2}(I_{L})+4f(I_{L})g(I_{L})/I_{L}\geq s,\, 2f^{2}(I_{L})>s\}
\end{aligned}
\end{equation}}
{, if $\pi_F(I^*_F;I_L)\geq 0$. If  $\pi_F(I^*_F;I_L)< 0$, then no cooperation occurs between the MNO and the MVNO.}
\end{theorem}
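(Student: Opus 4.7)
The plan is to treat the optimization in \eqref{equ: stage 2} as a one-dimensional constrained quadratic program in $I_F$, with the sign of the leading coefficient $2f^2(I_L)-s$ determining concavity. Because $\pi_F$ is a quadratic in $I_F$ with second derivative $2(2f^2(I_L)-s)$, the argument splits naturally into a concave regime ($s>2f^2(I_L)$), a convex regime ($s<2f^2(I_L)$), and a degenerate linear case ($s=2f^2(I_L)$). In each regime I would locate the maximizer on $[0,I_L]$ and then translate the resulting inequalities into the threshold conditions stated in \eqref{equ: I_F^*}.

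In the concave regime, the first-order condition $2(2f^2(I_L)-s)I_F+4f(I_L)g(I_L)=0$ gives $I_F^{0}=\dfrac{-2f(I_L)g(I_L)}{2f^2(I_L)-s}=\dfrac{2f(I_L)g(I_L)}{s-2f^2(I_L)}$. Since $f(I_L)>0$, the sign of $I_F^{0}$ is controlled by $g(I_L)$. When $g(I_L)\ge 0$, $I_F^{0}\ge 0$, so the interior maximizer is feasible precisely when $I_F^{0}\le I_L$; multiplying through by the positive quantity $s-2f^2(I_L)$ converts this into $s\ge 2f^2(I_L)+2f(I_L)g(I_L)/I_L$, yielding the first branch of \eqref{equ: I_F^*}. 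Otherwise (still with $g(I_L)\ge 0$), $I_F^{0}>I_L$ and, by concavity, $\pi_F$ is increasing on $[0,I_L]$, so $I_F^{*}=I_L$, giving the first half of the second branch.

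For the convex regime $s<2f^2(I_L)$, $\pi_F$ attains its maximum on $[0,I_L]$ at an endpoint. A direct computation gives
\begin{equation*}
\pi_F(I_L)-\pi_F(0)=I_L\bigl[(2f^2(I_L)-s)I_L+4f(I_L)g(I_L)\bigr],
\end{equation*}
so $I_F^{*}=I_L$ iff $s\le 2f^2(I_L)+4f(I_L)g(I_L)/I_L$, which is exactly the second half of the second branch of \eqref{equ: I_F^*}; otherwise $I_F^{*}=0$. The linear boundary $s=2f^2(I_L)$ is covered by continuity with the adjacent concave case when $g(I_L)\ge 0$. Finally, for the feasibility clause, I would observe that $\pi_F(0)=2g^2(I_L)\ge 0$, so the ``no cooperation'' outcome corresponds to the case where the candidate maximizer $I_F^{*}$ in \eqref{equ: I_F^*} yields $\pi_F<0$; here SP$_F$ strictly prefers to opt out ($I_F=0$ with no contract) rather than participate under a contract that loses money.

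The routine part is the arithmetic verifying the two threshold inequalities $s\gtreqless 2f^2(I_L)+2f(I_L)g(I_L)/I_L$ (from interior feasibility) and $s\gtreqless 2f^2(I_L)+4f(I_L)g(I_L)/I_L$ (from the endpoint comparison). The main obstacle is purely organizational: cleanly partitioning the $(s,I_L)$-plane according to the sign of $2f^2(I_L)-s$ and the sign of $g(I_L)$, and matching each subregion to the correct branch of \eqref{equ: I_F^*} without double-counting the boundary $s=2f^2(I_L)$ or the knife-edge where $I_F^{0}=I_L$.
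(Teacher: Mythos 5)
Your case decomposition is essentially the paper's: split on the sign of the leading coefficient $2f^{2}(I_L)-s$, use the first-order condition in the concave regime, compare endpoints in the convex regime, and handle the linear boundary separately. Your endpoint comparison $\pi_F(I_L)-\pi_F(0)=I_L\bigl[(2f^{2}(I_L)-s)I_L+4f(I_L)g(I_L)\bigr]$ is an equivalent (arguably cleaner) substitute for the paper's comparison of the symmetric axis with the midpoint $I_L/2$. Up to that point the two arguments coincide.

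There is, however, a genuine gap: your analysis leaves $I_F^{*}=0$ as the answer on part of the parameter space (the convex regime with $s>2f^{2}(I_L)+4f(I_L)g(I_L)/I_L$, and the concave and linear regimes with $g(I_L)<0$, which you never treat), whereas the theorem's statement \eqref{equ: I_F^*} contains no $I_F^{*}=0$ branch and simply excludes those regions. The paper closes this with a separate step: it observes that every subcase producing $I_F^{*}=0$ forces $g(I_L)<0$, and that with $I_F=0$ one has $t_L=0$ and $p_F^{*}-c=g(I_L)<0$, i.e.\ SP$_F$ would be pricing below cost, which cannot be an equilibrium; hence $I_F^{*}=0$ is eliminated and those $(s,I_L)$ fall under the ``no cooperation'' clause. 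Your proposal not only omits this elimination but is in tension with it: you note $\pi_F(0)=2g^{2}(I_L)\ge 0$, which would make the no-cooperation clause vacuous (a maximizer over $[0,I_L]$ can never do worse than the point $I_F=0$). The resolution is that when $g(I_L)<0$ the quadratic formula for $\pi_F$ no longer represents a feasible outcome ($n_F=2g(I_L)<0$ and $p_F<c$), so the apparent value $2g^{2}(I_L)$ is an artifact; you need the paper's pricing argument, or an equivalent feasibility check, to justify discarding $I_F=0$ and to give the no-cooperation clause content.
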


\begin{proof}
We consider different cases. First, we consider the case that $2f^{2}(I_{L})-s=0$ (Step (i)). Then, we consider the case that $2f^{2}(I_{L})-s\neq0$ and $\pi_F$ is a quadratic function of $I_F$ (Step (ii)). In Step (iii), we prove that $I^*_F\neq 0$. Combining the steps yields the result of the theorem.

\noindent{\textbf{Step (\romannumeral1)}:} If $2f^{2}(I_{L})-s=0$, $\pi_{F}$ is linear function of $I_F$, i.e.,
\begin{align*}
\pi_{F}=4f(I_{L})g(I_{L})I_{F}+2g^{2}(I_{L})
\end{align*}
Thus,
\begin{align*}
\left\{
\begin{aligned}
&I_{F}^{*}=0& \,\,&\text{if}\quad g(I_{L})<0\\
&I_{F}^{*}=I_{L}& \,\,&\text{if}\quad g(I_{L})\geq0
\end{aligned}
\right..
\end{align*}

\noindent{\textbf{Step (\romannumeral2)}:} Now, consider the case that $2f^{2}(I_{L})-s\neq0$ and $\pi_F$ is a quadratic function of $I_F$. We characterize the optimum answer in two cases: (a) if $2f^{2}(I_{L})-s>0$, and (b) if $2f^{2}(I_{L})-s<0$, $\pi_{F}(I_{F}; I_{L})$.

For the case that $\pi_{F}$ is a quadratic function, we use   the solution to the first order condition ($I_{F}^{0}$):  
\begin{align*}
\frac{d\pi_{F}}{dI_{F}}|_{I_{F}^{0}}=0\Rightarrow I_{F}^{0}=\frac{-2f(I_{L})g(I_{L})}{2f^{2}(I_{L})-s}
\end{align*}

Note that the symmetric axis of the quadratic function is $y=I_{F}^{0}$. This is the vertical line that divide the quadratic function into two identical halves. 

\noindent{\textbf{Case (ii-a)}}: If $2f^{2}(I_{L})-s>0$, then $\pi_{F}$ is convex function,. Thus, the maximal value must be obtained at the boundaries, i.e., $0$ or $I_{L}$. To find the maximum value, we compare the symmetric axis $y=I_{F}^{0}$ with  the midpoint of the two boundary points \footnote{Note that the function is symmetric around the symmetric axis. Thus, if the midpoint is at the right (respectively, left) of the symmetric axis, then the right (respectively, left) boundary yields the highest value.}. The midpoint of boundary points is $I_{L}/2$. Hence,
\begin{align*}
\left\{
\begin{aligned}
&I_{F}^{0}-\frac{I_{L}}{2}\leq0& \,\,&\text{if}\quad 2I_{L}f^{2}(I_{L})+4f(I_{L})g(I_{L})-I_{L}s\geq0\\
&I_{F}^{0}-\frac{I_{L}}{2}>0& \,\,&\text{if}\quad 2I_{L}f^{2}(I_{L})+4f(I_{L})g(I_{L})-I_{L}s<0
\end{aligned}
\right.,
\end{align*}
thus
\begin{align*}
\left\{
\begin{aligned}
&I_{F}^{*}=I_{L}& \,\,&\text{if}\quad 2I_{L}f^{2}(I_{L})+4f(I_{L})g(I_{L})-I_{L}s\geq0\\
&I_{F}^{*}=0& \,\,&\text{if}\quad 2I_{L}f^{2}(I_{L})+4f(I_{L})g(I_{L})-I_{L}s<0
\end{aligned}
\right..
\end{align*}

\noindent{\textbf{Case (ii-b}):} If $2f^{2}(I_{L})-s<0$, then $\pi_{F}$ is a concave function. Thus, the maximal value is either at $0$, $I_{L}$ or $I_{F}^{0}$. Now we  compare the symmetric axis $y=I_{F}^{0}$ with the boundary points to find the maximum value:
\begin{align*}
\footnotesize
\left\{
\begin{aligned}
&I_{F}^{0}-0<0& \,\,&\text{if}\quad g(I_{L})<0\\
&0\leq I_{F}^{0}< I_{L}& \,\,&\text{if}\quad 2I_{L}f^{2}(I_{L})+2f(I_{L})g(I_{L})-I_{L}s<0,\quad g(I_{L})\geq0\\
&I_{F}^{0}\geq I_{L}& \,\,&\text{if}\quad
2I_{L}f^{2}(I_{L})+2f(I_{L})g(I_{L})-I_{L}s\geq0,\quad g(I_{L})\geq0
\end{aligned}
\right.,
\end{align*}
Thus,
\begin{align*}
\footnotesize
\left\{
\begin{aligned}
&I_{F}^{*}=0& \,\,&\text{if}\quad g(I_{L})<0\\
&I_{F}^{*}=I_{F}^{0}& \,\,&\text{if}\quad2I_{L}f^{2}(I_{L})+2f(I_{L})g(I_{L})-I_{L}s<0,\quad g(I_{L})\geq0\\
&I_{F}^{*}=I_{L}& \,\,&\text{if}\quad
2I_{L}f^{2}(I_{L})+2f(I_{L})g(I_{L})-I_{L}s\geq0,\quad g(I_{L})\geq0
\end{aligned}
\right..
\end{align*}

\noindent{\textbf{Step (\romannumeral3)}:} We now prove $I_{F}^{*}\neq0$. From Case (ii-a), if $I_{F}^{*}=0$, then $2I_{L}f^{2}(I_{L})+4f(I_{L})g(I_{L})-I_{L}s<0$, i.e., $s>2f^{2}(I_{L})+4f(I_{L})g(I_{L})/I_{L}$, which implies $g(I_{L})<0$ since $2f^{2}(I_{L})-s>0$. Thus from Step (i), and Cases (ii-a) and (ii-b), if $I_{F}^{*}=0$,  then $g(I_{L})<0$.

Since $t_{L}^{*}=0$ and $t_{F}^{*}=1$, when $I_{F}^{*}=0$, then $p_{F}^{*}-c=\frac{1}{15}-\frac{c}{3}+\frac{k}{3}+\frac{b}{5}I_{F}=g(I_{L})<0$. For an equilibrium solution $p_{F}^{*}$, $p_{F}^{*}\geq c$, otherwise $\pi_{L}^{*}=n^{*}_{F}(p_{F}^{*}-c)-s(I_{F}^{*})^{2}<0$. Hence $I_{F}^{*}=0$ can not be an equilibrium solution for $\text{SP}_{F}$.

Combining Steps (\romannumeral1), (\romannumeral2),  and (\romannumeral3), yields:
{\begin{equation}\label{equ: I_F^*}
\footnotesize
\begin{aligned}
&I_{F}^{*}=I_{F}^{0} \,\,\text{if}\quad (s, I_{L})\in\left\{s>2f^{2}(I_{L})+2f(I_{L})g(I_{L})/I_{L},\, g(I_{L})\geq0\right\}\\
&I_{F}^{*}=I_{L} \,\,\text{if}\quad
(s, I_{L})\in\{2f^{2}(I_{L})\leq s\leq2f^{2}(I_{L})+2f(I_{L})g(I_{L})/I_{L},\,\\
& g(I_{L})\geq0\}\cup\{2f^{2}(I_{L})+4f(I_{L})g(I_{L})/I_{L}\geq s,\, 2f^{2}(I_{L})>s\}
\end{aligned}
\end{equation}}
Intuitively, $I^*_F$ is the optimum solution  if it yields  $\pi_F\geq 0$. If not, i.e. $\pi_F< 0$, then no cooperation occurs between the MNO and the MVNO. The theorem follows.
\end{proof}

\subsubsection*{Stage 1} In this stage, MNO decides on the level of investment $I_{L}$  to maximize his payoff $\pi_{L}$:
\begin{theorem}\label{theorem:generalization_stage1} The optimum investment level of the MNO is the solution to the following optimization problem:
\begin{equation}\label{equ: stage 1}
\footnotesize
\begin{aligned}
\max_{I_{L}}\quad&\pi_{L}(I_{L}; I_{F}^{*})=2(\frac{b}{5}I_{L}+\frac{1}{5}+g(I_{L})-f(I_{L})I_{F}^{*})^{2}+s(I_{F}^{*})^{2}-\gamma I_{L}^{2}\\
s.t\quad&0<I_{L}\leq\frac{4}{b}\\
&\pi_{L}(I_{L})\geq0
\end{aligned}
\end{equation}
\end{theorem}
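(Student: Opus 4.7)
The plan is to reduce $\pi_L$ to a function of $I_L$ alone, by first substituting the Stage~3 equilibrium prices from Theorem~\ref{theory pl pf} into the Stage~3 payoff expression~\eqref{equ:stage 3}, and then substituting $I_F^*$ from Theorem~\ref{theory I_F}. The target expression is phrased in terms of $f(I_L)$ and $g(I_L)$, so I would parse $p_L^*$ so as to recognize these functions inside it.

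First, I would compute $p_L^* - c$ by rearranging the expression in~\eqref{equ: price} and grouping the terms that make up $g(I_L)$. Using $t_F = (I_L - I_F)/I_L$ and the identity $g(I_L) + \tfrac{b}{5}I_L = \tfrac{4b}{15}I_L + \tfrac{1}{15} - \tfrac{c}{3} + \tfrac{k}{3}$, a direct calculation gives $p_L^* - c = \tfrac{b}{5}I_L + \tfrac{1}{5} + g(I_L) - f(I_L)\,I_F$. Next, rather than expanding $\tilde{n}_L = t_F + k + p_F^* - 2p_L^* + bI_L - bI_F$ term by term, I would exploit the leader's first order condition from~\eqref{equ: first order condition of prices}, namely $4p_L^* - p_F^* = t_F + k + bI_L - bI_F + 2c$, to eliminate $p_F^*$. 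This collapses $\tilde{n}_L$ to $2(p_L^* - c)$, so that the revenue term in $\pi_L$ becomes $2(p_L^* - c)^2$.

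Combining these, the payoff evaluated at the equilibrium prices is
\begin{equation*}
\pi_L(I_L;I_F) = 2\Big(\tfrac{b}{5}I_L + \tfrac{1}{5} + g(I_L) - f(I_L)\,I_F\Big)^2 + s\,I_F^2 - \gamma\,I_L^2,
\end{equation*}
which matches~\eqref{equ: stage 1} after plugging in $I_F = I_F^*$ from Theorem~\ref{theory I_F}. For the constraints, the bound $0 < I_L \leq 4/b$ is exactly the range for which Theorem~\ref{theory pl pf} guarantees a (possibly boundary) interior price equilibrium; outside this range either the Stage~3 analysis does not deliver interior prices or the formulas used above are invalid. The constraint $\pi_L(I_L) \geq 0$ is the leader's individual rationality condition, ensuring that SP$_L$ prefers participating in the game over opting out.

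The main obstacle is essentially algebraic bookkeeping: one has to be careful not to be overwhelmed by the five or six symbolic parameters. The key simplifying observation is that using the first order condition for $p_L^*$ lets us write $\tilde{n}_L = 2(p_L^* - c)$, after which the identification of $f$ and $g$ inside $p_L^* - c$ is routine. A secondary subtlety is the boundary $I_L = 4/b$: Theorem~\ref{theory pl pf} notes that prices there may be corner or interior, but since the closed forms for $p_L^*,p_F^*$ remain well-defined and $\pi_L$ is continuous in $I_L$ at that point, including $I_L = 4/b$ in the feasible set is justified and matches the statement.
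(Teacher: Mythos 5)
Your proof is correct and follows essentially the same route as the paper: substitute the Stage-3 equilibrium prices into $\pi_L$, identify $p_L^*-c=\frac{b}{5}I_L+\frac{1}{5}+g(I_L)-f(I_L)I_F$, and show the demand factor equals $2\big(\frac{b}{5}I_L+\frac{1}{5}+g(I_L)-f(I_L)I_F\big)=2(p_L^*-c)$ --- which the paper obtains by direct term-by-term expansion while you obtain it more cleanly from the first-order condition $4p_L^*-p_F^*=t_F+k+bI_L-bI_F+2c$. Your additional remarks on the constraints ($0<I_L\le 4/b$ inherited from Theorem~\ref{theory pl pf}, and $\pi_L\ge 0$ as a participation condition) go slightly beyond the paper's proof, which stops after establishing the algebraic identity.
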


\begin{proof}
From (\ref{equ:stage 3}), we have $\pi_{L}=(t_{F}+k+p_{F}-2p_{L}+bI_{L}-bI_{F})(p_{L}-c)+sI_{F}^{2}-\gamma I_{L}^{2}$.

\noindent{\textbf{(\romannumeral1)}} First, we substitute the values in $t_{F}+k+p_{F}-2p_{L}+bI_{L}-bI_{F}$. We use the expressions for $t_L$ and $t_F$, i.e. $t_{L}=I_{F}/I_{L}$ and $t_{F}=1-t_{L}$:
\begin{align*}
\footnotesize
&t_{F}+k+p_{F}-2p_{L}+bI_{L}-bI_{F}=t_{F}+k-\frac{1}{15}-\frac{2c}{3}\\
&-\frac{k}{3}+\frac{t_{L}}{5}-\frac{2t_{F}}{5}-\frac{7b}{15}I_{L}+\frac{3b}{5}I_{F}+bI_{L}-bI_{F}\\
=&\frac{3t_{F}}{5}+\frac{2k}{3}-\frac{1}{15}-\frac{2c}{3}+\frac{t_{L}}{5}+\frac{8b}{15}I_{L}-\frac{2b}{5}I_{F}\\
=&\frac{2t_{F}}{5}+\frac{2k}{3}+\frac{2}{15}-\frac{2c}{3}+\frac{8b}{15}I_{L}-\frac{2b}{5}I_{F}\\
=&2g(I_{L})+\frac{2}{5}-\frac{2I_{F}}{5I_{L}}-\frac{2b}{5}I_{F}+\frac{2b}{5}I_{L}\\
=&2(\frac{b}{5}I_{L}+\frac{1}{5}+g(I_{L})-f(I_{L})I_{F}).
\end{align*}
\noindent{\textbf{(\romannumeral2)}} Then, we compute $p_{L}-c$:
\begin{align*}
\footnotesize
p_{L}-c&=\frac{1}{15}-\frac{c}{3}+\frac{k}{3}+\frac{1}{5}-\frac{I_{F}}{5I_{L}}-\frac{b}{5}I_{F}+\frac{4b}{15}I_{L}\\
&=\frac{b}{5}I_{L}+\frac{1}{5}+g(I_{L})-f(I_{L})I_{F}.
\end{align*}

\noindent{\textbf{(\romannumeral3)}} From (\romannumeral1) and (\romannumeral2), we have $\pi_{L}={2(\frac{b}{5}I_{L}+\frac{1}{5}+g(I_{L})-f(I_{L})I_{F}^{*})^{2}+s(I_{F}^{*})^{2}-\gamma I_{L}^{2}}$. The theorem follows.
\end{proof}

Later, in Section~\ref{section:generalization_simulation}, we find the solution to the optimization in Theorem~\ref{theorem:generalization_stage1}, using numerical analysis.

\subsection{The SPNE Outcome}\label{section:generalization_outcome}
We characterize the equilibrium outcome of the game using the results of previous analysis.

\begin{corollary}
The equilibrium outcome of the game is
\begin{itemize}
  \item Stage 1: $I_{L}^{*}$ is characterized by solving the optimization in Theorem~\ref{theorem:generalization_stage1}. 
  \item Stage 2:  $I_{F}^{*}$ is characterized in Theorem~\ref{theory I_F}. 
  \item Stage 3:
$p_{L}^{*}=\frac{1}{15}+\frac{2c}{3}+\frac{k}{3}+\frac{I_{L}^{*}-I_{F}^{*}}{5I_{L}^{*}}-\frac{b}{5}I_{F}^{*}+\frac{4b}{15}I_{L}^{*},\quad p_{F}^{*}=\frac{1}{15}+\frac{2c}{3}+\frac{k}{3}+\frac{I_{F}^{*}}{5I_{L}^{*}}+\frac{b}{15}I_{L}^{*}+\frac{b}{5}I_{F}^{*}$.

  \item Stage 4: $\tilde{n}_{L}^{*}=\frac{I_{L}^{*}-I_{F}^{*}}{I_{L}^{*}}+p_{F}^{*}-2p_{L}^{*}+k+bI_{L}^{*}-bI_{F}^{*},\quad \tilde{n}_{F}^{*}=\frac{I_{F}^{*}}{I_{L}^{*}}+p_{L}^{*}-2p_{F}^{*}+k+bI_{F}^{*}$.
\end{itemize}
\end{corollary}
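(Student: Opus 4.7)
The statement is a consolidating corollary that assembles the SPNE from the stage-by-stage analysis already carried out in Theorems~\ref{theory pl pf}, \ref{theory I_F}, and \ref{theorem:generalization_stage1}. My plan is therefore not to rederive anything from scratch, but rather to verify that each bullet follows from the corresponding theorem, with the only real calculation being the substitution of $t_L, t_F$ and the demand formula in Stages~3 and~4.

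For Stage~1, I would simply cite Theorem~\ref{theorem:generalization_stage1}, which directly expresses $I_L^{*}$ as the maximizer of the univariate optimization obtained by substituting the Stage~2 response $I_F^{*}(I_L)$ and the Stage~3 responses $p_L^{*}(I_L,I_F), p_F^{*}(I_L,I_F)$ into $\pi_L$. For Stage~2, I would cite Theorem~\ref{theory I_F} evaluated at $I_L=I_L^{*}$; by backward induction this yields the best response of SP$_F$ given SP$_L$'s equilibrium choice, and no further argument is needed.

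For Stage~3, I would start from the interior Nash prices in Theorem~\ref{theory pl pf},
\begin{equation*}
p_L^{*}=\tfrac{1}{15}+\tfrac{2c}{3}+\tfrac{k}{3}+\tfrac{t_F}{5}-\tfrac{b}{5}I_F+\tfrac{4b}{15}I_L,\qquad p_F^{*}=\tfrac{1}{15}+\tfrac{2c}{3}+\tfrac{k}{3}+\tfrac{t_L}{5}+\tfrac{b}{15}I_L+\tfrac{b}{5}I_F,
\end{equation*}
and substitute $t_L=I_F/I_L$ and $t_F=(I_L-I_F)/I_L$ from \eqref{equ:t_L}--\eqref{equ:t_F}, then evaluate at $(I_L^{*},I_F^{*})$ to match the stated expressions. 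One must also observe that the conclusion of Theorem~\ref{theory pl pf} requires $I_L^{*}<4/b$, so I would note that the optimization in Theorem~\ref{theorem:generalization_stage1} is posed precisely over the feasible range $0<I_L\le 4/b$ (with the boundary case handled by the remark in Theorem~\ref{theory pl pf}).

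For Stage~4, I would use Definition~\ref{definition:new_demand}, which gives $\tilde{n}_L=n_L+\varphi_L(p_L,I_L)$ and $\tilde{n}_F=n_F+\varphi_F(p_F,I_F)$. Plugging in the interior hotelling shares $n_L=(I_L-I_F)/I_L+p_F-p_L$ and $n_F=I_F/I_L+p_L-p_F$ from \eqref{equ:xn}--\eqref{equ:EUs_stage4}, together with $\varphi_L=k-p_L+b(I_L-I_F)$ and $\varphi_F=k-p_F+bI_F$, collecting the $-p_L$ and $-p_F$ terms produces $p_F-2p_L+k+b(I_L-I_F)$ and $p_L-2p_F+k+bI_F$, matching the stated $\tilde{n}_L^{*},\tilde{n}_F^{*}$ after evaluation at $(I_L^{*},I_F^{*},p_L^{*},p_F^{*})$. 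The only potential subtlety, and the one I would flag explicitly, is confirming that the interior formula for $n_L,n_F$ is the applicable one (i.e.\ $0<x_n<1$) at the equilibrium; this is exactly the condition \eqref{equ: x_0} verified in the proof of Theorem~\ref{theory pl pf}, so no additional work is required beyond quoting it.
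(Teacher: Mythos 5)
Your proposal is correct and follows essentially the same route as the paper's own proof: cite Theorems~\ref{theorem:generalization_stage1} and \ref{theory I_F} for Stages~1--2, substitute $t_L=I_F/I_L$, $t_F=1-t_L$ into the prices of Theorem~\ref{theory pl pf} for Stage~3, and plug the interior hotelling shares and $\varphi_L,\varphi_F$ into \eqref{equ: demand} for Stage~4. The extra checks you flag (the constraint $I_L\le 4/b$ and the interior condition $0<x_n<1$) are sound additions but not needed beyond what the cited theorems already establish.
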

\begin{proof}
Stage 1 and 2 are immediate from Theorems~\ref{theory I_F} and \ref{theorem:generalization_stage1}. The results of Stage 3 is found by substituting the results of Theorem~\ref{theory I_F} in the expressions for $p^*_F$ and $p^*_L$ in Theorem~\ref{theory pl pf}. The result of Stage 4, is found by substituting the results of previous stages in  \eqref{equ: demand}.
\end{proof}

\subsection{Bargaining Framework}\label{section:generalization_bargaining}
In Section~\ref{section:bargaining}, we investigated the bargaining framework for the hotelling model. In Theorem~\ref{theorem:bargaining}, we characterized the NBS of the problem. Since we do not used the explicit expressions of the parameters, we can use the results of this theorem. Thus, to find the NBS, we should solve the following optimization problem: 
\begin{align*}
\max_{I_{F}, I_{L}}\quad&u_{excess}\\
s.t.\quad&0<I_{L}\leq \frac{4}{b}\\
&0\leq I_{F}\leq I_{L}\\
&u_{excess}> 0
\end{align*}
, where the first inequality is from Theorem \ref{theory pl pf} and is essential for having an interior NE strategy. 

First, we find the expression for $u_{excess}$ in the following lemma:
\begin{lemma}
For any $d_{F}$ and $d_{L}$, $u_{excess}=4f^{2}(I_{L})I_{F}^{2}-4f^{2}(I_{L})I_{L}I_{F}+2g^{2}(I_{L})+2(f(I_{L})I_{L}+g(I_{L}))^{2}-\gamma I_{L}^{2}-d_{F}-d_{L}$.
\end{lemma}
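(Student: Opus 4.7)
\smallskip
\noindent\textbf{Proof plan.} The plan is to exploit the simplifications already carried out for Stages 2 and 3 in Section~\ref{section:generalziation_SPNE}, so that almost no new algebra is needed. By definition, $u_{\text{excess}}=\pi_L+\pi_F-d_L-d_F$, and substituting \eqref{equ:payoffF}, \eqref{equ:payoffL} gives
\begin{equation*}
u_{\text{excess}}=n_L(p_L-c)+n_F(p_F-c)-\gamma I_L^{2}-d_L-d_F,
\end{equation*}
since the $sI_F^{2}$ terms cancel and $u_{\text{excess}}$ is indeed independent of $s$. Thus the task reduces to expressing the two revenue-like products at the Stage~3 equilibrium in closed form as functions of $I_L$ and $I_F$.

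First, I would recall the identities established inside the proofs of the Stage~2 lemma and of Theorem~\ref{theorem:generalization_stage1}. In the Stage~2 proof it is shown that, at $p_L^{*},p_F^{*}$,
\begin{equation*}
p_F-c=g(I_L)+f(I_L)I_F,\qquad n_F=\tilde{n}_F=2\bigl(g(I_L)+f(I_L)I_F\bigr),
\end{equation*}
hence $n_F(p_F-c)=2\bigl(g(I_L)+f(I_L)I_F\bigr)^{2}$. Analogously, in the proof of Theorem~\ref{theorem:generalization_stage1} it is shown that
\begin{equation*}
p_L-c=\tfrac{b}{5}I_L+\tfrac{1}{5}+g(I_L)-f(I_L)I_F,\quad n_L=2\Bigl(\tfrac{b}{5}I_L+\tfrac{1}{5}+g(I_L)-f(I_L)I_F\Bigr),
\end{equation*}
so $n_L(p_L-c)=2\bigl(\tfrac{b}{5}I_L+\tfrac{1}{5}+g(I_L)-f(I_L)I_F\bigr)^{2}$. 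Using the identity $f(I_L)I_L=\tfrac{1}{5}+\tfrac{b}{5}I_L$ (immediate from the definition of $f$), the bracket can be rewritten as $f(I_L)I_L+g(I_L)-f(I_L)I_F=f(I_L)(I_L-I_F)+g(I_L)$.

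The second step is a direct expansion. Plugging the two closed forms back into $u_{\text{excess}}$ yields
\begin{equation*}
u_{\text{excess}}=2\bigl(f(I_L)(I_L-I_F)+g(I_L)\bigr)^{2}+2\bigl(g(I_L)+f(I_L)I_F\bigr)^{2}-\gamma I_L^{2}-d_L-d_F.
\end{equation*}
Expanding both squares, the cross terms $\pm 4f(I_L)g(I_L)I_F$ cancel, the pure $I_F^{2}$ contributions add to $4f^{2}(I_L)I_F^{2}$, the $I_LI_F$ contribution is $-4f^{2}(I_L)I_LI_F$, and the remaining $I_L$-only terms reassemble into $2\bigl(f(I_L)I_L+g(I_L)\bigr)^{2}+2g^{2}(I_L)$. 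This produces exactly the advertised form.

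The only non-routine step is keeping the cross terms straight when combining the two squares; everything else is simply quoting the Stage~2 and Stage~3 identities. Because the identities are already proved in the paper, I expect no genuine obstacle — the main care is to perform the expansion cleanly and to note that the $s I_F^{2}$ cancellation is what makes $u_{\text{excess}}$ depend only on $I_L,I_F$ (and the fixed disagreement payoffs), as claimed just before the lemma statement.
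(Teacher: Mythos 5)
Your proposal is correct and follows essentially the same route as the paper: both sum the Stage-2 and Stage-1 closed forms of $\pi_F$ and $\pi_L$ (equivalently $2(g+fI_F)^2-sI_F^2$ and $2(fI_L+g-fI_F)^2+sI_F^2-\gamma I_L^2$), observe that the $sI_F^2$ terms cancel, and simplify using the identity $f(I_L)I_L=\tfrac{1}{5}+\tfrac{b}{5}I_L$. The expansion and cancellation of the cross terms check out, so the result matches the stated expression.
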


\begin{proof}
Since $u_{excess}=\pi_{L}-d_{L}+\pi_{F}-d_{F}$, then from (\ref{equ: stage 2}) and (\ref{equ: stage 1}), 
\begin{equation}
\footnotesize
\begin{aligned}
&u_{excess}=\pi_{L}-d_{L}+\pi_{F}-d_{F}=(2f^{2}(I_{L})-s)I_{F}^{2}\\&+4f(I_{L})g(I_{L})I_{F}+2g^{2}(I_{L})+2(\frac{b}{5}I_{L}+\frac{1}{5}+g(I_{L})-f(I_{L})I_{F})^{2}\\
&+sI_{F}^{2}-\gamma I_{L}^{2}-d_{F}-d_{L}\\
\quad =&2f^{2}(I_{L})I_{F}^{2}+4f(I_{L})g(I_{L})I_{F}+2g^{2}(I_{L})-4(\frac{b}{5}I_{L}+\frac{1}{5}\\
&+g(I_{L}))f(I_{L})I_{F}+2f^{2}(I_{L})I_{F}^{2}+2(\frac{b}{5}I_{L}+\frac{1}{5}+g(I_{L}))^{2}-\gamma I_{L}^{2}\\
&-d_{F}-d_{L}\\
=&4f^{2}(I_{L})I_{F}^{2}+2g^{2}(I_{L})-4(\frac{b}{5}I_{L}+\frac{1}{5})f(I_{L})I_{F}\\
&\qquad+2(\frac{b}{5}I_{L}+\frac{1}{5}+g(I_{L}))^{2} -\gamma I_{L}^{2}-d_{F}-d_{L}\\
=&4f^{2}(I_{L})I_{F}^{2}-4f^{2}(I_{L})I_{L}I_{F}+2g^{2}(I_{L})+2(f(I_{L})I_{L}+g(I_{L}))^{2}\\
&\qquad -\gamma I_{L}^{2} -d_{F}-d_{L}
\end{aligned}
\nonumber
\end{equation}
\end{proof}

In the next Theorem, we prove that in the NBS, the MVNO either reserves all the available resources or zero resources from the MNO:
\begin{theorem}\label{theorem:g_bargain}
 If $u_{excess}^{*}>0$, in the NBS $I_{F}^{*}=I_{L}^{*}$ or $I_{F}^{*}=0$. 
\end{theorem}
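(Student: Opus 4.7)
The plan is to reduce the claim to a convexity argument on the one-dimensional slice $I_F \mapsto u_{excess}(I_F; I_L)$ for fixed $I_L$. Using the expression for $u_{excess}$ established in the lemma immediately preceding the theorem, I would first rewrite
\[
u_{excess}(I_F; I_L) = 4f^2(I_L)\, I_F^2 - 4f^2(I_L)\, I_L\, I_F + C(I_L),
\]
where $C(I_L) := 2g^2(I_L) + 2(f(I_L)I_L + g(I_L))^2 - \gamma I_L^2 - d_F - d_L$ does not depend on $I_F$. The leading coefficient in $I_F$ is $4f^2(I_L)$, and since
\[
f(I_L) = \frac{1}{5 I_L} + \frac{b}{5} > 0
\]
for every feasible $I_L > 0$, this coefficient is strictly positive.

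Next, I would invoke the well-known fact that a strictly convex quadratic on a closed bounded interval attains its maximum only at the endpoints. Since the $I_F$-feasible set from the bargaining optimization is $[0, I_L]$, I conclude that for every fixed feasible $I_L$,
\[
\arg\max_{0 \le I_F \le I_L}\, u_{excess}(I_F; I_L) \subseteq \{0,\, I_L\}.
\]
Hence any maximizer $(I_F^*, I_L^*)$ of the joint problem must satisfy $I_F^* \in \{0, I_L^*\}$. The hypothesis $u_{excess}^* > 0$ just guarantees that the bargaining problem is non-degenerate and its feasible set is nonempty, so it does not affect which of the two endpoints the solution lies at; it only rules out the trivial case of no cooperation.

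There is essentially no hard step: the whole argument rests on recognizing that $u_{excess}$ is a convex quadratic in $I_F$ once $I_L$ is fixed, together with the positivity of $f(I_L)$ on the feasible domain. The only subtlety worth flagging is that, in the symmetric form $4f^2(I_L)(I_F - I_L/2)^2 + [C(I_L) - f^2(I_L) I_L^2]$, both endpoints $I_F = 0$ and $I_F = I_L$ actually yield the same value of $u_{excess}$, so the theorem states both as possibilities rather than singling one out; the eventual choice between them is irrelevant for the excess profit and will be resolved only when splitting the surplus via $s^*$ as in Theorem~\ref{theorem:bargaining}.
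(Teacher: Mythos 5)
Your proposal is correct and follows essentially the same route as the paper: the paper likewise computes $\frac{d^2 u_{excess}}{dI_F^2}=8f^2(I_L)>0$, concludes convexity in $I_F$ for fixed $I_L$, and places the maximizer at the endpoints $0$ or $I_L$. Your added observation that both endpoints give equal $u_{excess}$ is a valid bonus consistent with the paper's treatment of the no-outside-option case, but it is not needed for the statement.
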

\begin{remark}
In this case, NBS is a milder version of the collusive outcome of the previous case in which EUs have no outside option.  Note that SP$_L$ no longer opts for the minimum possible $I_L$. However, in the NBS, SP$_F$ either reserves all the resources ($I^*_F=I^*_L$) or no resources ($I^*_F=0$).  
\end{remark}
\begin{proof}
The second derivative of $u_{excess}$ with respect to $I_F$ is:
$$
\frac{d^2 u_{excess}}{d I^2_F}=8 f^2(I_L)>0
$$
Thus, $u_{excess}$ is convex with respect to $I_F$, and the maximum of $u_{excess}$ is obtained at the boundaries of $I_F$, i.e. $I^*_F=0$ or $I^*_F=I^*_L$. This completes the proof of this theorem.
\end{proof}

\subsection{Numerical Results}
\label{section:generalization_simulation}
In this section, we provide numerical results. First, we focus on the sequential framework and the SPNE outcome characterized in Section~\ref{section:generalization_outcome}. Then, we provide the numerical results for ther bargaining framework. 

\subsubsection{SPNE Outcome}

\begin{figure}[t]
	\begin{subfigure}{.25\textwidth}
		\centering
		\includegraphics[width=\linewidth]{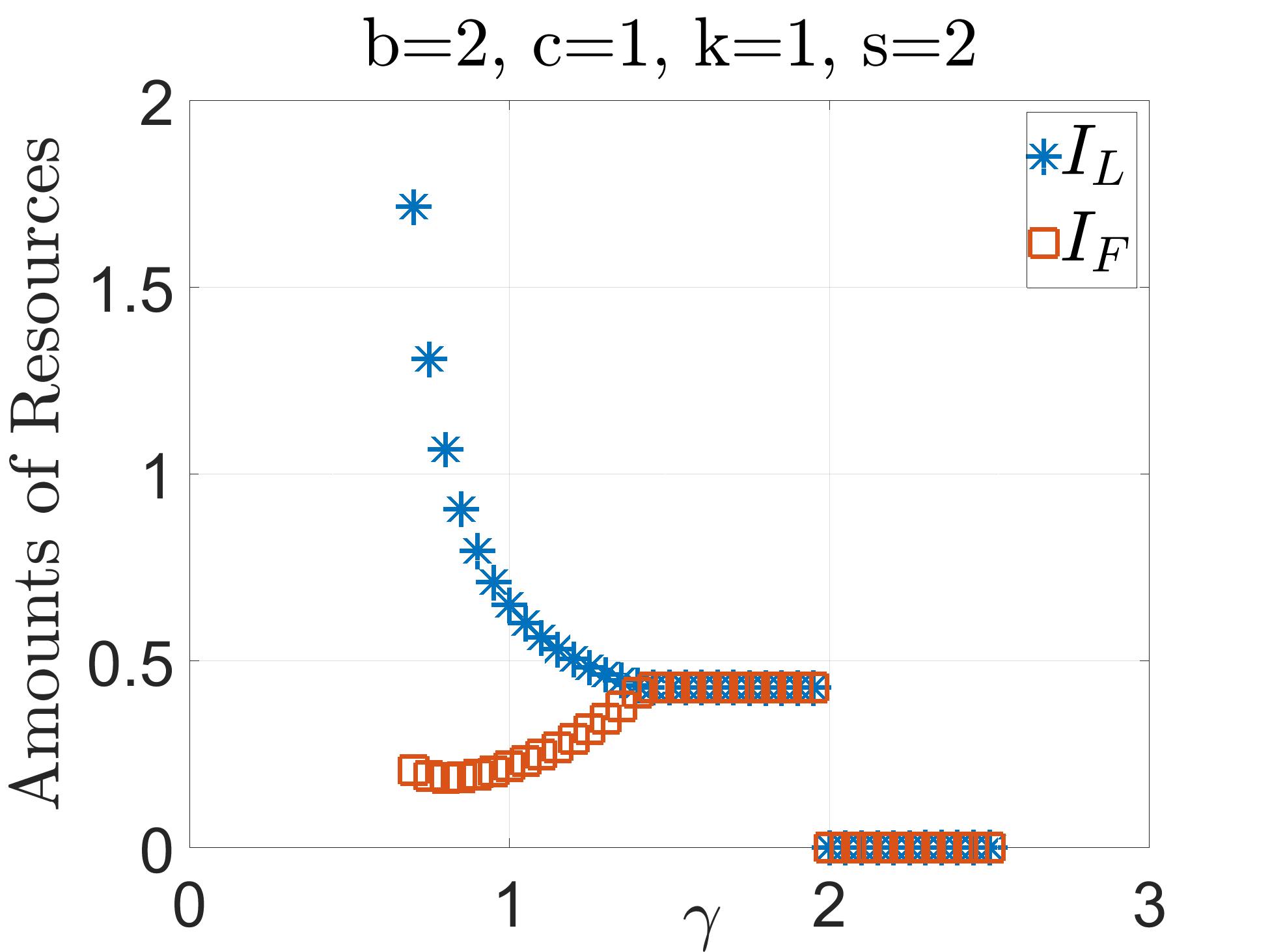}
		%	\caption{}
		\label{figure:g_resources_gamma}
	\end{subfigure}%
	\begin{subfigure}{.25\textwidth}
		\centering
		\includegraphics[width=\linewidth]{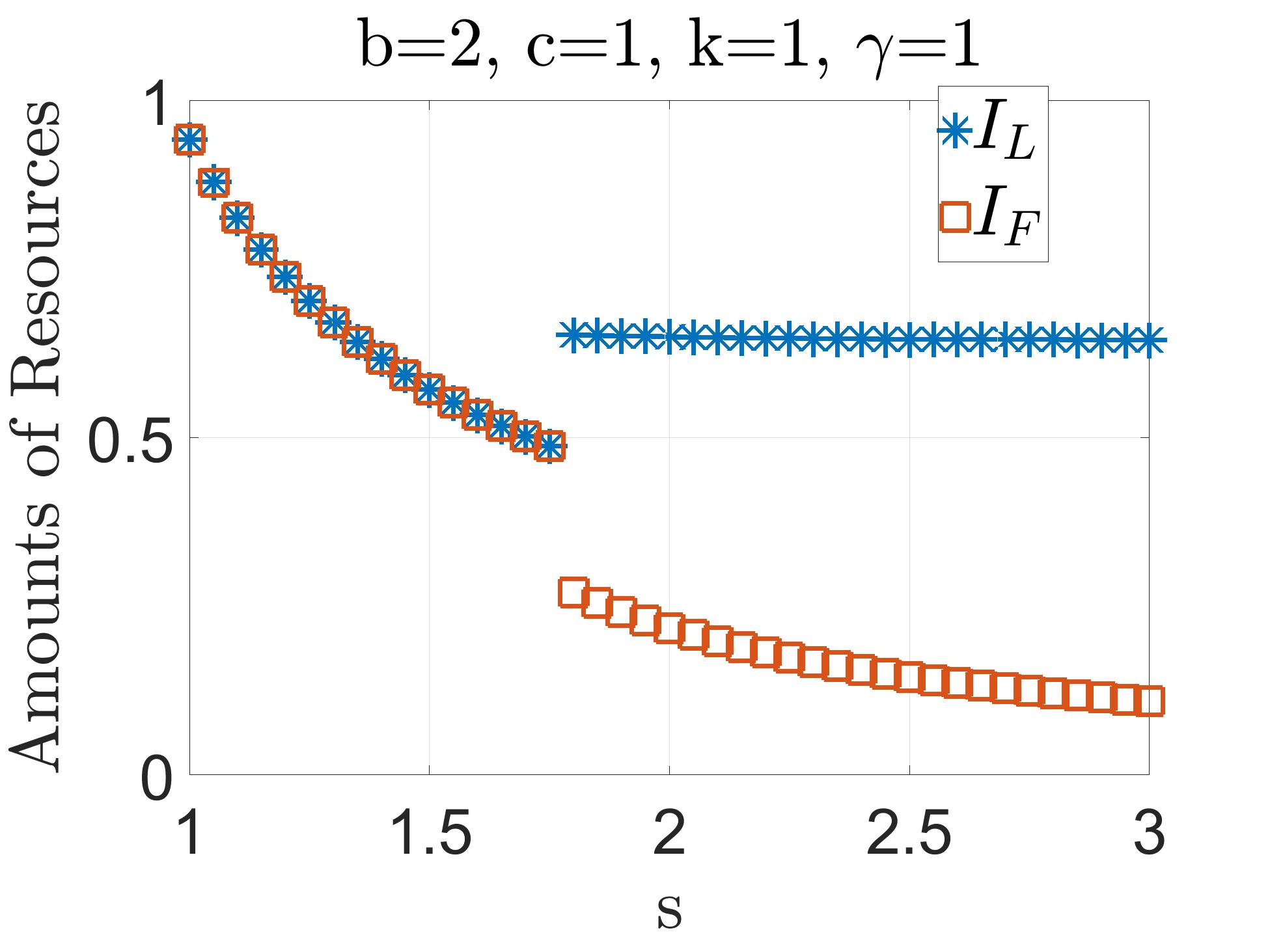}
		%	\caption{}
		\label{fig:g_resources_s}
	\end{subfigure}
	\caption{Investment decisions of SPs vs. $\gamma$(left) and $s$ (right)}\label{figure:g_resources}
\end{figure}

\begin{figure}[t]
	\begin{subfigure}{.25\textwidth}
		\centering
		\includegraphics[width=\linewidth]{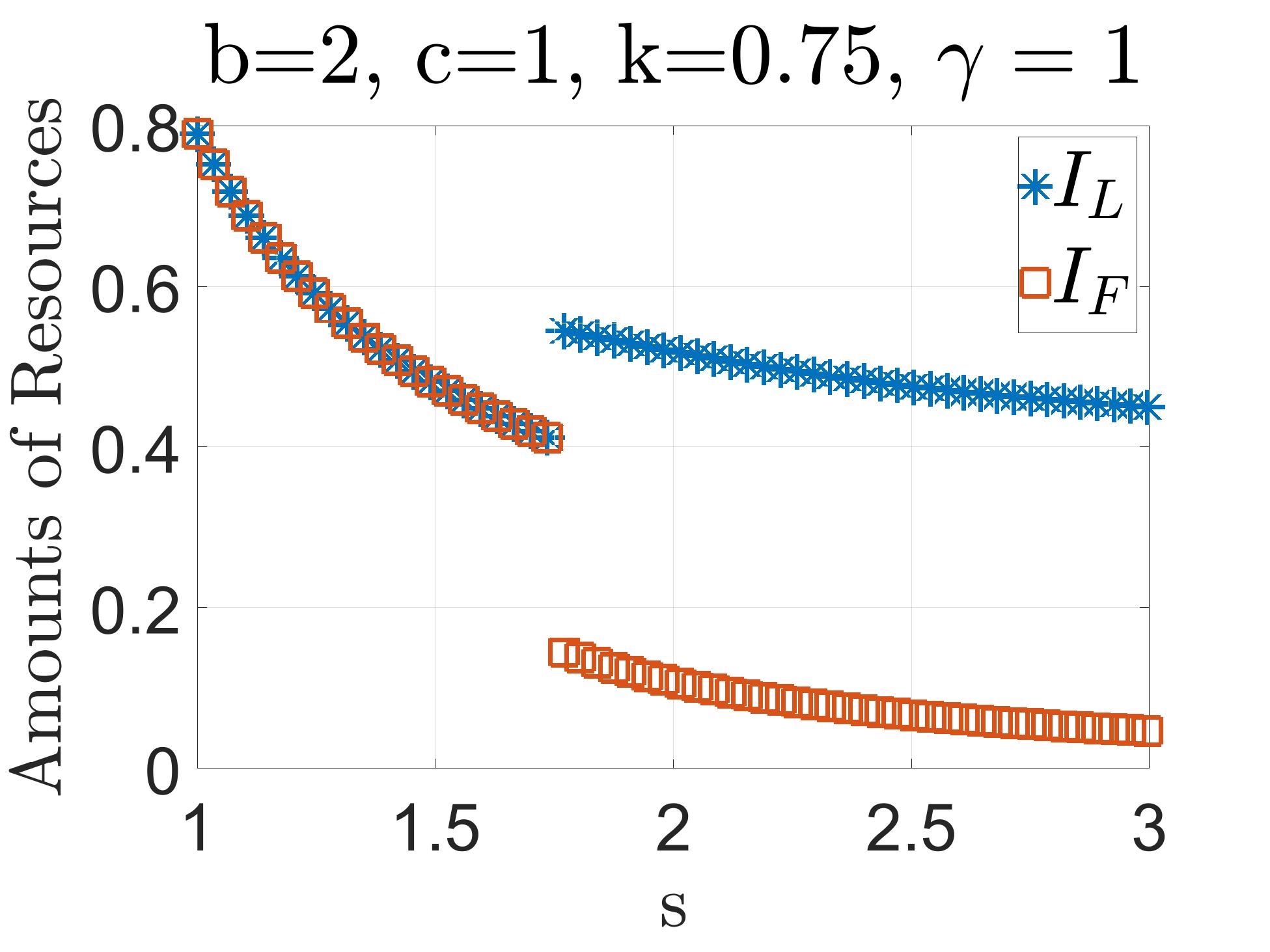}
		%	\caption{}
		\label{figure:g_resources_gamma_k}
	\end{subfigure}%
	\begin{subfigure}{.25\textwidth}
		\centering
		\includegraphics[width=\linewidth]{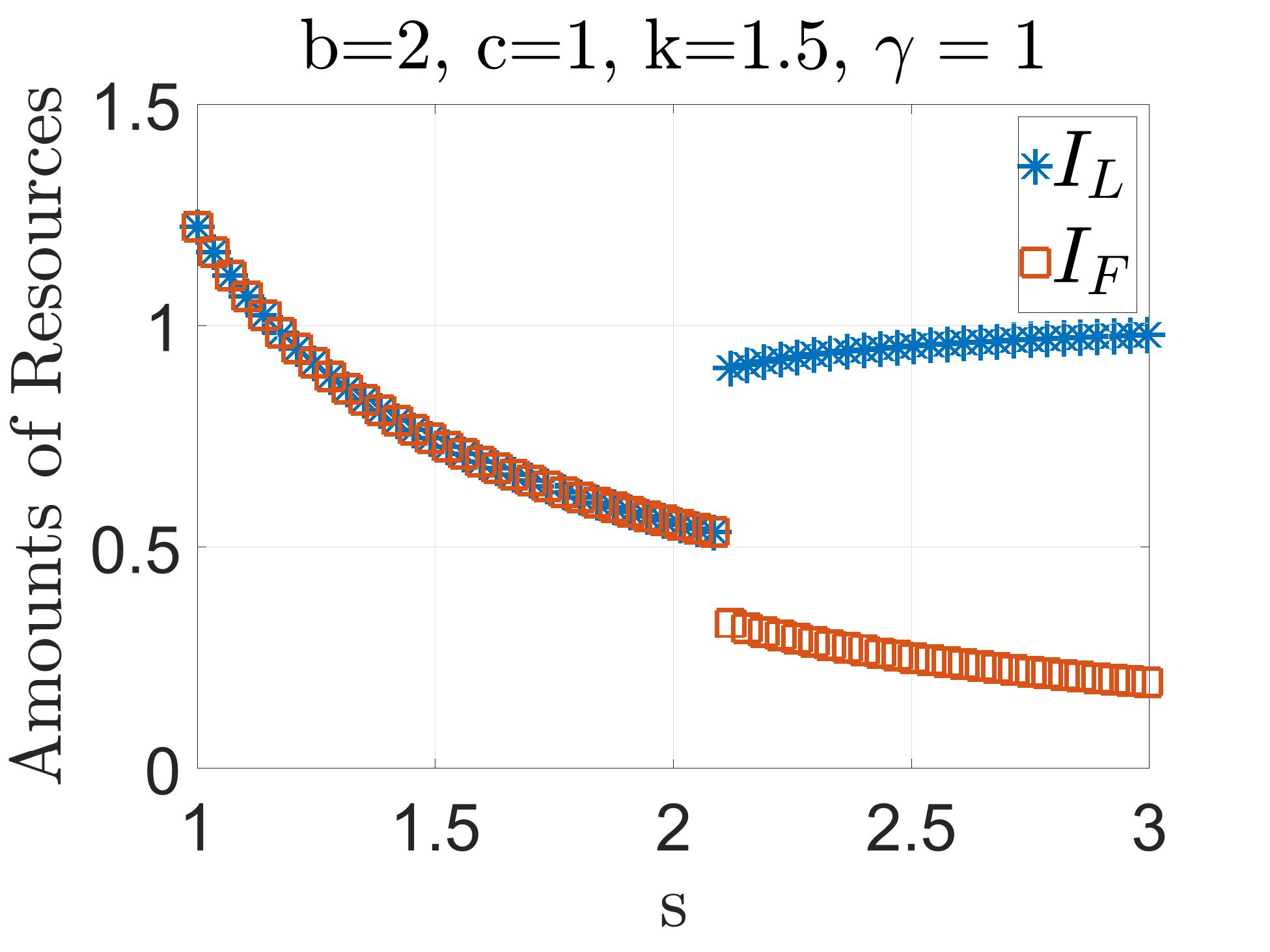}
		%	\caption{}
		\label{fig:g_resources_s_k}
	\end{subfigure}
	\caption{Investment decisions of SPs vs. $s$  for $k=0.75$(left) and $k=1.5$ (right)}\label{figure:g_resources_k}
\end{figure}

In Figure~\ref{figure:g_resources}-left, we plot the optimum investment level of SPs with respect to marginal cost  of investment($\gamma$). Note that the number of resources invested by SP$_L$ is decreasing with $\gamma$, and the number of resources that SP$_F$ leases from SP$_L$ is increasing with $\gamma$ (Figure~\ref{figure:resources}-left). The reason for the former is intuitive: the higher the cost, the lower the investment. The latter happens since increasing $\gamma$ while $s$ is fixed increases the desirability of  the resources for SP$_F$. Note that in model (Section~\ref{section:model}), we assumed that to have a non-trivial problem the per resource fee should be higher than the marginal cost of investment, i.e. $s\geq \gamma$. This is the reason that in Figure~\ref{figure:g_resources}-left, when $\gamma> s=2$ the investment by the SPs is zero.

In Figure~\ref{figure:g_resources}-right, we plot the optimum investment level of SPs with respect to the per resource fee that SP$_F$ pays to SP$_L$. Note that the overall behavior of the number of resources with respect to $s$ is similar to that of the previous model presented in Figure~\ref{figure:resources}. The only difference is the behavior of the investment by SP$_L$ ($I_L$) when $s$ is higher than a threshold. In previous results (Figure~\ref{figure:resources}), this value was decreasing with $s$. However, in this model, the behavior of $I_L$ with respect to $s$ depends on the parameters of the demand function ($\varphi(.,.)$). We investigate this dependency by considering different values of $k$, i.e. the bias constant in the demand function. Results reveal that when $k=1$ (Figure~\ref{figure:g_resources}-right), then $I_L$ is constant with $s$, when $k=0.75$ (Figure~\ref{figure:g_resources_k}-left), $I_L$ is decreasing, and when $k=1.5$ (Figure~\ref{figure:g_resources_k}-right), $I_L$ is increasing with $s$.  Note that the higher $k$, the higher the number of EUs generated with the demand function $\varphi(.,.)$, and the more important this demand to the SPs. In addition, the demand function is increasing with the investment. Thus, as $k$ increases, SP$_L$ has more incentive to invest more to attract a higher number of EUs. 

\begin{figure}[t]
	\begin{subfigure}{.25\textwidth}
		\centering
		\includegraphics[width=\linewidth]{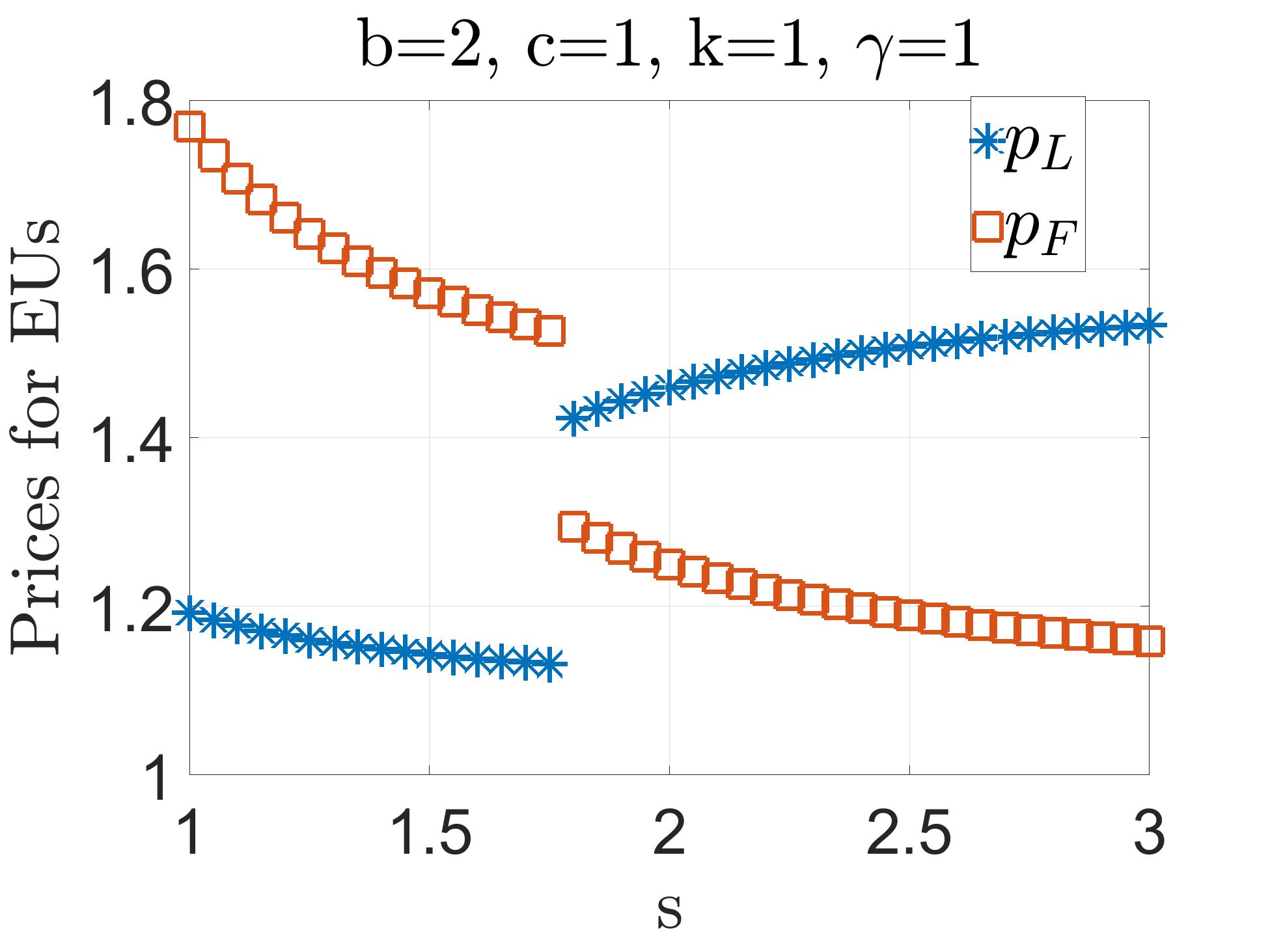}
		%	\caption{}
		\label{figure:g_p_s}
	\end{subfigure}%
	\begin{subfigure}{.25\textwidth}
		\centering
		\includegraphics[width=\linewidth]{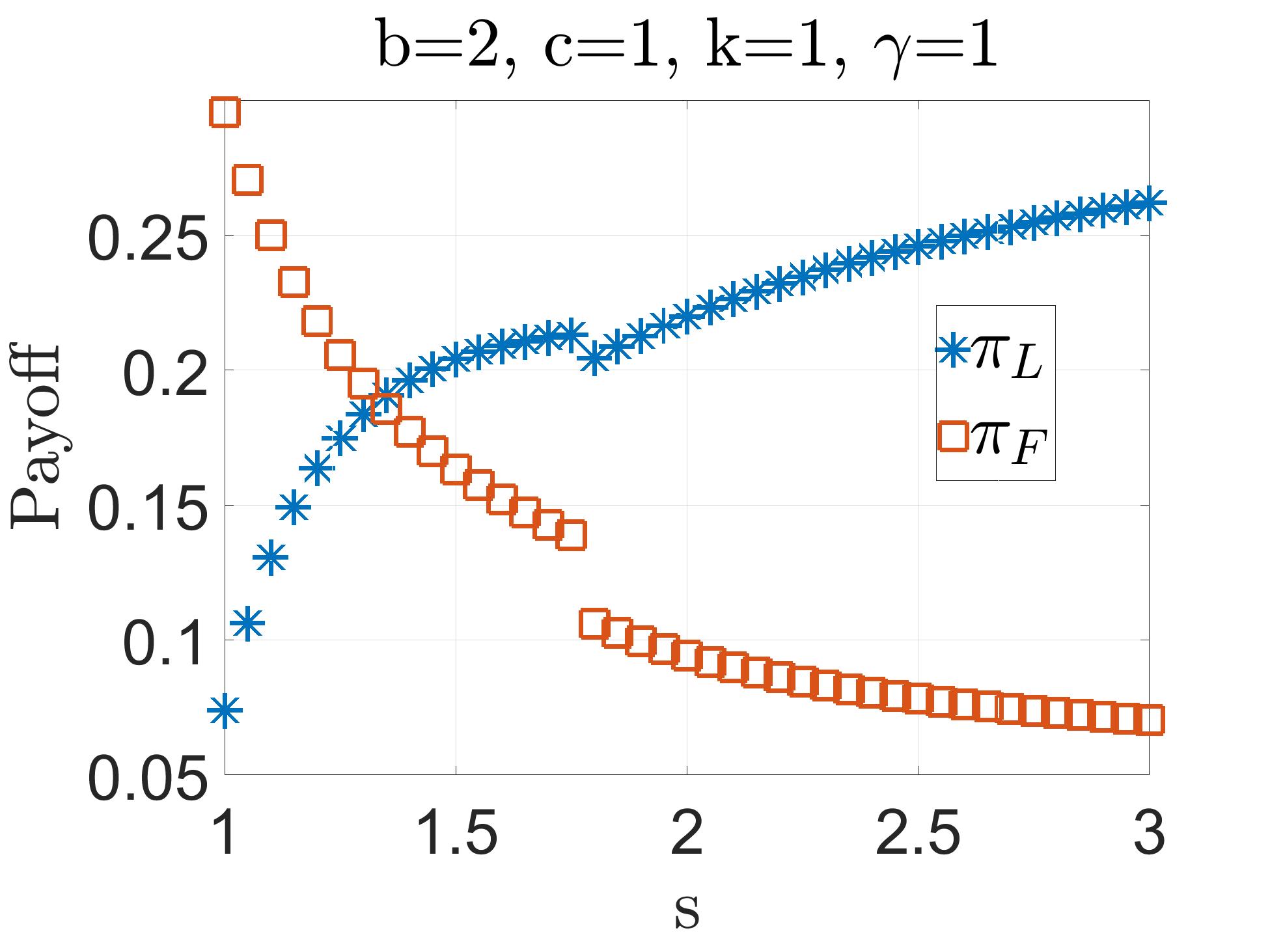}
		%	\caption{}
		\label{fig:g_pn_s}
	\end{subfigure}
	\caption{Prices for EUs (left) and payoff of SPs (right) vs. $s$}\label{figure:gg-resources}
\end{figure}

In Figure~\ref{figure:gg-resources}-left, we plot the pricing decision of SPs for EUs, i.e. $p^*_L$ and $p^*_F$. Note that  results share many similarities with those of the previous model (Figure~\ref{figure:prices}). The main difference is that in the region associated with small $s$'s, the prices that SPs quote for EUs are no longer constant and independent of $s$. The reason is because of considering the effect of a third option and having a demand function that does not react only to the differences in the investment  of the SPs (relative difference of SPs), but also to the absolute value of the investments. More specifically, note that in the previous model, for small $s$'s, SP$_F$ reserves all the new resources of SP$_L$ (Figure~\ref{figure:resources}). Thus, by the definition of the $t_j$'s \eqref{equ:t_L} and \eqref{equ:t_F}, the decision of EUs in the hotelling model  would be independent of the level of investments and subsequently independent of $s$. However, in the new model, although again for small $s$'s, SP$_F$ reserves all the new resources of SP$_L$, the decision of EUs associated  with the newly added demand functions ($\varphi(.,.)$'s) would be still dependent on the resources and $s$. This yields that the prices for EUs are no longer constant when $s$ is smaller than a threshold. Results also reveal that similar to the previous model, the behavior of the number of EUs with SPs ($\tilde{n}_L$ and $\tilde{n}_F$) with respect to $s$ follows the same trend as the behavior of the prices.    

In Figure~\ref{figure:gg-resources}-right, we plot the payoff of SPs with respect to the per resource fee, $s$. Note that the payoff of SP$_L$ is increasing and the payoff of SP$_F$ is decreasing with respect to $s$. 

\subsubsection{Bargaining Framework} 
\begin{figure}[t]
	\begin{subfigure}{.25\textwidth}
		\centering
		\includegraphics[width=\linewidth]{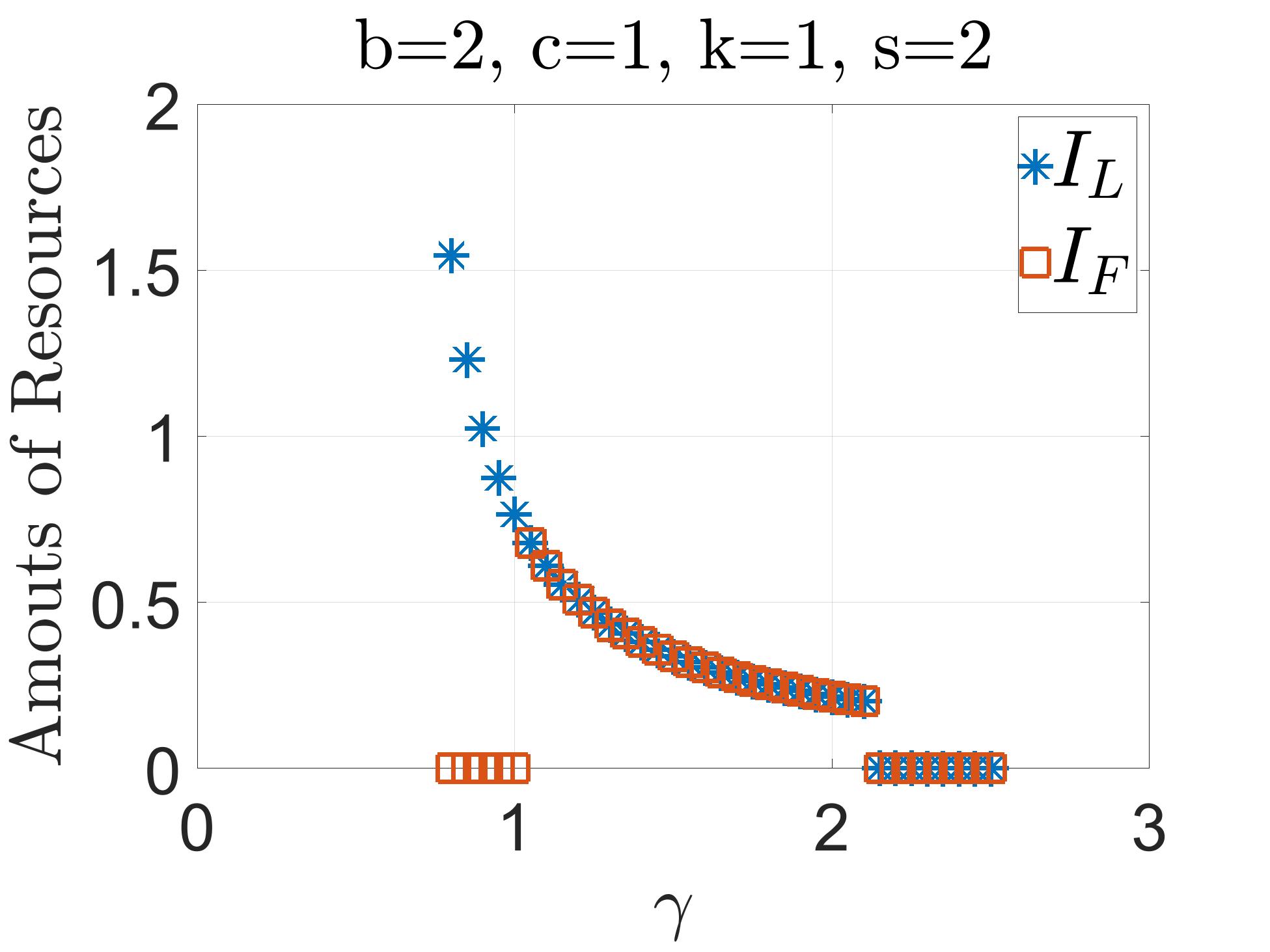}
		%	\caption{}
		\label{figure:g_bar_resources}
	\end{subfigure}%
	\begin{subfigure}{.25\textwidth}
		\centering
		\includegraphics[width=\linewidth]{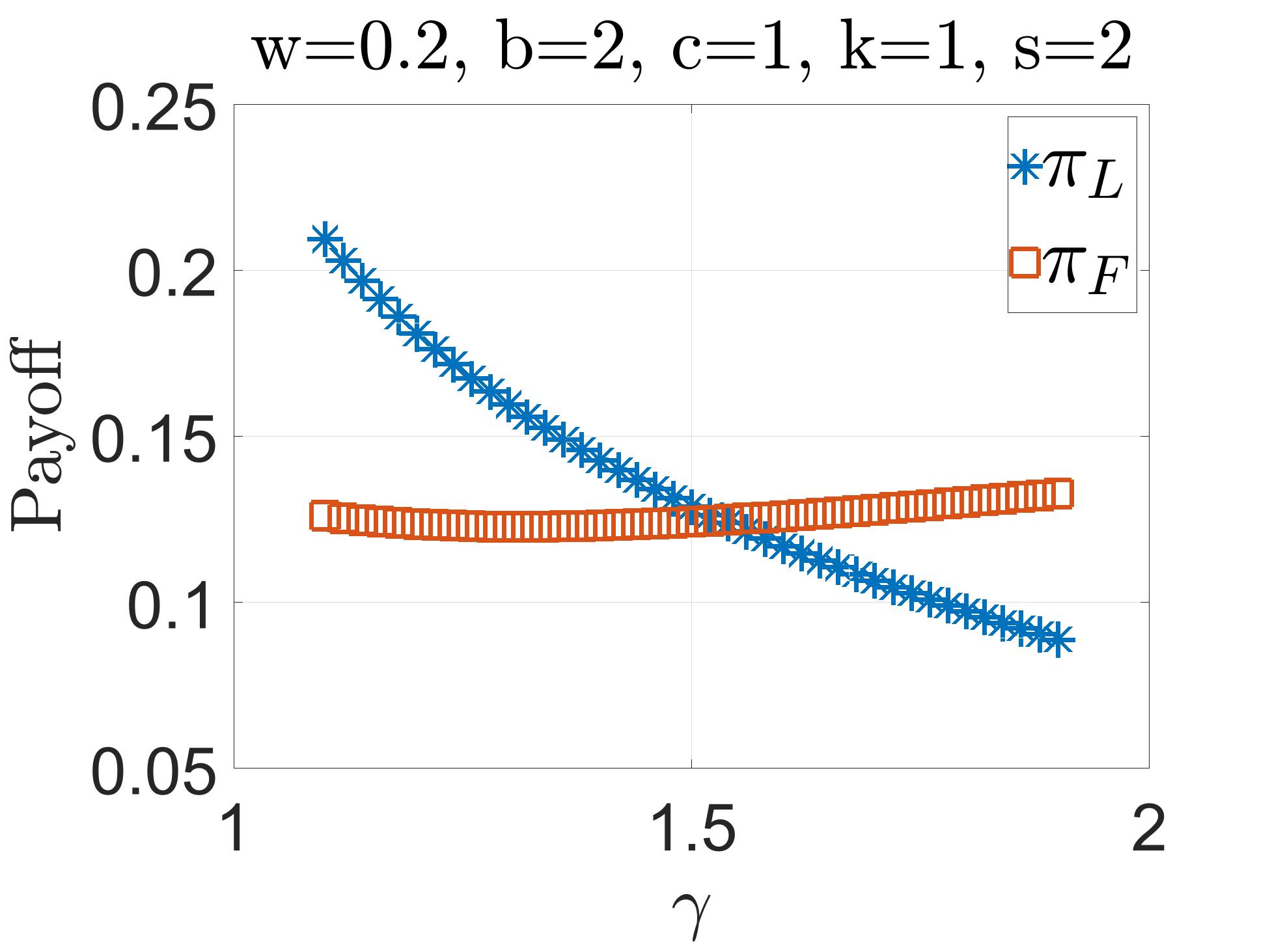}
		%	\caption{}
		\label{fig:g_bar_pi}
	\end{subfigure}
	\caption{Investment decisions of SPs (left) and payoffs (right) in the bargaining framework vs. $\gamma$}\label{figure:g-bar}
\end{figure}
Now, we focus on the bargaining framework. Note that in Theorem~\ref{theorem:g_bargain}, we proved that in the bargaining framework, SP$_F$ either reserves all the newly created resources or no resources from SP$_L$. Numerical results also confirm this (Figure~\ref{figure:g-bar}-left). Recall that  in the previous model (Theorem~\ref{theorem:bargain_regulator}), in a bargaining framework, SP$_L$ reserves a minimum number of resources. However, in the  new model, $I^*_L$ is a solution of an optimization problem and does not necessarily assume values on the boundaries, i.e. minimum value (Figure~\ref{figure:g-bar}-left). 

In Figure~\ref{figure:g-bar}-right, we plot the payoff of SPs with respect to the marginal cost of investment ($\gamma$) after the bargaining when the relative bargaining power of SP$_F$ over SP$_L$  ($w$) is 0.2. Results reveal that when the marginal cost of investment is smaller (respectively, larger) than a threshold, then SP$_L$ (respectively, SP$_F$) is the SP that receives a higher payoff. 

%Note that although the bargaining power of SP$_F$ is small, for large investment costs, the payoff of SP$_F$ would be higher than the payoff of SP$_L$. This is mainly due to the disagreement payoffs of SPs. Note that as mentioned in Section~\ref{section:bargaining}, the disagreement payoffs play a role similar to the bargaining power. Thus, the higher the disagreement payoff of an SP, the higher would be the share of the profit that the SP receives. Note that if the marginal cost of investment is high, the disagreement payoff of SP$_L$ would be low. Therefore, for high enough $\gamma$, even when the bargaining power of SP$_F$ is small, SP$_L$ would receive a low share of the profit after  bargaining. 

\section{Conclusion}\label{section:conclusion}
We investigated the incentives of a Mobile Network Operators (MNO) for offering some of her resources to a Mobile Virtual Network Operators (MVNO) instead of using the resources for her own End-Users (EUs). We considered a continuum of undecided EUs with two different models:  (i) when EUs need to choose either the MNO or the MVNO, and (ii) when EUs have an outside option. In each of these cases, we considered a non-cooperative framework of sequential games and a cooperative framework of  bargaining game. We characterized the SPNE and the NBS of the sequential and bargaining games, respectively, and provided insights on the interplay of competition and cooperation between the MNO and the MVNO.
\bibliographystyle{IEEEtran}
\bibliography{bmc_article}

\appendices 
\section{Proof of Theorem~\ref{theorem:NoCorner}}\label{section:appendix:corner}

In this section, we prove that there is no corner equilibrium for Stage 2 of the game. Thus, we focus on equilibrium strategies by which $n_L=0$ or $n_F=0$. First,   we prove the case for  $n_L=0$:

\begin{theorem}\label{theorem:n_L=0}
There is no NE strategy for Stage 2 of the game by which $n_L=0$.
\end{theorem}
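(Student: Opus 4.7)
My plan is to argue by contradiction: assume $(p_L,p_F)$ is an NE of the pricing stage with $n_L=0$ and exhibit a profitable deviation for SP$_L$. From \eqref{equ:xn} and \eqref{equ:EUs_stage4}, $n_L=0$ forces $x_n\le 0$, i.e.\ $p_L\ge p_F+\tfrac{I_L-I_F}{I_L}$, and SP$_L$'s payoff collapses to $\pi_L=sI_F^{2}-\gamma I_L^{2}$ (no revenue from EUs). The natural deviation is a slight undercut: take $p_L'=p_F+\tfrac{I_L-I_F}{I_L}-\epsilon$ for small $\epsilon>0$, which yields $x_n=\epsilon\in(0,1)$, so $n_L=\epsilon$, and the new payoff is
\[
\pi_L' \;=\; \epsilon\!\left(p_F+\tfrac{I_L-I_F}{I_L}-\epsilon-c\right)+sI_F^{2}-\gamma I_L^{2}.
\]
For small $\epsilon$, $\pi_L'>\pi_L$ holds if and only if $p_F+\tfrac{I_L-I_F}{I_L}>c$. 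Since $\tfrac{I_L-I_F}{I_L}\ge 0$, the entire argument hinges on establishing $p_F>c$ at any such putative NE.

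This is the main obstacle, and I would dispatch it by showing SP$_F$ herself deviates whenever $p_F\le c$. If $p_F<c$, SP$_F$'s payoff at the putative NE is $1\cdot(p_F-c)-sI_F^{2}<-sI_F^{2}$; raising $p_F$ to any value in $(p_F,c)$ sufficiently close to $c$ preserves $x_n<0$ by continuity (hence keeps $n_F=1$) and strictly improves her payoff. If $p_F=c$, her payoff is $-sI_F^{2}$; raising $p_F$ to $c+\delta$ for small $\delta>0$ gives $x_n\le\delta$, hence $n_F\ge 1-\delta$, and payoff at least $(1-\delta)\delta-sI_F^{2}>-sI_F^{2}$ for small $\delta$. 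In either case SP$_F$ has a profitable deviation, contradicting the NE assumption, so any putative NE with $n_L=0$ must have $p_F>c$.

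Combining the two steps, $p_F+\tfrac{I_L-I_F}{I_L}>c$, so the undercut $p_L'$ above gives $\pi_L'>\pi_L$ for sufficiently small $\epsilon>0$, contradicting the NE property of $(p_L,p_F)$. This completes the argument. The only delicate point is the bootstrapping: one cannot directly invoke SP$_L$'s deviation without first excluding pathological $p_F\le c$, since otherwise the undercut price $p_L'$ could itself fall below $c$ and yield negative marginal EU revenue; ruling out $p_F\le c$ via SP$_F$'s own deviation analysis is therefore the essential preliminary, after which SP$_L$'s deviation works uniformly.
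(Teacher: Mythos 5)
Your proof is correct in substance but reaches the contradiction along a mirrored route. The paper first notes that SP$_F$, facing $n_F=1$, would push $p_F$ up to the boundary $p_F^*=p_L^*-\tfrac{I_L-I_F}{I_L}$, then shows $p_L^*\le c$ (else SP$_L$ undercuts), infers $p_F^*\le c$, and lets SP$_F$ deliver the final profitable deviation by raising her price. You instead establish $p_F>c$ as the key intermediate fact (via SP$_F$'s own deviation when $p_F\le c$) and let SP$_L$ deliver the final blow with the undercut. Both arguments use exactly the same two deviations---SP$_L$ undercutting to capture $x_n=\epsilon$, and SP$_F$ raising her price when her margin is nonpositive---but your decomposition puts the case split on $p_F$ versus $c$ rather than pinning $p_F$ to the boundary, and the contradiction lands on the other player in each branch. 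Neither version is more general; yours avoids the boundary-pinning step, the paper's avoids the two-subcase analysis of SP$_F$'s deviation.

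One small imprecision to fix: in the subcase $p_F<c$ you assert that raising $p_F$ to a value ``sufficiently close to $c$'' preserves $x_n<0$ by continuity. That fails when $x_n=0$ at the putative equilibrium (which is consistent with $n_L=0$), or more generally when $|x_n|$ is small relative to $c-p_F$, since $x_n$ increases one-for-one with $p_F$. The conclusion survives with a cleaner deviation: raise $p_F$ to exactly $c$, giving SP$_F$ payoff $n_F\cdot 0-sI_F^2=-sI_F^2$, which strictly exceeds the putative equilibrium payoff $(p_F-c)-sI_F^2<-sI_F^2$ regardless of how the market share reallocates. With that patch (and your existing handling of $p_F=c$), the argument is complete.
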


\begin{proof}
In this case, $x_n\leq 0$. Thus, using \eqref{equ:payoffF}, \eqref{equ:payoffL}, \eqref{equ:EUs_stage4}, the payoffs of SPs would be:
\begin{equation}
\begin{aligned}
\pi_F&=p_F-c-sI^k_F\\
\pi_L&=sI^k_F-\gamma I^k_L
\end{aligned}
\end{equation}

Note that the equilibrium strategies $p^*_L$ and $p^*_F$ should be such that $x_n\leq 0$. Thus, by \eqref{equ:xn}, $p^*_F\leq p^*_L-\frac{I_L-I_F}{I_L}$. Therefore, since the payoff of SP$_F$ is a strictly increasing function of $p_F$, $p^*_F=p^*_L-\frac{I_L-I_F}{I_L}$. We first argue that $p^*_L\leq c$. Then, we prove that there exists no equilibrium strategy in this case. 

We start by proving that  $p^*_L \leq c$. Suppose not and $p^*_L>c$. Then, a deviation by SP$_L$ such that $p'_L=p^*_L-\epsilon$ for $\epsilon>0$ infinitesimal small yields that $0<x_n<1$. In this case, $n_L>0$, and $p'_L>c$. Thus, by \eqref{equ:payoffL} and since $I_F$ and $I_L$ are the same as before, the payoff would be higher after the deviation. Thus, $p^*_L>c$ cannot be an equilibrium strategy. Therefore, $p^*_L\leq c$.  

Now, consider the strategy of SP$_F$, i.e. $p^*_F=p^*_L-\frac{I_L-I_F}{I_L}$. Note that $p^*_F-c\leq 0$. Consider the deviation by SP$_F$ such that $p'_F=p^*_F+\epsilon$ for $\epsilon>0$ infinitesimal small. In this case, $0<x_n<1$. Using \eqref{equ:payoffF} and since  $p^*_F-c\leq 0$ and $I_F$ and $I_L$ are the same as before, the payoff would be higher after the deviation. Thus, $p^*_F$ cannot be an equilibrium strategy. The theorem follows.
\end{proof}

Now, we prove the case for $n_F=0$:

\begin{theorem}\label{theorem:n_F=0}
There is no NE strategy for Stage 2 of the game by which $n_F=0$.
\end{theorem}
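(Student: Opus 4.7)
The plan is to mirror the argument given for Theorem~\ref{theorem:n_L=0}, swapping the roles of SP$_L$ and SP$_F$ throughout. Suppose for contradiction that $(p^*_L,p^*_F)$ is an NE inducing $n_F=0$, equivalently $x_n\geq 1$, which by \eqref{equ:xn} translates into $p^*_F - p^*_L \geq \tfrac{I_F}{I_L}$. In this corner regime the payoffs reduce to $\pi_F = -sI_F^2$ and $\pi_L = (p_L-c)+sI_F^2-\gamma I_L^2$. Since $\pi_L$ is strictly increasing in $p_L$ while we remain in the regime, SP$_L$ would otherwise profitably nudge $p_L$ upward; hence the constraint must bind at equilibrium, i.e. $p^*_L = p^*_F - \tfrac{I_F}{I_L}$.

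Second, I would show $p^*_F \leq c$. If instead $p^*_F > c$, consider SP$_F$ deviating to $p'_F = p^*_F - \delta$ for small $\delta>0$. Using the boundary identity from the previous step, the new indifference location is $x_n = 1-\delta$, so SP$_F$ captures mass $n_F = \delta$ of EUs at a price still above $c$. The deviated payoff is $\delta(p^*_F-\delta-c) - sI_F^2 > -sI_F^2$ for sufficiently small $\delta$, contradicting equilibrium. Hence $p^*_F \leq c$, and combining with step one gives $p^*_L = p^*_F - \tfrac{I_F}{I_L} \leq c - \tfrac{I_F}{I_L}\leq c$.

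Third, I would exhibit a profitable deviation for SP$_L$. Consider $p'_L = p^*_L + \epsilon$ for small $\epsilon>0$. The new indifference location becomes $x_n = 1-\epsilon$, so $n_L = 1-\epsilon$ and we land in the interior regime. The change in $\pi_L$ is
\[
(1-\epsilon)(p^*_L+\epsilon-c) - (p^*_L-c) \;=\; \epsilon\bigl[(c-p^*_L) + (1-\epsilon)\bigr],
\]
which is strictly positive for small $\epsilon$ because $c-p^*_L\geq 0$. This contradicts the assumption that $(p^*_L,p^*_F)$ is an equilibrium, so no corner NE with $n_F=0$ exists.

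The main obstacle is the third step: at first glance SP$_L$ appears trapped, holding the entire market at a price at or below marginal cost, with no obvious escape other than pricing herself out entirely. The subtlety is that a vanishingly small price increase $\epsilon$ applied to nearly the whole market produces a first-order gain of $\epsilon$ in revenue, while the loss from shedding a mass $\epsilon$ of EUs is second-order ($\epsilon(p^*_L-c)$ with $p^*_L-c\leq 0$ even contributes positively rather than negatively). Getting the signs right in the expansion is the only delicate part; the remainder of the proof is a routine mirror of Theorem~\ref{theorem:n_L=0}.
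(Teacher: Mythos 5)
Your proposal is correct and follows essentially the same route as the paper's own proof: establish that the boundary condition $p^*_L = p^*_F - I_F/I_L$ must bind, show $p^*_F \leq c$ via a small downward deviation by SP$_F$, and then exhibit the profitable upward deviation $p'_L = p^*_L + \epsilon$ for SP$_L$. Your explicit expansion $\epsilon\bigl[(c-p^*_L)+(1-\epsilon)\bigr]$ in the third step just makes precise the sign argument the paper states more tersely.
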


\begin{proof}
In this case, $x_n\geq 1$. Thus, using \eqref{equ:payoffF}, \eqref{equ:payoffL}, \eqref{equ:EUs_stage4}, the payoffs of SPs would be:
\begin{equation}
\begin{aligned}
\pi_F&=-sI^k_F\\
\pi_L&=p_L-c+sI^k_F-\gamma I^k_L
\end{aligned}
\end{equation}

Note that the equilibrium strategies $p^*_L$ and $p^*_F$ should be such that $x_n\geq 1$. Thus, by \eqref{equ:xn}, $p^*_L\leq p^*_F-\frac{I_F}{I_L}$. Therefore, since the payoff of SP$_L$ is a strictly increasing function of $p_L$, $p^*_L= p^*_F-\frac{I_F}{I_L}$. We first argue that $p^*_F\leq c$. Then, we prove that there exists no equilibrium strategy in this case. 

We start by proving that  $p^*_F \leq c$. Suppose not and $p^*_F>c$. Then, a deviation by SP$_F$ such that $p'_F=p^*_F-\epsilon$ for $\epsilon>0$ infinitesimal small yields that $0<x_n<1$. In this case, $n_F>0$, and $p'_L>c$. Thus, by \eqref{equ:payoffF} and since $I_F$ and $I_L$ are the same as before, the payoff would be higher after the deviation. Thus, $p^*_F>c$ cannot be an equilibrium strategy. Therefore, $p^*_F\leq c$.  

Now, consider the strategy of SP$_L$, i.e. $p^*_L=p^*_F-\frac{I_F}{I_L}$. Note that $p^*_L-c\leq 0$. Consider the deviation by SP$_L$ such that $p'_L=p^*_L+\epsilon$ for $\epsilon>0$ infinitesimal small. In this case, $0<x_n<1$. Using \eqref{equ:payoffL} and since  $p^*_L-c\leq 0$ and $I_F$ and $I_L$ are the same as before, the payoff would be higher after the deviation. Thus, $p^*_L$ cannot be an equilibrium strategy. The theorem follows.
\end{proof}

Theorems~\ref{theorem:n_L=0} and \ref{theorem:n_F=0} yields the result in Theorem~\ref{theorem:NoCorner}.

\end{document}